\crefname{algocf}{alg.}{algs.}
\Crefname{algocf}{Alg.}{Algs.}
\Crefname{equation}{Eq.}{Eqs.}
\Crefname{figure}{Fig.}{Figs.}
\Crefname{table}{Tab.}{Tabs.}
\Crefname{appendix}{Appx.}{Appxs.}
\Crefname{theorem}{Thm.}{Thms.}
\setlist[itemize]{noitemsep}
\setlist[enumerate]{noitemsep}
\setlist[itemize]{leftmargin=*}
\setlist[enumerate]{leftmargin=*}
\setlist{itemsep=0pt,parsep=0pt} 
\newlength{\gapspace}
\definecolor{PrpCoral}{HTML}{FF7F50}
\definecolor{AlexBlue}{HTML}{0048BA}
\definecolor{BurcuPurple}{HTML}{330066}
\newcommand*\myc[1]{%
\scalebox{0.78}{\begin{tikzpicture}[baseline=-4pt]
  \node[draw,circle,inner sep=0.5pt, fill=black] {\textcolor{white}{\textsf{\textbf{#1}}}};
\end{tikzpicture}}}
\newcommand{\tinyskip}{\vspace{1pt}}
\newcommand{\mypar}[1]{\tinyskip\noindent\textbf{#1.}\xspace}
\newcommand{\myparr}[1]{\tinyskip\noindent\textbf{#1}\xspace}
\newcommand{\eg}{e.g.,\xspace}
\newcommand{\ie}{i.e.,\xspace}
\newcommand{\etal}{et al.\xspace}
\newcommand{\unit}[1]{\mbox{\hspace{2pt}#1}\xspace}
\newcommand{\sys}{IA-CCF\xspace}
\newcommand{\sysfull}{Individual Accountability for CCF\xspace}
\newcommand{\LPBFT}{L-PBFT\xspace}
\newcommand{\LPBFTfull}{Ledger Practical Byzantine Fault Tolerance\xspace}
\newcommand{\msg}[1]{\textsf{#1}\xspace}
\definecolor{darkcyan}{rgb}{0.0, 0.55, 0.55}
\newtheoremstyle{theorem-small}
{3pt}
{3pt}
{\itshape}
{}
{\bfseries}
{.}
{.5em}
{}
\theoremstyle{theorem-small}
\newtheorem{theorem-small}{Theorem}[section]
\newtheorem{lemma}{Lemma}
\newtheorem{theorem}{Theorem}
\newenvironment{proof-sketch}{\noindent\textit{Proof (sketch).}}{\hfill}
\newcommand{\setmuskip}[2]{#1=#2\relax}
\setmuskip{\medmuskip}{0mu} 
\renewcommand\AB@affilsepx{, \protect\Affilfont}
\begin{document}

\date{}

\title{\Large \bf \sys: Individual Accountability for Permissioned Ledgers}

\author[1,2]{Alex Shamis}
\author[1,2]{Peter Pietzuch}
\author[3]{Burcu Canakci\thanks{Work done while at Microsoft Research.}\ \ }
\author[1]{Miguel Castro}
\author[1]{C\'edric Fournet}
\author[1]{Edward Ashton}
\author[1]{Amaury Chamayou}
\author[1]{Sylvan Clebsch}
\author[1]{Antoine Delignat-Lavaud}
\author[4]{Matthew Kerner}
\author[1]{Julien Maffre}
\author[1]{Olga Vrousgou}
\author[1]{Christoph M. Wintersteiger}
\author[1]{Manuel Costa}
\author[4]{Mark Russinovich}
\affil[1]{Microsoft Research}
\affil[2]{Imperial College London}
\affil[3]{Cornell University}
\affil[4]{Microsoft Azure}

\maketitle


\begin{abstract}
  Permissioned ledger systems allow a consortium of members that do not trust one another to execute transactions safely on a set of replicas. Such systems typically use Byzantine fault tolerance~(BFT) protocols to distribute trust, which only ensures safety when fewer than 1/3 of the replicas misbehave. Providing guarantees beyond this threshold is a challenge: current systems assume that the ledger is corrupt and fail to identify misbehaving replicas or hold the members that operate them accountable---instead all members share the blame.

  We describe \sys, a new permissioned ledger system that provides \emph{individual accountability}. It can assign blame to the individual members that operate
  misbehaving replicas regardless of the number of misbehaving replicas or members.
  \sys achieves this by signing and logging BFT protocol messages in the ledger, and by using Merkle trees to provide clients with succinct, universally-verifiable \emph{receipts} as evidence of successful transaction execution. Anyone can \emph{audit} the ledger against a set of receipts to discover inconsistencies and identify replicas that signed contradictory statements. \sys also supports \emph{changes} to consortium membership and replicas by tracking signing keys using a sub-ledger of governance transactions. \sys provides strong disincentives to misbehavior with low overhead: it executes 47,000\unit{tx/s} while providing clients with receipts in two network round trips.
\end{abstract}



\section{Introduction}
\label{intro}

Permissioned ledger systems, such as Hyperledger Fabric~\cite{hyperledger_fabric}, Quorum~\cite{Quorum} and Diem~\cite{libra}, allow a consortium of members that do not trust one another to deploy a trustworthy service on a set of replicas that they operate. These systems typically use protocols for Byzantine fault tolerant~(BFT) state machine replication~\cite{pbft, sbft, hotstuff, bessani2014state, kotla2007zyzzyva, aardvark} to distribute trust: clients send requests to execute transactions~\cite{smart_contracts, wood2014ethereum} that are executed in a consistent order by the replicas. The results are recorded in a persistent, replicated ledger. 

BFT protocols ensure safety (linearizability~\cite{linearizability}) and liveness, but they can only do this if fewer than $1/3$ of $N$~replicas misbehave. With more misbehaving replicas, current permissioned ledger systems can no longer be trusted. When safety violations are detected, 
the whole service is deemed to have failed, and all members and replicas share the blame. 

Current systems try to avoid this problem by increasing replication~\cite{byzcoin, sbft, hotstuff}
or hardening individual replicas~\cite{russinovich2019ccf}. Adding replicas does not help if they are controlled by the same consortium members and thus do not behave independently. Increasing the number of consortium members, however, is challenging or even infeasible in practice. For example, the Diem Association~\cite{diem_association} had 26~members, which prevented it from offering a service with more than 26~independent replicas; other consortia are smaller, which results in fewer independent replicas~\cite{evidence1, evidence2, evidence3}. Even for large consortia with reputable companies, a persistent attacker may slowly compromise $N/3$~replicas over time, \eg by exploiting lax security practices, bribing members' employees or exploiting software vulnerabilities. Without accountability after a service compromise, there is also no perceived reputational loss that would incentivize members to prevent or disclose these incidents~\cite{evidence4, evidence5, evidence6}.

The Confidential Consortium Framework (CCF)~\cite{russinovich2019ccf} uses trusted hardware~\cite{costan2016intel, kaplan2016amd} to isolate replicas from operators and members, and it provides receipts that commit transaction execution to its ledger. However, CCF does not offer safety or individual accountability if the trusted hardware is compromised.

Prior work explores accountability for various types of distributed systems~\cite{yumerefendi2005role, peerreview, avm, bar, leibowitz2019no}. PeerReview~\cite{peerreview} makes general message passing systems accountable. As we show in \S\ref{sec:eval}, applying such a general approach to a permissioned ledger system incurs high overhead: all messages must be signed, and auditing is expensive, because it correlates logs across many replicas. More recent work~\cite{kannan2020bft, polygraph, ranchal2020blockchain, buchman2016tendermint} investigates accountability in BFT protocols and blockchains. These proposals, however, offer no guarantees when $2/3$ or more replicas misbehave, because misbehaving replicas may rewrite the ledger history without detection.

We describe \emph{\sysfull}~(\sys), a BFT permissioned ledger system that identifies misbehaving replicas and assigns blame to the 
individual members that operate them, \emph{even if all replicas misbehave}. Individual accountability provides strong disincentives for misbehavior. 

\sys is a prototype that extends CCF~\cite{russinovich2019ccf} with support for BFT and individual accountability, while retaining the same user programming model, key-value store, transaction execution engine, and model of governance for changes to the consortium membership and replica set.

\sys supports individual accountability by introducing \emph{Ledger PBFT}~(\LPBFT), a new BFT state machine replication protocol that stores ordered transactions in the ledger together with the protocol messages from replicas that justify the execution order. \LPBFT maintains Merkle trees~\cite{merkle-tree} over the ledger, and includes the roots of the trees in protocol messages. Since protocol messages are signed by the replicas, this commits them to the entire contents of the ledger.

\sys then issues \emph{receipts} to clients that provide succinct, universally-verifiable evidence that a transaction executed at a given position in the ledger. Receipts include signed protocol messages from multiple replicas that executed the transaction, thus binding them to a prefix of the ledger. 

Given a collection of receipts that violates linearizability, anyone can audit the ledger against the receipts to assign blame to at least $N/3$ replicas. Auditing produces an irrefutable \emph{universal proof-of-misbehavior}~(uPoM) in the form of contradictory statements signed by the same replica. The uPoM can be used by an \emph{enforcer}, \eg a court, to punish the members responsible for the misbehaving replicas. To provide accountability when all replicas misbehave, the enforcer may have to compel members to produce a ledger, imposing sanctions otherwise. While this formally adds a weak synchrony assumption, the enforcer chooses a conservative timeout to make blaming correct members unlikely in practice.

As an example of auditing, a client Alice may have a receipt for a transaction that executed at index~$i$ in the ledger and deposited \$1 million into client Bob's account. If Bob obtains the receipt from Alice and another receipt for a balance query transaction executed at index~$j$ ($j>i$) that does not show the balance, he may conduct an audit: he engages an enforcer to obtain the relevant ledger fragment, and replays the transactions between~$i$ and~$j$ to check for consistency with his receipts. If Bob is right, auditing produces a uPoM for at least $N/3$~replicas, which Bob sends to the enforcer to punish the consortium members responsible for the replicas.

To support changes to the consortium membership, \sys uses \emph{governance transactions} that alter the set of replicas and consortium members~\cite{russinovich2019ccf}. Governance transactions complicate receipt verification and auditing because they change the signing keys that must be considered. \sys therefore records governance transactions in the ledger, which allows clients, replicas, and auditors to determine the set of valid signing keys. Clients do not need to keep the full ledger, but only receipts of governance transactions. Since governance transactions are relatively rare, this \emph{governance sub-ledger} is significantly smaller than the full ledger.

Our \sys prototype provides individual accountability without compromising on throughput or latency: it implements a commitment scheme for transaction batches with only a single signature per replica. This enables clients to receive results with receipts after only two network round-trips. Our evaluation shows that \sys can execute over 47,000\unit{tx/s} with low latency.

The contributions of \sys and the paper structure are:

\begin{enumerate}[topsep=0pt]

\item \LPBFT, a BFT state machine replication protocol that orders and stores transactions together with the protocol messages justifying the execution order in a ledger~(\S\ref{subsec:lpbft-protocol}, \S\ref{subsec:lpbft-view-changes});

\item universally-verifiable client receipts that are generated efficiently with the ledger~(\S\ref{subsec:receipts});

\item an efficient auditing approach using the ledger and associated checkpoints, which produces short proofs-of-misbehavior~(\S\ref{sec:auditing}); and

\item a governance mechanism for changing members and replica sets, allowing auditing to assign blame even after members have left~(\S\ref{sec:governance}).

\end{enumerate}


\section{Overview of  \sys}

\Cref{fig:ledger-overview} shows \sys{}'s design. An \sys deployment provides a \emph{service}, with a well known name, to \emph{clients}, which are identified by their signing keys. Clients send requests to execute \emph{transactions} by calling stored procedures that define the service logic. Transactions are executed by \emph{replicas} against a strictly-serializable \emph{key-value store} that supports roll-back at transaction granularity. A transaction \emph{request}~$t$ reads and/or writes multiple key-value pairs and produces a transaction \emph{result}~$o$. 

Consortium \emph{members}, also identified by their signing keys, own the service. They may be added or removed over the service lifetime. For this, members issue \emph{governance transactions}, which change the consortium membership, add or remove replicas, and update stored procedures. The first governance transaction, the  \emph{genesis transaction}~$\mathit{gt}$, defines the initial members and replicas. Its hash is the service name.

\begin{figure}[tb]
  \centering
  \includegraphics[width=\columnwidth]{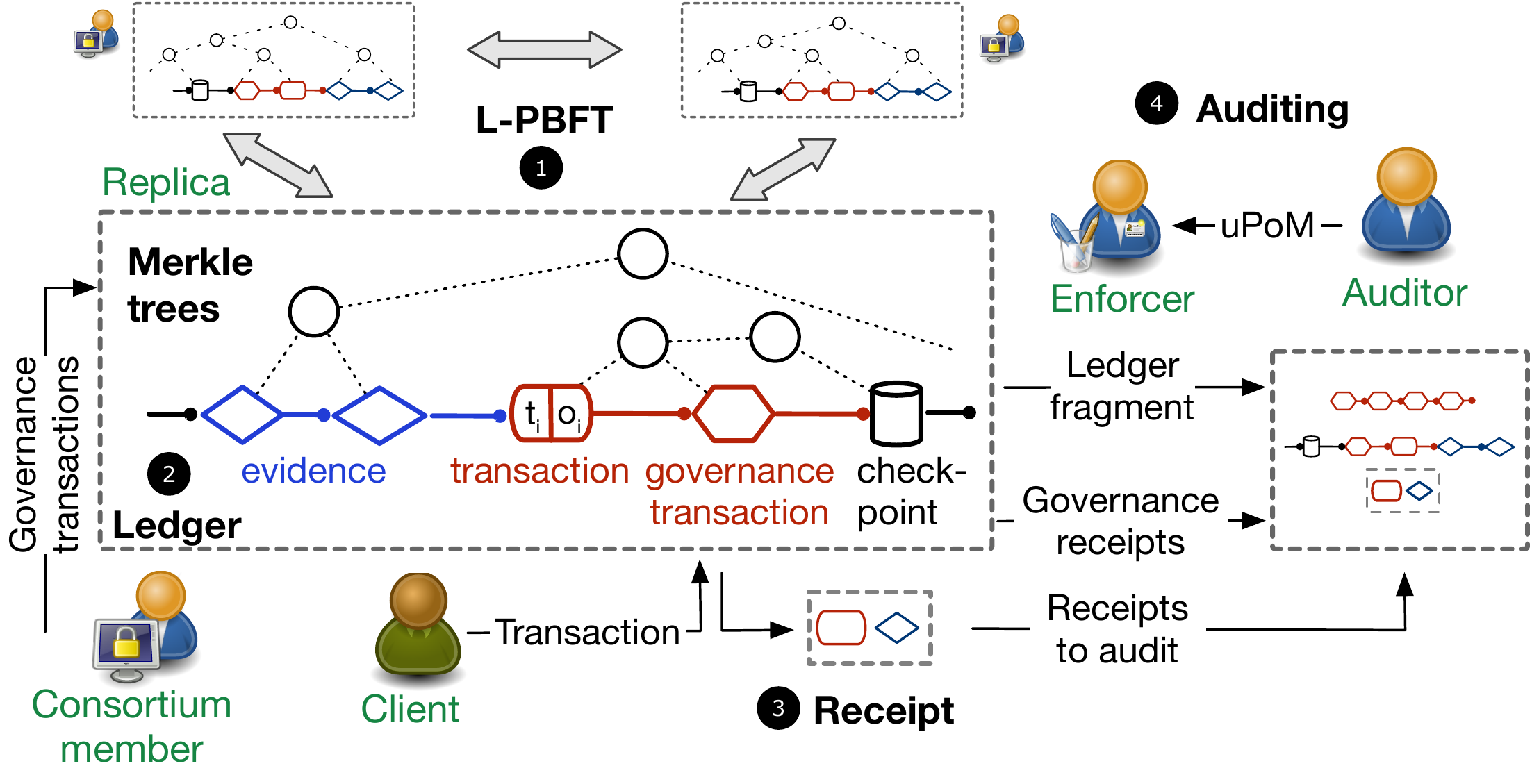}
  \caption{\sys permissioned ledger system}\label{fig:ledger-overview}
\end{figure}

\myparr{\myc{1}~Ledger PBFT (\LPBFT)} is a BFT state machine replication protocol used by replicas to order transactions. \LPBFT is based on PBFT~\cite{pbft}. It provides linearizability and liveness if at most $f = \lceil N / 3 \rceil - 1$ out of $N$~replicas fail in a partially-synchronous environment~\cite{partial-synchrony}.

\mypar{\myc{2}~Ledger} \LPBFT maintains an append-only \emph{ledger}, which stores each transaction request~$t$ and result~$o$ at a ledger index~$i$. Since the consortium membership and the replica set are dynamic, the ledger also records governance transactions. They form a \emph{governance sub-ledger}, which can be used to learn the public signing keys of active replicas and members at any index~$i$.

To assign blame, the ledger also includes \emph{evidence} that a transaction batch was committed by a quorum of replicas. This evidence consists of at least $N$$-$$f$~signed \LPBFT protocol messages for a batch. Finally, the ledger stores periodic checkpoints of the key-value store, allowing its state to be reconstructed by replaying the ledger from a checkpoint~$\mathit{cp}$.

All entries in the ledger are bound by Merkle trees. Protocol messages for a transaction batch contain the roots of the Merkle trees. This commits replicas to the whole ledger while allowing succinct existence proofs for entries.

\myparr{\myc{3}~Receipts} are created by replicas and returned to clients. They bind request execution to members via the replicas' signatures over Merkle tree roots that contains the executed request and the ledger's history. If two or more receipts are inconsistent with any linearizable execution, at least $f$$+$$1$~replicas must have signed contradictory statements and can thus be assigned blame.

More precisely, a receipt~$R$ for $\langle t, i, o \rangle$ states that request~$t$ was executed at index~$i$ and produced result~$o$. The receipt consists of $N$$-$$f$ protocol messages for $t$'s batch, signed by different replicas, and a path from a Merkle tree root to the leaf that contains an entry for $\langle t, i,o \rangle$.

Clients may obtain receipts from a reply to a request they sent, from replicas, or from other clients.
To validate a receipt, clients must check its signatures using the signing keys determined by the governance sub-ledger. A receipt therefore includes the ledger index of the last governance transaction, and clients must obtain the receipt of this governance transaction and all those preceding it. Clients cache governance transaction receipts and fetch missing ones from replicas.

\myparr{\myc{4}~Auditing} returns a \emph{universal proof-of-mis\-be\-havior}~(uPoM) if clients obtain receipts that are inconsistent with a linearizable execution. \sys{}'s ledger is universally-verifiable, \ie anyone can act as an \emph{auditor}: they replay the ledger, check consistency with receipts, and potentially generate a uPoM.

Since all consortium members and replicas may misbehave, an \emph{enforcer}, \eg a court, must compel members to produce a ledger copy for auditing, sanctioning non-compliance. The enforcer also punishes members based on uPoMs. It is unreasonable to assume that courts could run the service or audit long executions. Therefore, \sys only requires enforcers to re-execute transactions between two consecutive checkpoints to verify a uPoM in the worst case.

After a client passes a sequence of receipts and the governance sub-ledger to the auditor, the auditor confirms the receipts' validity by calculating a Merkle tree root and verifying the replica signatures. It then asks the enforcer to obtain the ledger fragment corresponding to the receipts from the replicas. The auditor checks the validity of the checkpoint~$\mathit{cp}$ referenced by the oldest receipt. It then replays the ledger from~$\mathit{cp}$, re-executing transaction requests while checking for consistency with receipts (including governance transaction receipts). If an inconsistency is found at index~$i$, the auditor creates a uPoM~$\langle i, \mathcal{F}, \mathit{cp}, R \rangle$ with a ledger fragment~$\mathcal{F}$, the checkpoint~$\mathit{cp}$, and the inconsistent receipt~$R$. The uPoM is then forwarded to the enforcer, which imposes penalties on the consortium members blamed.

\mypar{Threat model, and limitations} We assume a strong attacker that can compromise replicas, clients, auditors, and members to make them behave arbitrarily, but cannot break the cryptographic primitives. We trust the enforcer to assign blame to replicas and the members that operate them only when it verifies a valid uPoM or fails to obtain data for auditing. \sys provides linearizability and liveness if fewer than $1/3$ of the replicas are compromised~\cite{pbft}. With any number of compromised replicas, clients, auditors, and members, \sys never punishes members that operate only correct replicas unless they fail to provide data for auditing. 
In addition, \sys guarantees that at least $1/3$ of the replicas are blamed, and the members that operate them punished, if clients obtain receipts that are inconsistent with a linearizable execution. The current implementation does not 
prevent attacks that overwhelm the ledger with transactions to slow down auditing or replaying the governance sub-ledger.
It also does not blame replicas for liveness violations, e.g., not returning receipts. Possible defences include: having the enforcer timestamp the genesis transaction and bounding the rate of regular and governance transactions; and forwarding requests to the enforcer and having it monitor protocol execution to assign blame to replicas when receipts are not returned before a deadline. We leave the details of these defences for future work.


\section{\LPBFT protocol and receipts}
\label{sec:receipts}

Next, we describe how \LPBFT maintains a ledger with transactions and evidence~(\S\ref{subsec:lpbft-protocol}), and how it handles view changes~(\S\ref{subsec:lpbft-view-changes}). We then explain how evidence is used to create receipts~(\S\ref{sec:receipts:receipts}) and introduce performance optimizations~(\S\ref{sec:receipts:optimisations}). For ease of presentation, we first assume a fixed replica set; we add dynamic membership in \S\ref{sec:governance}.

\subsection{Protocol}
\label{subsec:lpbft-protocol}

\begin{algorithm}[tb]
  \footnotesize
  \caption{\LPBFTfull}
  \label{alg:lpbft}
  \SetAlgoNoEnd
  \DontPrintSemicolon 
  \SetNoFillComment
  \SetInd{0.4em}{0.4em}

  \KwOn({\FReceiveRequest{\scriptsize $t = \langle \mathsf{request}, a, c, H(\mathit{gt}), m_i \rangle_{\sigma_c}$}})
  {
    \label{alg:lpbft:receiveRequest}
    \scriptsize\textsf{
    \KwPre{\textsf{verify}($t$)}
      ${\mathcal T} \gets {\mathcal T} \; \cup \; \{t\}$ \;
  }}
  \KwOn({\FSendPrePrepare{}})
  {
    \label{alg:lpbft:sendPrePrepare}
    \scriptsize\textsf{
      \KwPre{\FIsPrimary{} $\wedge \; \mathsf{ready} \; \wedge \; |{\mathcal T}| > 0 \; \wedge $ \FHasEvidence{${\mathcal M}, v, s-P$}} 
      ${\mathcal B} \gets \left[ \right]; \; G \gets \{\}$ \;
      \ForEach{$t \in \mathcal T$}
      {
        ${\mathcal B} \gets {\mathcal B} \; || \; H(t) \;$; $\; \langle i,o \rangle \gets$ \FExecute{$kv, t$}; $\; G \gets G || \langle t,i,o \rangle$
      } 
      $\langle E_{s-P}, {\mathcal P}_{s-P}, {\mathcal K}_{s-P} \rangle \gets$ \FGetEvidence{${\mathcal M}, v, s-P$}\;
      ${\mathcal L} \gets {\mathcal L} \; || \; {\mathcal P}_{s-P} \; || \; {\mathcal K}_{s-P}; \; M \gets M || {\mathcal P}_{s-P} \; || \; {\mathcal K}_{s-P}$ \;
      ${\mathcal K}[v, s] \gets $\FCreateNonce{};$ \; \bar{M} \gets$ \FGetMerkleRoot{$M$}; $\bar{G} \gets$ \FGetMerkleRoot{$G$}\;
      $pp = \langle \mathsf{pre\mbox{-}prepare}, v, s, \bar{M}, \bar{G}, H({\mathcal K}[v,s]), E_{s-P} \rangle_{\sigma_{r}}$\;
      ${\mathcal L} \gets {\mathcal L} \; || \; pp \; || \; G; \; M \gets M \; || \; pp; \; {\mathcal M} \gets {\mathcal M} \; \cup \; \{pp\}$; $\; \; {\mathcal T} \gets \{\}$;   $\; \; s \gets s +1$\;
      \FSendToAllReplicas{$pp \; || \; {\mathcal B}$} \;
    }
  }
  \KwOn({\FReceivePrePrepare{\scriptsize $pp = \langle \mathsf{pre\mbox{-}prepare}, v, s', \bar{M}, \bar{G}, H(k), E_{s'-P} \rangle_{\sigma_r}, \mathcal B$}})
  {
    \label{alg:lpbft:receivePrePrepare}
    \scriptsize\textsf{
      \KwPre{\FIsBackup{} $\wedge$ \FVerify{$pp$} $\wedge \; \mathsf{ready} \; \wedge \; s'=s \; \wedge \; {\mathcal K}[v,s] = \mathsf{nil} \; \wedge$ \textsf{hasRequests}($\mathcal T, \mathcal B$)  $\wedge$ \FHasEvidence{${\mathcal M}, s'-P, E_{s'-P}$}}
      ${\mathcal M} \gets {\mathcal M} \; \cup \; \{pp\}; \; G \gets \{\}$ \;
      \ForEach{$h \in \mathcal B$}
      {
        $t \gets$ \textsf{removeTx}($h, \mathcal T$); $\; \langle i,o \rangle \gets$ \FExecute{$kv, t$};$ G \gets G || \langle t,i,o \rangle$ \; \label{alg:lpbft:receivePrePrepare:execute}
      }
      $\langle E_{s-P}, {\mathcal P}_{s-P}, {\mathcal K}_{s-P} \rangle \gets$ \FGetEvidence{${\mathcal M}, v, s-P, E_{s-P}$}\;
      ${\mathcal L} \gets {\mathcal L} \; || \; {\mathcal P}_{s-P} \; || \; {\mathcal K}_{s-P}; \; M \gets M || {\mathcal P}_{s-P} \; || \; {\mathcal K}_{s-P}; \;$ \; 
      \If{\FGetMerkleRoot{$M$} $\neq \bar{M}$ \KwOr \FGetMerkleRoot{$G$} $\neq \bar{G}$}
      {
        \label{alg:lpbft:receivePrePrepare:non-det}
        \textsf{undo}($pp, kv, \mathcal{M}, \mathcal{B}, \mathcal{T}, \mathcal{L}$); \KwRet{} \;
      }
      ${\mathcal L} \gets {\mathcal L} \; || \; pp \; || G; \; M \gets M \; || \; pp; \; {\mathcal K}[v, s] \gets $\FCreateNonce{}\; 
      $p = \langle \mathsf{prepare}, r, H({\mathcal K}[v, s]), H(pp) \rangle_{\sigma_{r}}$ \;
      \FSendToAllReplicas{$p$};  ${\mathcal M} \gets {\mathcal M} \; \cup \; \{p\} ; \; s \gets s+1$
  }}
  \KwOn({\FReceivePrepare{\scriptsize $p = \langle \mathsf{prepare}, r', H(k_{r'}), H(pp) \rangle_{\sigma_{r'}}$}})
  {
    \scriptsize\textsf{
      \KwPre{\FVerify{$p$}}
      ${\mathcal M} \gets {\mathcal M} \; \cup \; \{p\}$ \;
  }}
  \KwOn({\FBatchPrepared{\scriptsize $pp = \langle \mathsf{pre\mbox{-}prepare}, v, s', \bar{M}, \bar{G}, H(k_p), E_{s'-P} \rangle_{\sigma_p}$}})
  {
    \label{alg:lpbft:batchPrepared}
    \scriptsize\textsf{
      \KwPre{$ \mathsf{prepared}(pp, {\mathcal M}) \; \wedge \; \exists \langle \mathsf{prepare}, r', H({\mathcal K}[v,s']), H(pp) \rangle_{\sigma_{r'}} \in {\mathcal M}$}
      $c = \langle \mathsf{commit}, v, s', r, {\mathcal K}[v,s'] \rangle$ \;
      \FSendToAllReplicas{$c$}; $\; {\mathcal M} \gets {\mathcal M} \; \cup \; \{c\}$ \;
      \ForEach{$\langle t,i,o\rangle \in $\FGetTxForBatch{${\mathcal L}, v, s'$}}{
        \FSendReplyToClient{$t, \langle \mathsf{reply}, v, s', r, {\sigma_{r}}, {\mathcal K}[v,s'] \rangle$} \;
        \If{\FShouldSendReceipt{$r, t$}}
        {
          \label{alg:lpbft:batchPrepared:should-send-receipt}
          ${\mathcal S} \gets $\FGetMerklePath{$G, i$} \;
          \FSendReceiptToClient{$t, \langle \mathsf{replyx}, v, s', \bar{M}, H(k_p), E_{s'-P}, H(t), i, o, {\mathcal S} \rangle$} \;
        }
      }
    }
  }
  \KwOn({\FReceiveCommit{\scriptsize $c = \langle \mathsf{commit}, v, s', r', k_r\rangle$}})
  {
    \scriptsize\textsf{
      \KwPre{\FVerify{$c$}}
      ${\mathcal M} \gets {\mathcal M} \; \cup \; \{c\}$ \;
  }}
\end{algorithm} 

To support auditing, a BFT state machine replication protocol, such as PBFT~\cite{pbft}, must integrate with a ledger: it must ensure that replicas agree on a ledger with both transactions (requests and results) and protocol messages. It must also handle non-determinism to enable replaying the ledger. \LPBFT addresses this issue by agreeing on non-deterministic inputs~\cite{castro2002practical} and using \emph{early execution}: it requires the primary replica to propose a transaction result, which the backup replicas must agree on for the batch to commit. 
\LPBFT then maintains Merkle trees over all ledger entries and puts the trees' roots in protocol message, which ensures that all replicas agree on a serial history of the ledger.

\Cref{fig:ccf-pbft} gives an overview of \LPBFT with early execution: first clients send transaction requests to all replicas. The primary orders the requests, groups them into \emph{batches} and performs early execution. It then sends a \msg{pre-prepare} message to the backups, which includes the request batch and the execution results. Upon receiving the \msg{pre-prepare}, the backups execute the requests and confirm that the results match the primary's. If so, they send a \msg{prepare} message to all other replicas. After a replica receives a \msg{pre-prepare} and $N$$-$$f$$-$$1$ matching \msg{prepare} messages for the same sequence number~$s$ and view~$v$, the batch is \emph{prepared} at the replica at $v$ with $s$ if all batches with lower sequence numbers have also prepared. A replica then sends a \msg{reply} to the clients and \msg{commit} messages to the other replicas. We say that a batch is \emph{committed} at sequence number~$s$ if it has been prepared by $N$$-$$f$~replicas in the same view. A client has received a complete response when it has a \emph{receipt} consisting of replies from $N$$-$$f$ replicas.

A naive approach would require each replica to sign two protocol messages, \ie the \msg{pre-prepare}/\msg{prepare} and the \msg{commit} message, for each committed batch. Instead, \LPBFT uses a novel \emph{nonce commitment} scheme, in which replicas only sign the \msg{pre-prepare}/\msg{prepare} messages after including a hashed nonce. Instead of signing the \msg{commit}, a replica includes the unhashed nonce. This effectively halves the signatures that replicas emit to commit batches successfully. 

\Cref{alg:lpbft} presents the pseudocode of \LPBFT. The replica state includes: the current view~$v$ and batch sequence number~$s$; a set of transaction requests~${\mathcal{T}}$ waiting to be ordered; a message store~${\mathcal{M}}$; a nonce store~$\mathcal{K}$; a boolean~$\mathsf{ready}$ indicating if the replica can send/accept \msg{pre-prepare} messages; a replica identifier~$\mathsf{r}$; the key-value store~$kv$; the ledger~$\mathcal{L}$; and the Merkle tree~$M$ that binds the ledger entries.

In \textsf{receiveTransactionRequest} (line~\ref{alg:lpbft:receiveRequest}), a replica adds a \msg{request} message to $\mathcal{T}$, where $a$ identifies the invoked stored procedure and its arguments, $c$ is the client identifier, $H(\mathit{gt})$ is the genesis transaction hash, $m_i$ is the minimum index after which the request can be added to the ledger, and $\sigma_c$ is the client signature. $\sigma_c$ and $H(\mathit{gt})$ ensure that requests cannot be forged or moved to a different ledger, and $m_i$ allows clients to create an ordering dependency between the request and a previously executed transaction.

The function~\textsf{sendPrePrepare} (line~\ref{alg:lpbft:sendPrePrepare}) uses early execution to include the execution result in the batch's Merkle tree root. The primary~$p = v\ \mathrm{mod}\ N$ collects a batch of transaction requests, executes them, and appends them to a new Merkle tree~$G$. Then, the primary retrieves the commitment evidence ${\mathcal{P}}_{s-P}$ and ${\mathcal{K}}_{s- P}$ for the batch at $s$$-$$P$ from the message store~$\mathcal{M}$ and appends it to the ledger. $E_{s-P}$ is a bitmap that records the replicas that supplied commitment evidence.

\begin{figure}[tb]
  \centering
  \includegraphics[width=.9\columnwidth]{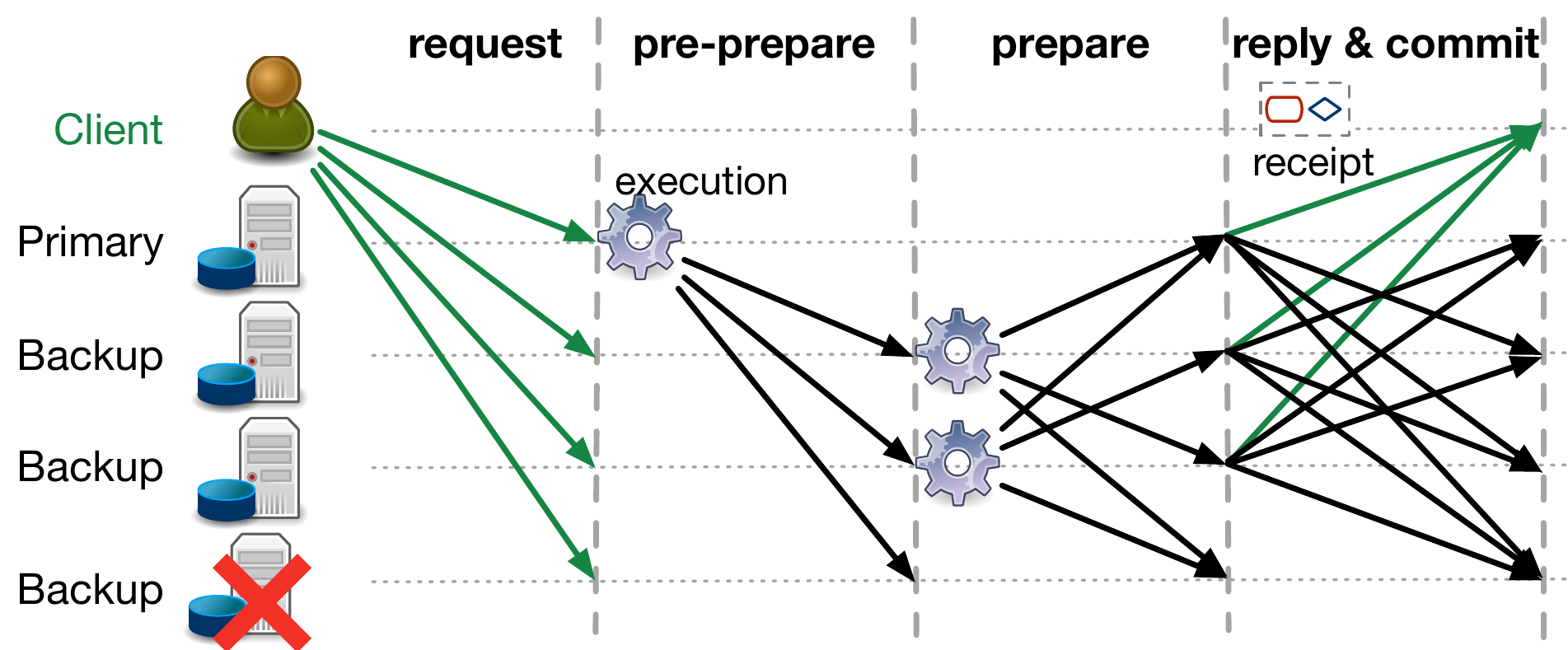}
  \caption{\LPBFT protocol with early execution and receipts}
  \label{fig:ccf-pbft}
\end{figure}

Next, the primary creates the \msg{pre-prepare} message with the hash of a fresh nonce ${\mathcal{K}}[v,s]$, the root of the Merkle trees, $\bar{M}$ and $\bar{G}$, and signs it. $G$ is a Merkle tree that contains all $\langle t, i, o \rangle$ entries in a batch. The complete \msg{pre-prepare} message has two extra fields: $i_g$, the index of the last governance transaction, which allows clients to verify receipts with a changing set of replicas~(see~\S\ref{sec:governance:receipts_auditing}); and $d_C$, a digest of the key-value store state at the last checkpoint, which enables auditing from a checkpoint without replaying the ledger from the start~(see~\S\ref{sec:auditing}).

By signing $\bar{M}$, the primary commits to the contents of the ledger, including the commitment evidence for $s$$-$$P$ that it retrieved and added to the ledger. It is important for the primary to order the evidence to ensure that replicas agree on the ledger: if replicas added their own evidence to the ledger when they received \msg{prepare} and \msg{commit} messages, their ledgers could diverge. The commitment evidence ${\mathcal{P}}_{s-P}$ contains $N$$-$$f$$-$$1$ \msg{prepare} messages for sequence number $s$$-$$P$ and view~$v$ that match the \msg{pre-prepare} at sequence number $s$$-$$P$ in the ledger. ${\mathcal{K}}_{s-P}$ are the $N$$-$$f$ nonces with hashes in the \msg{pre-prepare}/\msg{prepare} messages in ${\mathcal{P}}_{s-P}$. This evidence is sufficient to prove to a third party that the batch at $s$$-$$P$ prepared at $N$$-$$f$ replicas and therefore committed with $s$. The \msg{pre-prepare} message along with the leaves of $G$ are then added to the ledger.

The primary communicates its ordering decision by sending the \msg{pre-prepare} message to all replicas, together with a list~$\mathcal{B}$ of the hashes of transaction requests in execution order. The requests are sent separately by the clients, and the commitment evidence for $s$$-$$P$ is not included in the message. The \msg{pre-prepare} messages are $\mathrm{O}(N)$ in size but the constant is small. Our implementation uses 8\unit{bytes} in the $E_{s-P}$ bitmap to support up to 64~replicas, making the \msg{pre-prepare} messages effectively $\mathrm{O}(1)$. 

\Cref{fig:chained-merkle-trees} gives an example of the ledger state after this step. For each transaction in the batch, the primary adds a ledger entry in the order executed. The entry for $T_i$ has the form $\langle t, i, o \rangle$ where $o$ includes the reply sent to the client and the hash of the transaction's write-set; $pp_{s}$ is the \msg{pre-prepare} for $s$, and ${\mathcal{P}}_{s-P}$ and ${\mathcal{K}}_{s- P}$ are evidence that the batch at sequence number $s$$-$$P$ committed. \LPBFT pipelines the ordering of up to $P \ge 1$ concurrent batches to improve performance. Therefore, the commitment evidence lags $P$ behind $s$, because it is unavailable when the primary sends the \msg{pre-prepare} for~$s$. \Cref{proof:lpbft}, \Cref{lemma:early-execution} shows that \emph{early execution} maintains linearizability.

\begin{figure}[tb]
\centering
\includegraphics[width=.9\columnwidth]{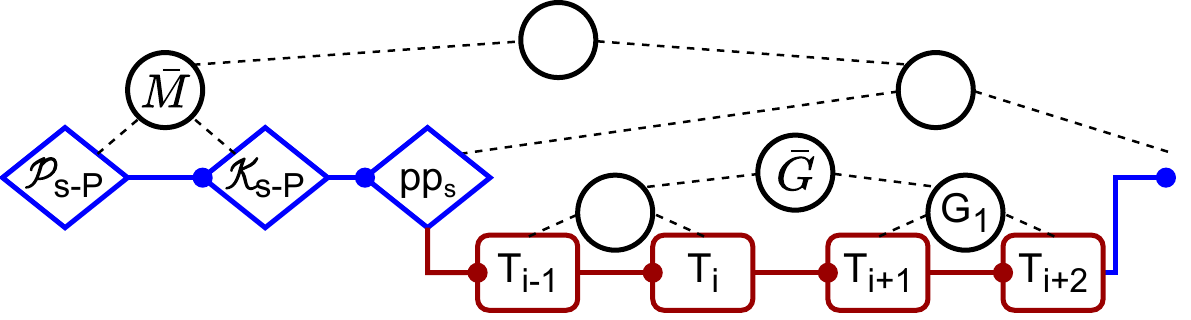}
\caption{Ledger with evidence and Merkle trees}\label{fig:chained-merkle-trees}
\end{figure}

When a backup replica receives the \msg{pre-prepare} (line~\ref{alg:lpbft:receivePrePrepare}), it rejects the message if it already sent a \msg{prepare} for the same view and sequence number (${\mathcal{K}}[v,s] \neq \mathsf{nil}$). Otherwise, it checks if it already has the requests and commitment evidence referenced by the \msg{pre-prepare}. Replicas store received requests, \msg{prepare}, and \msg{commit} messages in non-volatile storage ($\mathcal{M}$) until they receive (or send) a corresponding \msg{pre-prepare}. To reduce network load, the primary does not resend requests or messages used as commitment evidence. If the backup is missing messages, it requests that the primary retransmit them, because a correct primary is guaranteed to have them.

The backup then executes the requests in the order prescribed by the primary, and adds the resulting transaction entries to a new Merkle tree $G$~(line~\ref{alg:lpbft:receivePrePrepare:execute}). Then, it adds the same ${\mathcal{P}}_{s-P}$ and ${\mathcal{K}}_{s-P}$ as the primary to the ledger. At this point, the ledger at the backup should be identical to the one at the primary just before the \msg{pre-prepare} message is added. The backup checks that the roots of its Merkle trees match $\bar{M}$ and $\bar{G}$ in the \msg{pre-prepare}, respectively.  If not, the message is rejected, the entries for batch~$s$ are removed from the ledger, and the transactions are rolled back. Otherwise, the backup adds the \msg{pre-prepare} to the ledger, followed by the leaves of the Merkle tree $G$, and sends a matching \msg{prepare} message with the format $\langle \mathsf{prepare}, \mathsf{r}, H({\mathcal{K}}[v,s]), H(\mathit{pp}) \rangle_{\sigma_{\mathsf{r}}}$, where $H({\mathcal{K}}[v,s])$ commits a fresh nonce, and $H(\mathit{pp})$ is the \msg{pre-prepare}{}'s hash.

\LPBFT ensures deterministic transaction execution by agreeing on non-deterministic inputs~\cite{castro2002practical}. Line~\ref{alg:lpbft:receivePrePrepare:non-det} ensures that a backup's execution of batch~$\mathcal{B}$ and its ledger are identical to those of the primary by comparing the Merkle roots~$\bar{G}$ and $\bar{M}$. If this check fails, the backup rolls back execution and attempts to view change (\S\ref{subsec:lpbft-view-changes}). This way divergent execution due to bugs, i.e., failing to identify non-deterministic inputs, can affect liveness but not diverge the ledger.

In \textsf{batchPrepared} (line~\ref{alg:lpbft:batchPrepared}), the nonce commitment and early execution allow replicas to return replies to clients in two message round trips without signing \msg{reply} or \msg{commit} messages. When the batch prepares at replica~$r$, it sends a \msg{commit} message with the format $\langle \mathsf{commit}, v, s', r, {\mathcal{K}}[v,s']\rangle$ where ${\mathcal{K}}[v,s']$ is the nonce the replica committed to in the \msg{pre-prepare}/\msg{prepare} messages that it sent for $v$ and $s'$. Since the nonce ${\mathcal{K}}[v,s']$ is revealed to clients and replicas only when a replica prepares the batch having a \msg{pre-prepare}/\msg{prepare} message and the corresponding nonce can prove to a third party that the replica prepared the batch at $v$ and $s'$ (see \Cref{proof:lpbft}, \Cref{lemma:nonce-commitment}).

Finally, a replica~$r$ commits a prepared batch $v, s'$ after it receives $N$$-$$f$ \msg{commit} messages, including its own. The nonce hashes in the \msg{commit} messages must match the ones in the \msg{pre-prepare}/\msg{prepare} messages.

We prove that \LPBFT produces a linearizable execution order in \Cref{proof:lpbft}, \Cref{lemma:linearizability}.

\subsection{View changes}
\label{subsec:lpbft-view-changes}

\begin{algorithm}[tb]
  \footnotesize
  \caption{View Changes in \LPBFT}
  \label{alg:lpbft-view-change}
  \SetAlgoNoEnd
  \DontPrintSemicolon 
  \SetNoFillComment
  \SetInd{0.4em}{0.4em}
  \KwOn({\FSendViewChange{}})
  {
    \label{alg:lpbft-view-change:sendViewChange}
    \scriptsize\textsf{
      \KwPre{\textsf{primaryAppearsFaulty}($v$)}
      ${\mathcal{PP}} = \mathsf{getPLastPrepared}(\mathsf{msgs}({\mathcal L}) \; \cup \; {\mathcal M})$ \;
      \label{alg:lpbft-view-change:sendViewChange:getPLastPrepared}
      $v = v+1; \; \mathsf{ready} \gets \mathsf{false}; vc = \langle \mathsf{view\mbox{-}change}, v, r, {\mathcal{PP}} \rangle_{\sigma_{r}}$ \;
      \label{alg:lpbft-view-change:ready-false:a}
      \FSendToAllReplicas{$vc$}; $\; \; {\mathcal M} \gets {\mathcal M} \; \cup \; \{vc\}$ \; 
  }}
  \KwOn({\FReceiveViewChange{\scriptsize $vc = \langle \mathsf{view\mbox{-}change}, v', r', {\mathcal{PP}} \rangle_{\sigma_r}$}})
  {
    \label{alg:lpbft-view-change:receiveViewChange}
    \scriptsize\textsf{
      \KwPre{ $v' >= v \; \wedge \; \mathsf{verify}(vc) \; \wedge $ \FHasPrepares{$\mathsf{msgs}({\mathcal L}) \; \cup \; {\mathcal M}, \mathsf{getLast}({\mathcal{PP}})$}}
      ${\mathcal M} \gets {\mathcal M} \; \cup \; \{vc\}$ \;
      \If {$|\mathsf{getViewChanges}({\mathcal M}, v')| > f \; \wedge \; v' > v$}
      {
      $v = v'-1;\ \mathsf{setPrimaryApearsFaulty()} $ \;
      \FSendViewChange{} \;
      \label{alg:lpbft-view-change:ready-false:b}
      }
   }}
  \KwOn({\FSendNewView{\scriptsize $v$}})
  {
    \label{alg:lpbft-view-change:sendNewView}
    \scriptsize\textsf{
      \KwPre{\FIsPrimary{$v$} $\wedge \; \neg\mathsf{ready}\; \wedge \; |\mathsf{getViewChanges}({\mathcal M}, v)| > N-f$}
      $\langle \bar{M}, E_{vc}, h_{vc}, {\mathcal{PP}}_{ov} \rangle = \mathsf{processViewChanges}(\mathsf{getViewChanges}({\mathcal M}, v))$ \;
      \label{alg:lpbft-view-change:sendNewView:processViewChanges}
      $nv = \langle \mathsf{new\mbox{-}view}, v, \bar{M}, E_{vc}, h_{vc} \rangle_{\sigma_{r}}; \; {\mathcal L} \gets {\mathcal L} \; || \; nv ; \; M \gets M \; || \; nv$\;
      \FSendToAllReplicas{$nv$}  \;
      $\mathsf{resendPreparesInNewView}({\mathcal{PP}}_{ov})$; $\mathsf{ready} \gets \mathsf{true}$ \;
  }}
  \KwOn({\FReceiveNewView{\scriptsize $nv = \langle \mathsf{new\mbox{-}view}, v, \bar{M}, E_{vc}, h_{vc} \rangle_{\sigma_{r'}}, {\mathcal{PP}}_{nv}$}})
  {
    \label{alg:lpbft-view-change:receiveNewView}
    \scriptsize\textsf{
      \KwPre{\FIsPrimary{$r', v$} $\wedge$ \textsf{hasRequests}(${\mathcal T}, {\mathcal{PP}}_{nv}$)  $\wedge$ \FHasEvidence{${\mathcal M}, {\mathcal{PP}}_{nv}$}  $\wedge \; r' \neq r \; \wedge \; \neg\mathsf{ready}\; \wedge \; |\mathsf{getViewChanges}({\mathcal M}, E_{vc}, h_{vc})| > N-f$}
      $\langle \bar{M'}, {\mathcal{PP}}'_{ov} \rangle = \mathsf{processViewChanges}(\mathsf{getViewChanges}({\mathcal M}, E_{vc}, h_{vc}))$ \;
      \If{$\bar{M'} = \bar{M}$}
      {
         ${\mathcal L} \gets {\mathcal L} \; || \; nv; \; M \gets M \; || \; nv$\;
         \lIf{$\mathsf{ready} \gets \mathsf{processPreparesInNewView}({\mathcal{PP}}_{nv},  {\mathcal{PP}}'_{ov})$} {\KwRet{}}
      }
     $\mathsf{undo}(nv, s, {\mathcal M}, {\mathcal L})$ \;
  }}
\end{algorithm} 

During the \LPBFT protocol execution, the primary may misbehave or be slow, which requires a \emph{view change}. The change of the primary must be done in a manner that does not preclude auditing, which is a new requirement that goes beyond PBFT's view change protocol. \LPBFT view changes are auditable and must provide proof that a batch's re-execution produces the same result as the original execution.

\LPBFT addresses this as follows: it sends the evidence that batches prepared during view changes and includes the Merkle tree root $\bar{G}$ of a batch and its execution in the \emph{pre-prepare} message, which ensures that batches are re-executed consistently. During a view change, each replica sends a \msg{view-change} message with information about prepared requests. The primary for a new view~$v'$ sends a \msg{new-view} message backed by $N$$-$$f$~\msg{view-change} messages for $v'$. For each sequence number with a prepared batch in the \msg{view-change} messages, the primary picks the batch that prepared with the largest view and proposes it in $v'$. Since all committed requests have also prepared, this ensures linearizability with batch execution ordered by the sequence numbers at which batches committed.

\Cref{alg:lpbft-view-change} formalizes the pseudocode for view changes. If the primary for view~$v$ appears faulty or slow, a replica sends a \msg{view-change} message, $\langle \mathsf{view\mbox{-}change}, v+1, \mathsf{r}, {\mathcal{PP}} \rangle_{\sigma_{\mathsf{r}}}$, to all other replicas (line~\ref{alg:lpbft-view-change:sendViewChange}), where ${\mathcal{PP}}$ contains the last $P$ \msg{pre-prepare} messages that prepared locally (line~\ref{alg:lpbft-view-change:sendViewChange:getPLastPrepared}). Only the last message in ${\mathcal{PP}}$ is required to provide linearizability, because it includes the Merkle tree roots~$\bar{M}$ and $\bar{G}$ that determine the ledger contents up to that point. The other \msg{pre-prepare} messages are used during auditing to verify that replicas reported the batches they prepared in \msg{view-change}~(\S\ref{sec:auditing}).

When replicas receive a \msg{view-change} message (line~\ref{alg:lpbft-view-change:receiveViewChange}), before processing it, they fetch missing \msg{prepare} messages from the sender to prove that the last \msg{pre-prepare} in ${\mathcal{PP}}$ has prepared. When replicas increment $v$, they set $\mathsf{ready}$ to $\mathsf{false}$ (lines~\ref{alg:lpbft-view-change:ready-false:a}, \ref{alg:lpbft-view-change:ready-false:b}), which ensures that they do not send or accept \msg{pre-prepare} messages until they have completed the \msg{new-view}.

After accepting $N$$-$$f$ \msg{view-change} messages for the new view (line~\ref{alg:lpbft-view-change:sendNewView}), the new primary calls \textsf{processViewChanges}, which picks the \msg{view-change} message~$\mathit{vc}_{\mathit{lp}}$ with the last prepared \msg{pre-prepare} message~$\mathit{pp}_{\mathit{lp}}$ from those with the largest view number. It then updates the ledger to match the Merkle roots in $\mathit{pp}_{\mathit{lp}}$ by fetching missing ledger entries from replicas that sent matching \msg{prepare} messages. Since at least $f$$+$$1$ of those are correct, this is always possible. The primary checks that all messages in ${\mathcal{PP}}$ of $\mathit{vc}_{\mathit{lp}}$ appear at the right ledger positions; if not, it discards $\mathit{vc}_{\mathit{lp}}$ and re-tries (omitted from \Cref{alg:lpbft-view-change}).
 
Next the primary resets the ledger to $s_{\mathit{lp}}$$-$$P$, because the batches up to this point are guaranteed to have committed. It saves all the request batches and commitment evidence for sequence numbers between $s_{\mathit{lp}}$$-$$P$ and $s_{\mathit{lp}}$ and returns it in ${\mathcal{PP}}_{\mathit{ov}}$. This is needed to resend \msg{pre-prepare} messages for the prepared batches in the new view. The function ends by adding an entry with the $N$$-$$f$ \msg{view-change} messages that it accepted to the ledger in order of increasing replica identifier; $h_{\mathit{vc}}$ is the hash of that entry and $E_{\mathit{vc}}$ is a bitmap with the replicas that sent the messages. It returns the root of the Merkle tree~$\bar{M}$, $E_{\mathit{vc}}$, $h_{\mathit{vc}}$, and ${\mathcal{PP}}_{\mathit{ov}}$ (line~\ref{alg:lpbft-view-change:sendNewView:processViewChanges}). The primary appends the \msg{new-view} to the ledger, sends it to all replicas, resends the prepared batches in \msg{pre-prepare} messages in the new view, and adds them to the ledger.

When backups receive the \msg{new-view} (line~\ref{alg:lpbft-view-change:receiveNewView}), they obtain missing \msg{view-change} messages, requests and evidence that it references, and call \textsf{processViewChanges}. If it returns a Merkle tree root equal to the one in \msg{new-view}, they accept the message, add it to the ledger, and process the \msg{pre-prepare} messages ${\mathcal{PP}}_{\mathit{nv}}$. If these match the batches and evidence in ${\mathcal{PP}}'_{\mathit{ov}}$ for the same sequence numbers, they are added to the ledger; otherwise, all changes are undone.

\subsection{Receipts}
\label{sec:receipts:receipts}
\label{subsec:receipts}

\begin{algorithm}[tb]
  \footnotesize
  \caption{Verifying Receipts}
  \label{alg:receipt}
  \SetAlgoNoEnd
  \DontPrintSemicolon 
  \SetNoFillComment
  \SetInd{0.4em}{0.4em}
  \vspace*{-0.25em}
    \KwOn({\FVerifyReceipt{\scriptsize $\langle t,i,o \rangle, \langle v, s, \bar{M}, H(k_p), E_{s-P}, i_{g}, d_C), \sigma_p, E_s, \Sigma_s, {\mathcal{K}}_s, {\mathcal{S}} \rangle $}})
    {
      \scriptsize\textsf{
        $\bar{G'} \gets$ \FPathHash($\langle t,i,o \rangle$)  \;
            \label{alg:receipt:merkle:start}
        \ForEach{$G_{i} \in {\mathcal{S}}$}
        { 
          $\bar{G'} \gets$ \FPathHash($\bar{G'}$, $G_{i}$) \;
            \label{alg:receipt:merkle:end}
        } 
        $pp=\langle \mathsf{pre\mbox{-}prepare}, v, s, \bar{M}, \bar{G'}, H(k_p), E_{s-P}, i_{g}, d_C \rangle$\;
        \lIf {\KwNot \FCheckSignature($\sigma_p$, pp)} 
          { 
            \label{alg:receipt:checkpp}
            \KwRet{false}
          }
        \ForEach{$r \in E_s$}
        {
          \lIf {$r=p \wedge H({\mathcal{K}}_s[p]) \neq H(k_p)$}
          {
             \KwRet{false}
          }
          \lIf {$r \neq p \wedge$ \KwNot \FCheckSignature($\Sigma_s[r]$, $\langle \mathsf{prepare}, r, H({\mathcal{K}}_s[r]), H(pp_{\sigma_p}) \rangle$)} 
          {
            \label{alg:receipt:checkp}
            \KwRet{false}
          }
        }  
        \KwRet{true} \;
    }}
\end{algorithm} 

To allow third parties to audit the ledger against the transaction results returned to clients, \LPBFT returns \emph{receipts}, which are statements signed by $N$$-$$f$ replicas that a transaction request~$t$ executed at index~$i$ and produced a result~$o$.  \LPBFT exploits the per batch Merkle tree~$G$ together with the nonce commitment scheme (\S\ref{subsec:lpbft-protocol}) to avoid having replicas sign the reply for each request.

\label{subsubsec:receipts:creating}

\mypar{Creating receipts} When a transaction batch described by \msg{pre-prepare} $\mathit{pp}$ prepares at replica~$r$, view~$v$ and sequence number~$s'$ (\Cref{alg:lpbft}, line~\ref{alg:lpbft:batchPrepared}), it sends $\langle \mathsf{reply}, v, s', r, {\sigma_{\mathsf{r}}}, {\mathcal{K}}[v,s'] \rangle$ to every client with a transaction in the batch. (If the client has multiple transactions in the batch, only one reply is sent.) By revealing the nonces, the replicas provide the client with proof that they claimed to have prepared the batch without a signed reply.

Only a designated replica, chosen based on $t$, sends the result and the rest of the receipt to the client~(line~\ref{alg:lpbft:batchPrepared:should-send-receipt}). The replica computes a list of sibling hashes $\mathcal{S}$ along the path from the leaf to the root of the per-batch Merkle tree~$G$. For the example of $T_i$ in \Cref{fig:chained-merkle-trees}, $\mathcal{S}$ consists of the digest of $T_{i-1}$ and $G_1$, which is sufficient to recompute $\bar{G}$ given $T_i$. It then sends the client $\langle \mathsf{replyx}, v, s', \bar{M}, H(k_p), E_{s'-P}, i_{g}, d_C, H(t), i, o, {\mathcal{S}} \rangle$, where $i_{g}$ and $d_C$ are used for auditing.

\label{subsubsec:receipts:verifying}

\mypar{Verifying receipts} The client waits for $N$$-$$f$~replicas to send \msg{reply} messages with the same $v$ and $s$, and for a \msg{replyx} message with the same $v$ and $s$. It then recreates the \msg{pre-prepare} and \msg{prepare} messages (\Cref{alg:receipt}, line~\ref{alg:receipt:checkpp}), with the information in \msg{replyx} and the hashes of the nonces, and verifies the signatures. (We describe how to determine $N$ and verify signatures under dynamic membership in \S\ref{sec:governance:receipts_auditing}.) This step is shared across all transaction requests that the client may have sent in the batch.

\sys uses the Merkle tree~$G$ to bind signatures in \msg{pre-prepare} and \msg{prepare} messages to transactions in the batch, enabling replicas to produce a single signature per batch. In the example in \Cref{fig:chained-merkle-trees}, the client checks if $\bar{G} = H(H( H(T_{i-1}) || H(\langle t,i,o \rangle)) || G_1))$ (lines~\ref{alg:receipt:merkle:start}--\ref{alg:receipt:merkle:end}). If the hashes match, the client has a valid receipt, \ie a statement signed by $N$$-$$f$ replicas that a request~$t$ executed at index~$i$ and produced a result~$o$; otherwise (or if the client does not receive replies before a timeout), it retransmits the request and selects a different replica to send back \msg{replyx}. (The application is responsible for ensuring exactly-once semantics if needed.)

Clients store the receipt for $\langle t, i, o \rangle$ as $\langle v, s, \bar{M}, H(k_p), E_{s-P},$ $i_{g}, d_C, \sigma_p, E_s, \Sigma_s, {\mathcal{K}}_s, {\mathcal{S}} \rangle$ where $\Sigma_s$ is a list of the signatures in \msg{prepare} messages, ${\mathcal{K}}_s$ is a list of nonces, and $E_s$ is a bitmap indicating the replicas with entries in $\Sigma_s$, and ${\mathcal{K}}_s$, sorted in increasing order of replica identifier. All receipt components, including common hashes in ${\mathcal{S}}$, are shared across requests in the same batch. 

Clients must store the receipts together with the transaction request and the corresponding result to resolve future disputes. This is not a burden because receipts are concise: all components have constant size, except $|{\mathcal{S}}|$, whose number of entries is logarithmic in the number of requests in a batch; $\Sigma_s$ and ${\mathcal{K}}_s$ have up to $N$$-$$f$ entries. In addition, most intermediate hashes in ${\mathcal{S}}$ can be shared across collections of receipts. We explored using signature aggregation~\cite{aggrsigs} to reduce the size of $\Sigma_s$, but, for realistic consortia sizes, verifying the signatures becomes more expensive than our current implementation.

\label{sec:receipts:Ledger-structure}

\subsection{Performance optimizations}
\label{sec:receipts:optimisations}

\LPBFT includes several optimizations to improve transaction and auditing throughput.

\myparr{Checkpoints} in \LPBFT allow new replicas to start processing requests without having to replay the ledger from the start (\S\ref{sec:governance:reconfiguration}); slow replicas to be brought up-to-date using a recent checkpoint; and auditing to start from a checkpoint instead of the beginning of the ledger(\S\ref{sec:auditing:process}).

Checkpoints include the key-value store and the Merkle tree $M$'s newest leaf, root, and the connecting branches. Replicas create a checkpoint~$\mathit{cp}_s$ when they execute a batch with sequence number~$s$ such that $s\ \mathrm{mod}\ C = 0$. The primary adds a batch to the ledger at sequence number $s$$+$$C$ with a special \emph{checkpoint transaction}, which records the checkpoint digest. $C$ is chosen to give replicas enough time to complete a checkpoint without delaying \LPBFT execution. Backups only accept the \msg{pre-prepare} for $s$$+$$C$ if they compute the same checkpoint digest for sequence number~$s$.

When a replica fetches checkpoint~$\mathit{cp}_s$, it also retrieves the ledger up to $s$. It does not need to replay the ledger or check all signatures (with the exception of governance transactions; \S\ref{sec:governance:receipts_auditing}). Instead, it checks the signatures in checkpoint receipts and that the ledger contents between consecutive checkpoints are consistent with the Merkle tree roots in the corresponding receipts. This is done from the start of the ledger until $s$$+$$C$.

\mypar{Cryptography} \LPBFT reduces the impact of cryptographic operations. Signature verification is parallelized for messages received from replicas and clients~\cite{aardvark, bessani2014state} to improve throughput and scalability. All messages are sent over encrypted and authenticated connections, even signed messages. This mitigates denial-of-service attacks that consume replica resources verifying signatures~\cite{aardvark}. 

To further improve performance, backups overlap the execution of request batches with the validation of \msg{pre-prepare} signatures. They only send the \msg{prepare} after both completed. Since \msg{pre-prepare} messages are received over authenticated connections, this always succeeds for correct primaries.


\section{Auditing and enforcement}
\label{sec:auditing}

In this section, we describe how auditing produces \emph{universal proofs-of-misbehavior}~(uPoMs) when linearizability is violated~(\S\ref{subsec:auditing}), and the role of the enforcer in obtaining ledgers for auditing and punishing the members responsible for misbehaving replicas~(\S\ref{subsec:upom}). We first focus on the simpler case of auditing without governance transactions; \S\ref{sec:governance} describes governance transactions and their impact on auditing.

\subsection{Auditing}
\label{sec:auditing:process}
\label{subsec:auditing}

An audit is triggered when someone, usually a client, obtains a sequence of transaction receipts that violate linearizability, \ie when no linearizable execution of the stored procedures that define the transactions can produce the sequence of receipts. The mechanism to detect linearizability violations is application dependent. It involves clients, which interact through a sequence of transactions, exchanging receipts and using the application semantics to reason about the correctness of the receipt sequence. We describe a banking-inspired example in the introduction.

The goal of auditing is to detect dishonest behavior regardless of the number of misbehaving replicas, \ie it must find proof of misbehavior even if all replicas collude and rewrite the ledger. \sys therefore tightly integrates the ledger with receipts---even if the ledger is rewritten, the misbehaving replicas are unable to alter the receipts.

An audit can be performed by anyone, and begins when an \emph{auditor} receives a collection of receipts. Next, the auditor requests a \emph{checkpoint} and a \emph{ledger fragment} that contains the section of the ledger spanning the receipts. Any honest replica that signed the receipts is guaranteed to have the checkpoint and ledger fragment. When the auditor receives the requested data, it verifies the ledger structure by checking the protocol messages and their order, and validating any signatures in the ledger---but it does not re-execute transactions. Then, the auditor checks that the transactions referenced by the receipts are present at the right positions in the ledger.

If the above steps have not discovered misbehavior, there remains the possibility that at least $N-f$ of the replicas colluded and agreed on an incorrect execution result. Therefore, the auditor loads the checkpoint and replays the transactions from the ledger fragment to check if execution results are correct. Throughout this process, if dishonest behavior is uncovered, the auditor can produce a universally-verifiable proof that at least $f$$+$$1$ replicas misbehaved.

More formally, \Cref{alg:auditing-with-receipts} presents the pseudocode for the auditing process. First, the auditor receives an ordered set of receipts~$\mathcal{R}=\{ \langle \langle t_0, i_0, o_0 \rangle, x_0 \rangle,\ldots,\langle \langle t_k, i_k, o_k \rangle, x_k \rangle \}$ where $k \ge 1$ and $\forall l \in [0, k) : s_l \le s_{l+1}$. Here, $s_i$ is the sequence number that is specified in $x_i$. The auditor invokes \textsf{auditReceipts} (line~\ref{alg:auditing-audit-receipts}) to check if the receipts are valid and the minimum index requirements have been satisfied. If there is a receipt that violates the requirement in the request, all replicas that have signed the receipt can be blamed.
 
After that, the auditor must obtain a ledger fragment and checkpoint that are \emph{complete} in relation to $\mathcal{R}$ (line~\ref{alg:auditing-get-ledger}). We formally define completeness in \Cref{sec:auditing-correctness}, but intuitively the ledger fragment must be (i)~\emph{well-formed}; (ii)~include all batches and evidence between sequence numbers~$s_{C_0}$ and $s_k$ where $s_{C_0}$ is the sequence number of the checkpoint transaction that is linked in the first receipt; and (iii)~include \msg{view-change} messages for all views in $\mathcal{R}$. The transaction and checkpoint at $s_{C_0}$ must match the checkpoint linked in the first receipt.
A ledger fragment is \emph{valid} if it can be produced by a sequence of correct primaries in a sequence of views where there are at most $f$~Byzantine failures. It is well-formed if it is valid, or if it would be valid if not for the incorrect execution of some transactions and/or checkpoints.
A correct replica always maintains a well-formed ledger.
 
In \textsf{getCheckpointAndLedger} (line~\ref{alg:auditing-get-ledger}), the auditor, with the help of an \emph{enforcer}, obtains ledger fragments and checkpoints from replicas that signed the latest receipt with the highest view number in $\mathcal{R}$ (line~\ref{alg:auditing-enforcer-get-ledger}). The auditor checks if  responses are complete in relation to the receipts. If a ledger fragment is not well-formed or misses the required \msg{view-change} messages, the auditor can blame the responding replica. Below, we assume that the responses contain no invalid signatures, we show in \Cref{sec:auditing-correctness} how the auditor handles that case.

\begin{algorithm}[tb]
  \footnotesize
  \caption{Ledger Auditing (simplified)}
  \label{alg:auditing-with-receipts}
  \SetAlgoNoEnd
  \DontPrintSemicolon 
  \SetNoFillComment
  \SetInd{0.4em}{0.4em}
    \KwOn({\FAudit{$\mathcal{R} = \{ \langle \langle t_0, i_0, o_0 \rangle, x_0 \rangle, \ldots, \langle \langle t_k, i_k, o_k \rangle, x_k \rangle \}$}}) 
    {
      \scriptsize\textsf{
        \FAuditReceipts{$\mathcal{R}$}\;
        \label{alg:auditing-audit-receipts}
        $C_0, s_{C_0}, \mathcal{L} \gets $\FGetCheckpointAndLedger{$x_0, x_k$}\;
        \label{alg:auditing-get-ledger}
        \FVerifyReceiptsInLedger{$\mathcal{R}, \mathcal{L}$}\;
        \label{alg:auditing-verify-receipts}
        \FReplayLedger{$C_0, s_{C_0}, \mathcal{L}$}\;
        \label{alg:auditing-replay-no-error}
        \label{alg:auditing-replay-end}
    } }
    \KwOn({\FAuditReceipts{$\mathcal{R} = \{ \langle \langle t_0, i_0, o_0 \rangle, x_0 \rangle, \ldots, \langle \langle t_k, i_k, o_k \rangle, x_k \rangle \}$}})
    {
      \scriptsize\textsf{
        \label{alg:receipt-valid:start}
        \ForEach{$\langle \langle t_i, i_i, o_i \rangle, x_i \rangle \in \mathcal{R}$}
        {
          \lIf {\KwNot \FVerifyReceipt{$\langle t_i, i_i, o_i \rangle, x_i$}}
          {
            \KwRet{invalidReceipt}
            \label{alg:receipt-valid:end}
          }
        }
    }}
    \KwOn({\FGetCheckpointAndLedger{$x_0, x_k$}})
    {
      \scriptsize\textsf{
        \For{$ C_0, s_{C_0}, \mathcal{L}, r \gets $\FEnforcerGetLedgerPackage{$x_o, x_k$}}
        {
          \label{alg:auditing-enforcer-get-ledger}
          $\mathsf{uPoM} \gets \mathsf{nil}$ \;
          \ForEach{$s \in s_{C_0}, ...,  \mathsf{seqno}(x_k+P)$}
          {
            \label{alg:auditing-request-checkpoint}
            \If{\KwNot \FIsBatchWellformed{${\mathcal{L}}, s$}}
            {
              \label{alg:auditing-request-is-wellformed}
              $\mathcal{F} \gets $\FCreateLedgerFragment{$\mathsf{nil}, s, \mathcal{L}$}\;
              $\mathsf{uPoM} \gets \langle \mathsf{nil}, \mathcal{F}, r \rangle$; $\mathsf{send(uPoM)}$; \KwRet{}\;
            }
          }
          \lIf{$\mathsf{uPoM} = \mathsf{nil}$}
          {
            \KwRet{$C_0, s_{C_0}, \mathcal{L}$}
          }
        }
    }}
    \KwOn({\FVerifyReceiptsInLedger{$\mathcal{R}, \mathcal{L}$}})
    {
      \scriptsize\textsf{
        \ForEach{$\langle \langle t_i, i_i, o_i \rangle, x_i = \langle v, s, H(k_p), \ldots {\mathcal{K}}_s, {\mathcal{S}} \rangle \rangle \in \mathcal{R} $}
        {
          \If {\KwNot $\mathsf{isReceiptInBatch(x_i, \mathcal{L})}$}
          {            
            \label{alg:auditing-request-is-receipt-in-batch}
            $\mathcal{F} \gets $\FCreateLedgerFragment{$\mathsf{nil}, s, \mathcal{L}$}\;
            $\mathsf{uPoM} \gets \langle \mathcal{F}, \langle \langle t_i, i_i, o_i \rangle, x_i \rangle \rangle$; $\mathsf{send(uPoM)}$; \KwRet{}\;
          }
        }
    }}
    \KwOn({\FReplayLedger{$C_0, s_{C_0}, \mathcal{L}$}})
    {
      \scriptsize\textsf{
        $s_{cp} \gets s_{C_0}$; $cp \gets C_0$; $\mathsf{kv} \gets \mathsf{loadCheckpoint}(s_{C_0}, C_0)$\;
        \label{alg:auditing-load-checkpoint}
        \ForEach{$s \in s_{C_0}, ...,  \mathsf{seqno}(x_k)$}
        {
          \label{alg:auditing-replay-start}
          \ForEach{$\langle t_i, i_i, o_i \rangle \in s$}
          {
            $\mathcal{L}, \mathsf{kv} \gets $\textsf{replayRequest}(${\mathcal{L}}, \mathsf{kv}, t_i$)\;
            \label{alg:auditing-replay}
            \If {\KwNot \FVerifyReplay{${\mathcal{L}}, \mathsf{kv}, \langle t_i, i_i, o_i \rangle$}}
            {            
              \label{alg:auditing-replay-in-batch}
              $\mathcal{F} \gets $\FCreateLedgerFragment{$s_{cp}, s, \mathcal{L}$}\;
              $\mathsf{uPoM} \gets \langle i_i, \mathcal{F}, cp \rangle$; $\mathsf{send(uPoM)}$; \KwRet{}\;
              \label{alg:upom}
            }
          }
          \If{$s \bmod C = 0$}
            {
              $s_{cp} \gets s$; $cp \gets \mathsf{createCheckpoint(kv)}$\;
            }
        }
    }}
\end{algorithm} 

If the batch at $s_{C_0}$ is not a checkpoint or the checkpoint digest does not match the first receipt, the auditor can assign blame to the intersection of replicas that have signed the batch at $s_{C_0} + C$ and the first receipt, as the checkpoint reference in a receipt must always link to the last committed checkpoint. If the fragment is not long enough to include the sequence number in one of the receipts, there must be misbehavior during a view change. The auditor can then blame at least $f$$+$$1$ misbehaving replicas: the intersection of the replicas that participated in a view change and that also signed the receipt. A correctness proof and the details of obtaining a complete ledger fragment and checkpoint are described in \Cref{sec:auditing-correctness}, Lemmas \ref{lemma:obtain-complete-ledger} and \ref{lemma:serializability-violations}.

After obtaining a well-formed ledger, in \textsf{verifyReceiptsInLedger} (line~\ref{alg:auditing-verify-receipts}), the auditor compares the receipts with the ledger. If a receipt $\langle \langle t_k, i_k, o_k \rangle, x_k \rangle$ does not match the batch at $s_k$ in the ledger fragment, we show in \Cref{lemma:incompatibility} that the auditor can assign blame to $f$$+$$1$ misbehaving replicas. In summary, there are three cases: (i)~the \msg{pre-prepare} with sequence number~$s_k$ in $\mathcal{L}$ has a view number $v_l = v_k$; (ii)~$v_l > v_k$; or (iii)~$v_l < v_k$. In case~(i), the ledger fragment contains evidence that the batch with sequence number~$s_k$ has prepared at $N$$-$$f$ replicas. Since at least $f$$+$$1$ of the replicas that have prepared the batch also signed the receipt, they can be blamed. In case~(ii), since $v_l > v_k$, there must be at least $N$$-$$f$ \msg{view-change} messages from different replicas that transition to a view greater than $v_k$ in the ledger fragment but claim not to have prepared the batch in the receipt in view~$v_k$. Since there are at least $f$$+$$1$ of those replicas that also signed the receipt, they can be blamed. In case~(iii), since $v_k > v_l$ and the ledger fragment is complete in relation to the receipt, there must be at least $N$$-$$f$ \msg{view-change} messages from different replicas that transition to a view greater than $v_l$ in the ledger fragment. Similarly, the intersection of those replicas and the ones that signed the receipt can be blamed. 

Since $N$$-$$f$ or more replicas may have misbehaved, it is necessary to replay transaction execution to check if the results are correct. The auditor does not need to understand the semantics of the service; it can retrieve the code of the stored procedures from $C_0$. The auditor sets the service state to the checkpoint value and replays transactions. If replaying a transaction fails to match the result in the ledger, the auditor can assign blame to any replica that signed the batch that contains the transaction. This is shown in \textsf{replayLedger} (line~\ref{alg:auditing-replay-no-error}).

\subsection{Enforcement}
\label{subsec:upom}

Since \sys provides individual accountability even if all replicas and members misbehave, there must be an \emph{enforcer} outside of the system to obtain checkpoints and ledger fragments for auditing, and to punish members responsible for misbehaving replicas. For example, consortium members may sign a binding contract to establish penalties if a uPoM proves that one of their replicas misbehaved, or if they fail to produce checkpoints and ledgers for auditing by an agreed deadline. These penalties may be imposed by the enforcer via arbitration~\cite{arbitration} or a court of law~\cite{discovery}.

The enforcer receives a set of receipts $\mathcal{R}$ from the auditor~(\Cref{alg:auditing-with-receipts}, line~\ref{alg:auditing-enforcer-get-ledger}). It then verifies that the receipts are valid, and requests all of the replicas that signed the latest receipt with the highest view for a ledger fragment that is complete in relation to $\mathcal{R}$.

Correct replicas will respond to the enforcer quickly. If the enforcer does not receive a response from a replica within a reasonable duration, \eg within minutes, it contacts the controlling consortium member to obtain the checkpoint and ledger. If the member fails to provide this information by an agreed deadline, \eg within days, it is punished according to the contract. This is important to ensure that misbehaving members cannot escape punishment by failing to produce information for auditing. However, it introduces a weak synchrony assumption that may lead to the punishment of honest but slow members. We expect that the deadline will be chosen conservatively to make this unlikely in practice. After the deadline elapses, the enforcer either returns to the auditor~$f$$+$$1$ responses, or it penalizes $f$$+$$1$ unresponsive replicas.

The enforcer also punishes members if a uPoM proves that one of their replicas misbehaved. When it receives a uPoM,
it checks its validity by carrying out an audit, as described in \S\ref{sec:auditing:process}, but the ledger fragment size and the number of transactions to replay is bounded by the transactions between two consecutive checkpoints. Furthermore, if there are fewer than $N-f$ misbehaving replicas, the uPoM does not require the enforcer to replay transactions. If the uPoM is incorrect, the enforcer punishes the auditor; otherwise, it punishes the members responsible for at least $f$$+$$1$ misbehaving replicas.

In practice, we expect the load placed on the enforcer to be small, because auditing is rare---\sys provides linearizability with up to $f$~misbehaving replicas and the enforcer penalizes entities that request information for auditing and fail to produce a valid, minimal uPoM.


\section{Reconfiguration and auditing}
\label{sec:governance}

In this section, we describe how \sys can change the consortium membership and the active replica set~(\S\ref{sec:governance:reconfiguration}). We explain how this impacts receipt validation~(\S\ref{sec:governance:receipts_auditing}) and auditing (\S\ref{sec:governance:auditing}).

\subsection{Reconfiguration}
\label{sec:governance:reconfiguration}

An \sys deployment must handle changes to the active member and replica set while supporting auditing, regardless of how many replicas misbehave. For this, \sys maintains governance data in the form of a \emph{configuration}, which includes the public signing keys for members and replicas and an endorsement of each replica's signing key signed by the member responsible.

Changing the configuration enables members to change the active replica set. This is initiated by a \emph{referendum}: members propose an updated configuration followed by the other members voting on the proposal. The number of votes required to pass the proposal is part of the service's state.

When voting on proposals, members must ensure the integrity of the service, \eg disallowing an individual member from controlling too many replicas. Members are also limited to adding or removing at most $f$~replicas, which ensures that the configuration change does not effect the service's liveness.

A referendum is carried out through governance transactions: a member proposes a new configuration by sending a \emph{propose} transaction request. This is followed by members sending \emph{vote} requests. Upon executing the final \emph{vote} transaction required for a referendum to pass at sequence number $s$, the primary ends the current batch, and initiates the reconfiguration process.

A \emph{reconfiguration} first adds evidence for the referendum to the ledger. This is done as part of the old configuration by the primary sending $P$~\msg{pre-prepare} messages without batched requests, called the \emph{end-of-configuration} batches. The \msg{pre-prepare} message for the end-of-configuration batch at sequence number~$s+P$ contains evidence that the batch at $s$ committed (\S\ref{sec:receipts}). In addition, these \msg{pre-prepare} messages include an extra field: the \emph{committed} Merkle root, which is the root of the Merkle tree at $s$.
This evidence is required for auditing: it commits the replicas that signed the $P^{\mathrm{th}}$ end-of-configuration batch to triggering the reconfiguration.
Similarly, the signatures of the replicas that prepared the $P^{\mathrm{th}}$ end-of-configuration batch must be included in the ledger in the same configuration. Following the first $P$ end-of-configuration batches, the primary pre-prepares another set of $P$ end-of-configuration batches. The configuration change takes effect at $s$$+$$2P$.

The replicas in the new configuration create a checkpoint of the key-value store at sequence number~$s$$+$$2P$. The primary creates a \msg{pre-prepare} for the checkpoint at $s$$+$$2P$$+$$1$, followed by $P$ \emph{start-of-configuration} \msg{pre-prepare} messages with empty request batches. This ensures that a correct replica commits the checkpoint transaction before other transactions are executed in the new configuration. If any of the end/start-of-configuration batches correspond to a checkpoint sequence number, the checkpoint is skipped. Therefore, the checkpoint digests~$d_c$ in the \msg{pre-prepare} messages always refer to checkpoints in the same configuration.

A newly added replica first obtains the ledger and a recent checkpoint, and replays the ledger from that checkpoint~(\S\ref{sec:receipts:optimisations}). Replicas that are no longer part of the new configuration retire after sending the pre-prepare for~$s$$+$$2P$. Removed members and replicas should delete their private signing keys to provide forward security. This prevents them from being blamed for future compromises, while still allowing authentication of transactions in the ledger using their public keys.

\subsection{Governance sub-ledger and receipts}
\label{sec:governance:receipts_auditing}

When a client verifies a receipt, it must know which replicas were active when the receipt was created. \sys addresses this with the help of the governance sub-ledger.

Governance transactions are recorded in the ledger and used by auditors to determine the active configuration. Clients, however, do not have a copy of the ledger, but need to verify receipt signatures. To do this, they store receipts for all governance transactions and, for each reconfiguration, they also store the receipts for the $P^{\mathrm{th}}$ end-of-configuration batch. We refer to this as the receipts of the \emph{governance sub-ledger}. A client checks that a transaction receipt for index~$i$ is valid by considering the governance sub-ledger from the genesis transaction~$\mathit{gt}$ up to $i$. The client verifies the governance receipts, and if successful, the replica signing keys at index~$i$ are used to validate the receipt~(\S\ref{sec:receipts}).

This raises the challenge of how a client determines that it has \emph{all} required governance receipts. \sys includes the ledger index of the last governance transaction in each \msg{pre-prepare} message and receipt ($i_g$). A client can request missing receipts from replicas by traversing the sequence of governance receipts. It verifies received receipts incrementally and caches them locally.

With reconfiguration, the definition of a valid receipt is extended: a valid receipt~$R$ must include valid governance receipts from $\mathit{gt}$ up to the configuration that produced $R$.

\subsection{Auditing}
\label{sec:governance:auditing}

Reconfiguration introduces several new tasks for the auditor: it must consider the governance sub-ledger with receipts; validate that reconfigurations were executed correctly; and ensure that that only one configuration was active for any given index or sequence number. Next, we provide a summary of the required changes to the auditing process; a detailed correctness proof is included in \Cref{sec:auditing-correctness-proof-reconfig}.

A client initiates an audit by sending inconsistent receipts and the supporting governance receipts to an auditor. The auditor replays these governance transactions to determine the signing keys required to verify each client receipt. After verifying the receipts, the auditor requests a ledger fragment and checkpoint from the enforcer.

The auditor may uncover that multiple configurations were active for a given index or sequence number, this can happen when misbehaving replicas fork or rewrite the ledger. We call this a \emph{fork in governance}. If the auditor finds a fork, there are two $P^{\mathrm{th}}$ end-of-configuration batch receipts with the same preceding configuration that are not \emph{equivalent}: they are at different indices or sequence numbers, or their \msg{pre-prepare} messages do not contain the same committed Merkle root, \ie they are not preceded by the same governance transactions. In this case, the auditor assigns blame to the replicas that signed both receipts, as a correct replica that prepares a $P^{\mathrm{th}}$ end-of-configuration batch commits the final \emph{vote} transaction that triggers reconfiguration.

If the enforcer cannot obtain the required information for a valid receipt~$R$ from the sequence of provided receipts, there must be misbehaving replicas. In addition to the misbehavior described in \S\ref{sec:auditing:process}, the misbehaving replicas may have created a fork in governance or incorrectly prepared the $P^{\mathrm{th}}$ end-of-configuration batch that succeeds the configuration that produced the receipt~$R$ (see Lemmas~\ref{lemma:obtain-complete-ledger-reconfig} and \ref{lemma:serializability-violations-reconfig}).

Another possibility is that the configuration that produced a receipt~$R$ for a sequence number~$s$ may not match the configuration that prepared the batch at $s$ in a well-formed ledger fragment. In this case, blame is again assigned to the replicas that signed $R$ and prepared the $P^{\mathrm{th}}$ end-of-configuration batch that succeeds the configuration that produced $R$ (see Lemma~\ref{lemma:config-mismatch}). 

After assigning blame, the auditor sends a uPoM to the enforcer with the supporting governance receipts.


\section{Evaluation}
\label{sec:eval}

\newcommand{\numbersFOneThroughput}{47,841}
\newcommand{\numbersFOneNoReceiptThroughput}{51,209}

We evaluate \sys to understand the cost of providing receipts~(\S\ref{sec:evaluation:throughput-lat}), its scalability~(\S\ref{sec:eval:scalability}), the overheads of receipt validation~(\S\ref{sec:eval:receipt_validation}), and auditing~(\S\ref{sec:eval:auditing}). We finish with a performance breakdown of \sys{}'s design features~(\S\ref{sec:eval:breakdown}).

\mypar{Testbeds} Our experimental setup consists of three environments: (a)~a dedicated cluster with 16~machines, each with an 8-core 3.7-Ghz Intel E-2288G CPU with 16\unit{GB} of RAM and a 40\unit{Gbps} network with full bi-section bandwidth; (b)~a LAN environment in the Azure cloud, with Fsv2-series VMs with 16-core 2.7-GHz Intel Xeon 8168 CPUs and 7\unit{Gbps} network links; and (c)~a WAN environment with the same VMs across 3 Azure regions (US East, US West 2, US South Central). All machines run Ubuntu Linux~18.04.4 LTS.

\mypar{Implementation} Our \sys prototype is based on CCF~v0.13.2~\cite{Releasec85:online} and has approx. 40,000~lines of C++ code. It uses the formally-verified Merkle trees and SHA functions of EverCrypt~\cite{protzenko2020evercrypt}, the MbedTLS library~\cite{embedtls} for client connections, and secp256k1~\cite{wuille2018libsecp256k1} for all secure signatures. Replicas create secure communication channels using a Diffie–Hellman key exchange.

Pipelining batch execution ($P$ in \Cref{alg:lpbft}) improves \sys{}'s throughput. We use $P$$=$$2$ for the LAN and $P$$=$$6$  for the WAN, with maximum batch sizes of 300 and 800~requests, respectively. Checkpoints are created every 10K or 4K sequence numbers in the LAN and WAN environments, respectively.

\begin{table}[tb]
  \centering
  \caption{Size of ledger entries \textnormal{(SmallBank)}}
  \label{tbl:ledger-entry}\small\renewcommand{\arraystretch}{0.9}
  \begin{tabular}{lrr}
    \toprule
    \multirow{2}{*}{Ledger entry type} &   \multicolumn{2}{c}{Size (bytes)}  \\
                      &  \multicolumn{1}{c}{f = 1} & \multicolumn{1}{c}{f = 3}  \\
    \midrule
    Transaction (SmallBank) & \multicolumn{2}{c}{216--358} \\
    \msg{Pre-prepare} & \multicolumn{2}{c}{277} \\
    \msg{Prepare} Evidence & 298 & 894 \\
    Nonces & 32 & 64 \\
    \bottomrule
  \end{tabular}
\end{table}

\mypar{Benchmarks} We use the \emph{SmallBank} benchmark~\cite{smallbank}, which models a bank with 500K customer accounts. Clients randomly execute 5~transaction types: deposit, transfer, and withdraw funds; check account balances; and amalgamate accounts. The size of the ledger entries is shown in \Cref{tbl:ledger-entry} where only the Prepare Evidence and Nonces entries depend on $f$.

Since \sys{}'s design targets accountability with more than $f$ failures, we omit results from experiments with fewer failures. In such cases, \sys{}'s performance matches that of prior work, because it uses well-established BFT techniques, such as view changes, sending messages via authenticated channels and client-signed requests~\cite{bessani2014state, aardvark}. Instead, we consider the performance of receipt validation~(\S\ref{sec:eval:receipt_validation}) and auditing~(\S\ref{sec:eval:auditing}), which are new contributions of \sys.

Transaction throughput is measured at the primary replica and latency at the clients. All experiments are compute-bound. Results are averaged over 5~runs, with min/max error bars.

\mypar{Baselines} We compare against four baselines: \textsf{\sys-PeerReview}, which uses PeerReview for accountability~\cite{peerreview}, i.e., replicas sign all messages and send signed acknowledgements for all messages; \textsf{\sys-NoReceipt}, an \sys variant that produces a ledger but no receipts; \textsf{HotStuff}~\cite{hotstuff}, a state-of-the-art BFT protocol, which is at the core of the Diem permissioned ledger system~\cite{libra}; and Hyperledger~\textsf{Fabric} (v.~2.2)~\cite{hyperledger_fabric}, a popular open-source permissioned ledger system. We compare against Fabric's latest major release that does not include a BFT consensus protocol~\cite{hyperledger_fabric_bft} and only tolerates crash failures using Raft~\cite{ongaro2014raft}.

\begin{figure}[tb]
  \centering
  \includegraphics[width=1.0\columnwidth]{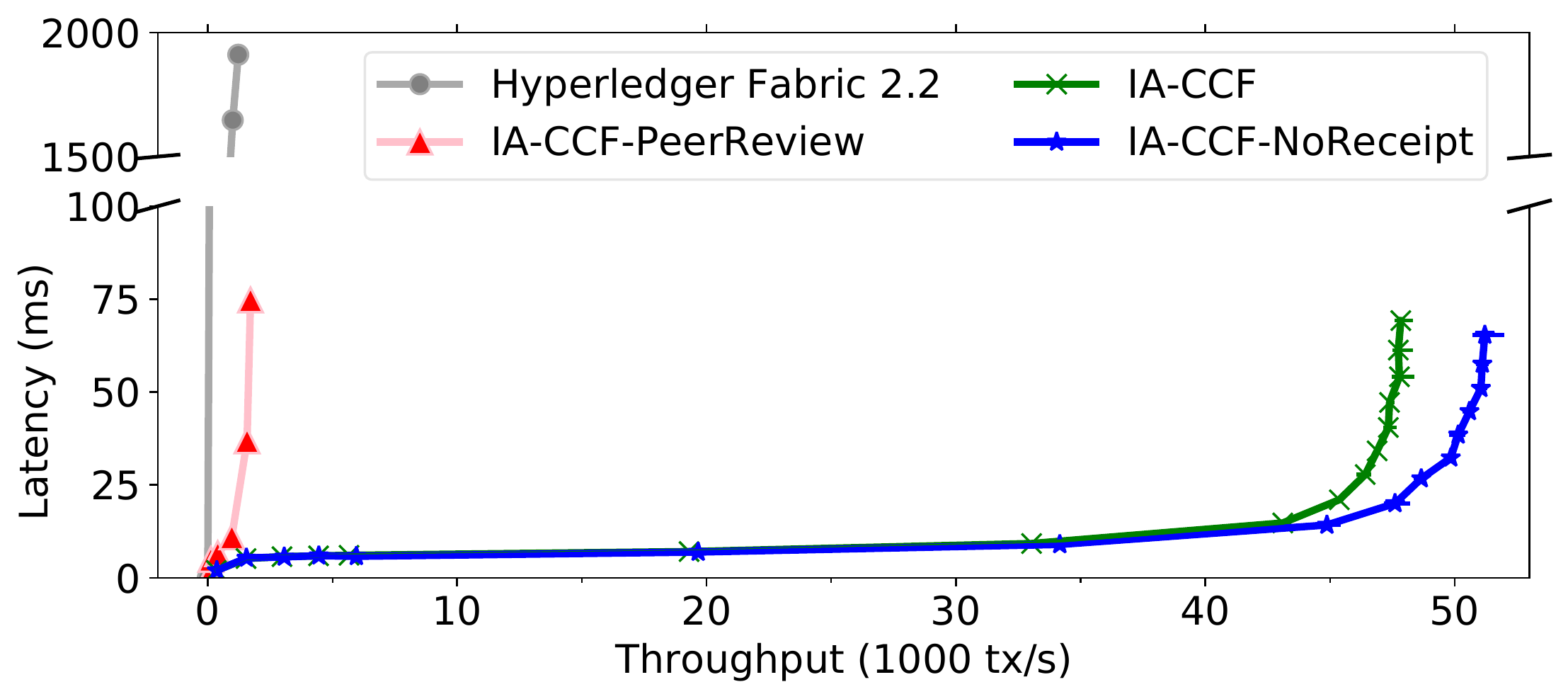}
  \caption{Transaction throughput/latency \textnormal{($f$$=$$1$, dedicated cluster)}}
  \label{fig:throughput-latency}
\end{figure}

\subsection{Transaction throughput and latency}
\label{sec:evaluation:throughput-lat}

We explore the throughput and latency of transaction execution with 4~replicas ($f$$=$$1$) in the dedicated cluster, comparing \textsf{\sys}, \textsf{\sys-NoReceipt}, \textsf{\sys-PeerReview}, and \textsf{Fabric}.

\Cref{fig:throughput-latency} shows a throughput/latency plot as transaction load increases. \textsf{\sys} achieves \numbersFOneThroughput\unit{tx/s} while maintaining latencies below 70\unit{ms}. As the load increases, queueing delays increase latency. \textsf{\sys-NoReceipt}'s throughput is \numbersFOneNoReceiptThroughput\unit{tx/s}, which is only 3\% higher than \sys, demonstrating the low cost of receipts.

\textsf{\sys-PeerReview} exhibits an order of magnitude lower throughput because all messages must be signed, e.g., a replica must sign a reply message for each transaction in a batch.
 This causes \textsf{\sys-PeerReview} to perform two orders of magnitude more asymmetric cryptographic operations than \sys.

\textsf{Fabric}'s throughput is only 1,222\unit{tx/s}, with a latency of 1.9\unit{s}. This is substantially worse than \sys, despite not using a BFT protocol. Our analysis reveals two reasons: Fabric's \emph{execute-order-validate} model requires that replicas issue a signature for each executed transaction, while \sys replicas only require one signature per batch; and Fabric suffers from documented inefficiencies related to its key-value store implementation~\cite{nakaike2020hyperledger}.

\begin{table}[tb]
  \centering
  \caption{Request latency under low load \textnormal{(WAN)}}
  \label{tbl:two-round-trips}\small\renewcommand{\arraystretch}{0.8}
  \begin{tabular}{lrrr}
    \toprule
  & average & 99\textsuperscript{th} percentile & network  \\
  & latency & latency  & round trips \\
    
    \midrule
    \textsf{\sys} & 183\unit{ms} & 194\unit{ms} & 2 \\
    \textsf{HotStuff} & 340\unit{ms} & 393\unit{ms} & 4.5 \\
    \bottomrule
  \end{tabular}
\end{table}

\begin{figure}[tb]
  \centering
  \includegraphics[width=1.0\columnwidth]{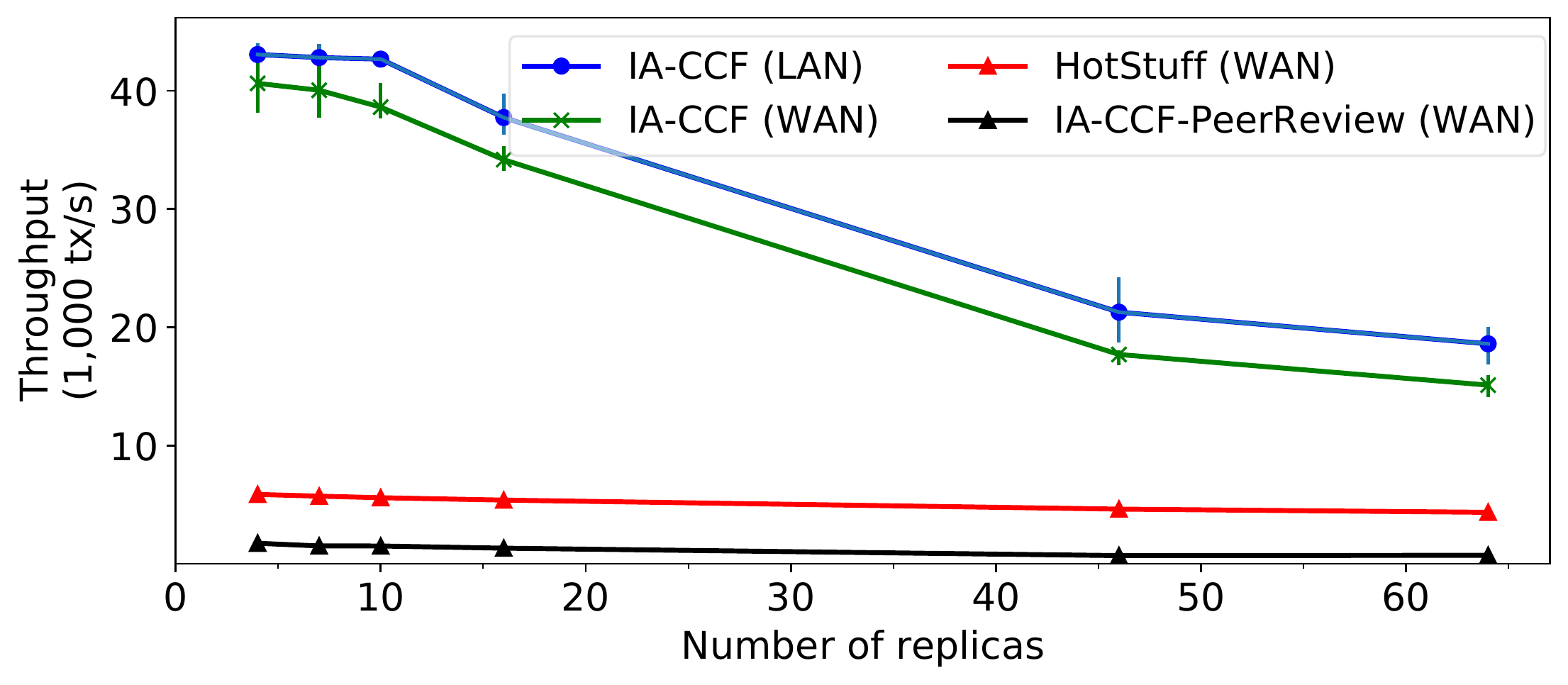}
  \caption{Transaction throughput vs. replica count \textnormal{(WAN)}}
  \label{fig:scale-out-eval}
\end{figure}

\begin{figure*}[t]
  \centering
  \begin{subfigure}[t]{.33\linewidth}
    \centering
    \includegraphics[width=1.0\columnwidth]{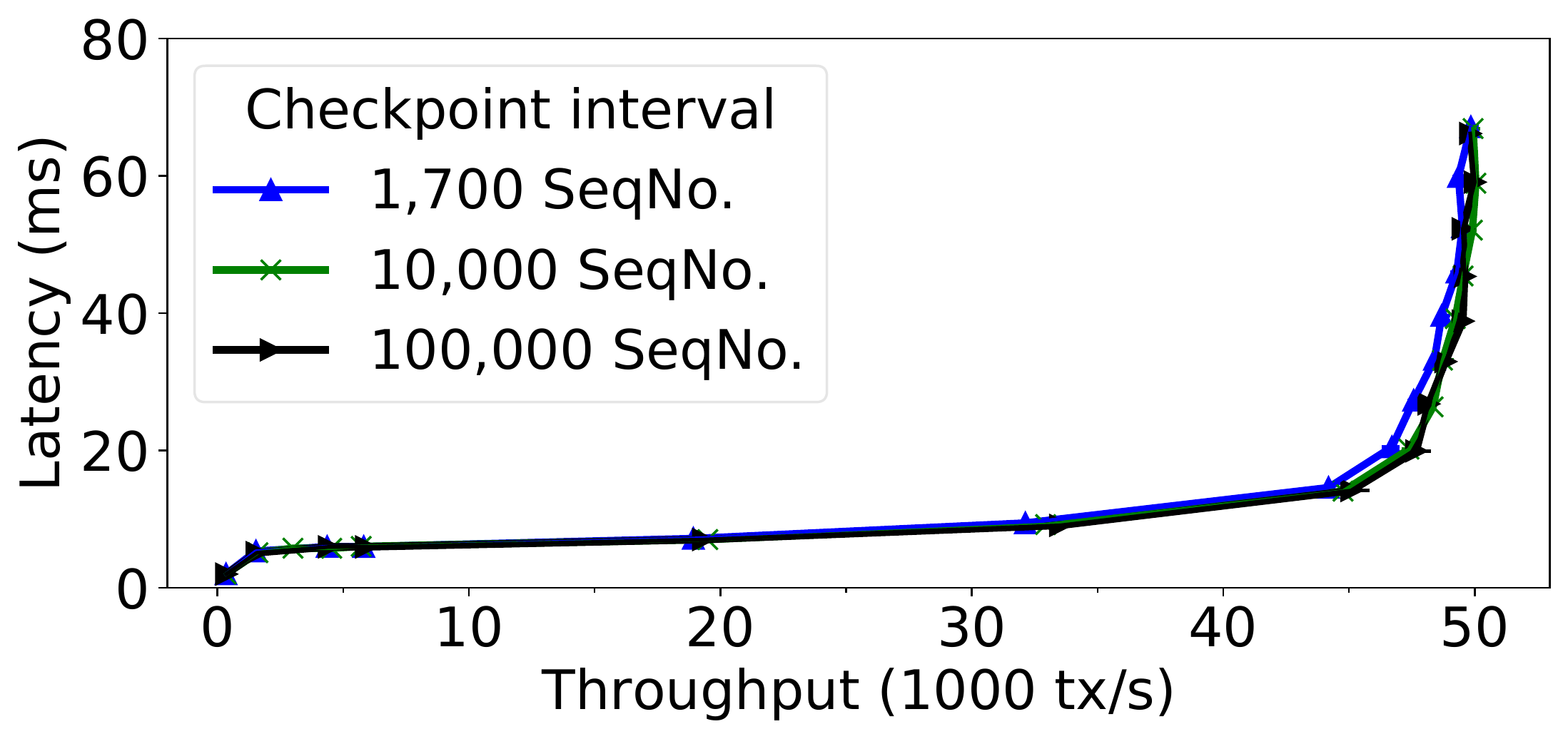}
    \caption{100,000 accounts}
    \label{fig:throughput-checkpoint-u100000}
  \end{subfigure}
  \begin{subfigure}[t]{.33\linewidth}
    \centering
    \includegraphics[width=1.0\columnwidth]{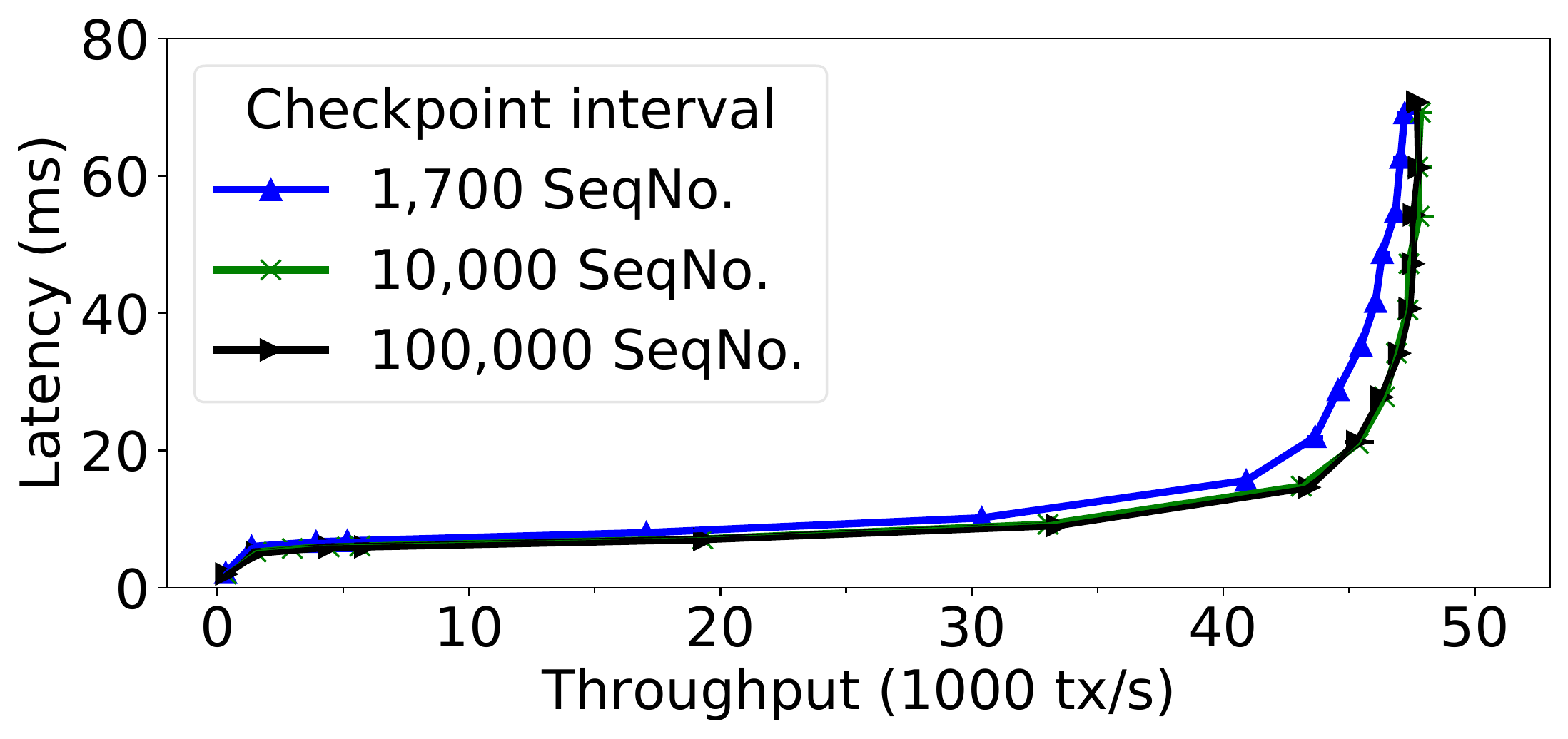}
    \caption{500,000 accounts}
    \label{fig:throughput-checkpoint-u500000}
  \end{subfigure}
  \begin{subfigure}[t]{.33\linewidth}
    \centering
    \includegraphics[width=1.0\columnwidth]{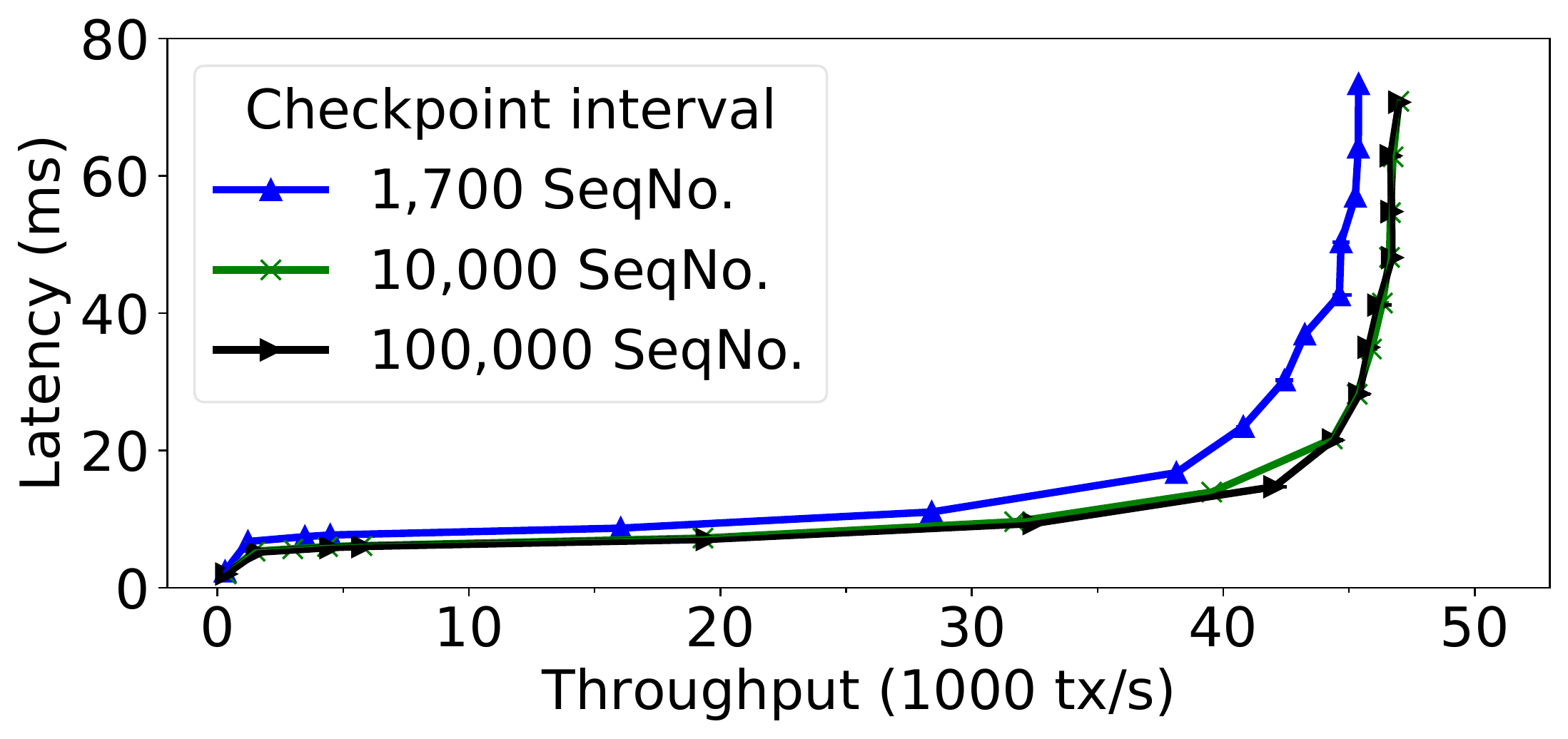}
    \caption{1,000,000 accounts}
    \label{fig:throughput-checkpoint-u1000000}
  \end{subfigure}
  \caption{\label{fig:throughput-checkpoint}Transaction throughput/latency when varying the number of accounts and checkpoint interval \textnormal{($f$$=$$1$, dedicated cluster)}}
\end{figure*}
\subsection{Scalability}
\label{sec:eval:scalability}

Next we consider the effect on transaction throughput when increasing the number of \sys replicas in the Azure WAN environment, spanning multiple regions to reduce correlated failures~\cite{barr2011building}. We compare against \sys deployed in the Azure LAN environment, \textsf{\sys-PeerReview}, and \textsf{HotStuff}, a BFT consensus protocol without a ledger or key-value store.

\Cref{fig:scale-out-eval} shows that, as expected, \textsf{\sys}'s throughput decreases with more replicas because more signatures are verified by each replica. Since each replica has a fixed number of threads for checking signed \msg{pre-prepare}/\msg{prepare} messages in parallel, throughput decreases when the replica count exceeds the number of hardware threads, which is only 16 in this deployment. 
\textsf{\sys} is only marginally affected by the higher WAN latencies due to its use of pipelining, as shown by the comparison to the LAN deployment.

\textsf{HotStuff}~\cite{hotstuff-commit} achieves a throughput of 5,862\unit{tx/s} in the WAN environment, which is worse than its reported LAN throughput~\cite{pompe-commit}. While it degrades slowly with more replicas, even with 64~replicas its throughput remains 71\% lower than that of \textsf{\sys}. 
The throughput of \textsf{\sys-PeerReview} is even lower since it performs more cryptographic operations.

We also measure the request latency of  \textsf{HotStuff} and \sys under low load. As reported in \Cref{tbl:two-round-trips}, \textsf{HotStuff}'s request latency is approximately twice that of \textsf{\sys}'s. For both systems, request latency is dominated by the number of network round trips and clients receive transaction results with receipts in only 2 round trips in \sys. 

\label{sec:eval:req_lat}

\subsection{Receipt validation}
\label{sec:eval:receipt_validation}

We measure the time required to verify receipts, which depends on (i)~the length of the path in the Merkle tree~$G$ and (ii)~the number of signatures to be checked. Since the number of leaves in $G$ is bounded by the batch size, the path length remains small: verification takes 2.1\unit{\textmu{}s} and 2.3\unit{\textmu{}s} for batches of 300 and 800 requests, respectively. The overall cost is dominated by the signature verification, which takes 18\unit{ms} and 52\unit{ms} for $f$$=$$1$ and $f$$=$$3$, respectively. 

\subsection{Governance sub-ledger}
\label{sec:eval:gov_subledger}

Next, we consider the size of the governance sub-ledger, which is stored by clients. The sub-ledger is a collection of receipts for every transaction that has updated the governance of an \sys deployment. A receipt's size is 623\unit{bytes} or 1,565\unit{bytes} for $f$$=$$1$ or $f$$=$$3$, respectively. In addition, the client must store the governance request and the corresponding response, which have variable size. We expect governance operations to be rare. Therefore, storing and verifying governance sub-ledger receipts has low overhead.

\subsection{Ledger auditing}
\label{sec:eval:auditing}

Next, we want to understand auditing performance. For the SmallBank workload, we compare execution time to auditing time. When measuring throughput at $f$$=$$1$, auditing is 23\% faster than execution, because there is no network overhead, message signing, or ledger writes. In each batch, \sys only verifies $2$$f$$+$$1$ rather than up to $3$$f$$+$$1$ signatures. For $f$$=$$4$, the performance gap increases to 67\%, as more replicas add communication and cryptographic load during execution. We observe that the bottleneck for
auditing is verification of client request signatures, which can be trivially parallelized.

\subsection{Key-value store}
\label{sec:eval:kv-size}

\begin{figure}[tb]
  \centering
  \includegraphics[width=1.0\columnwidth]{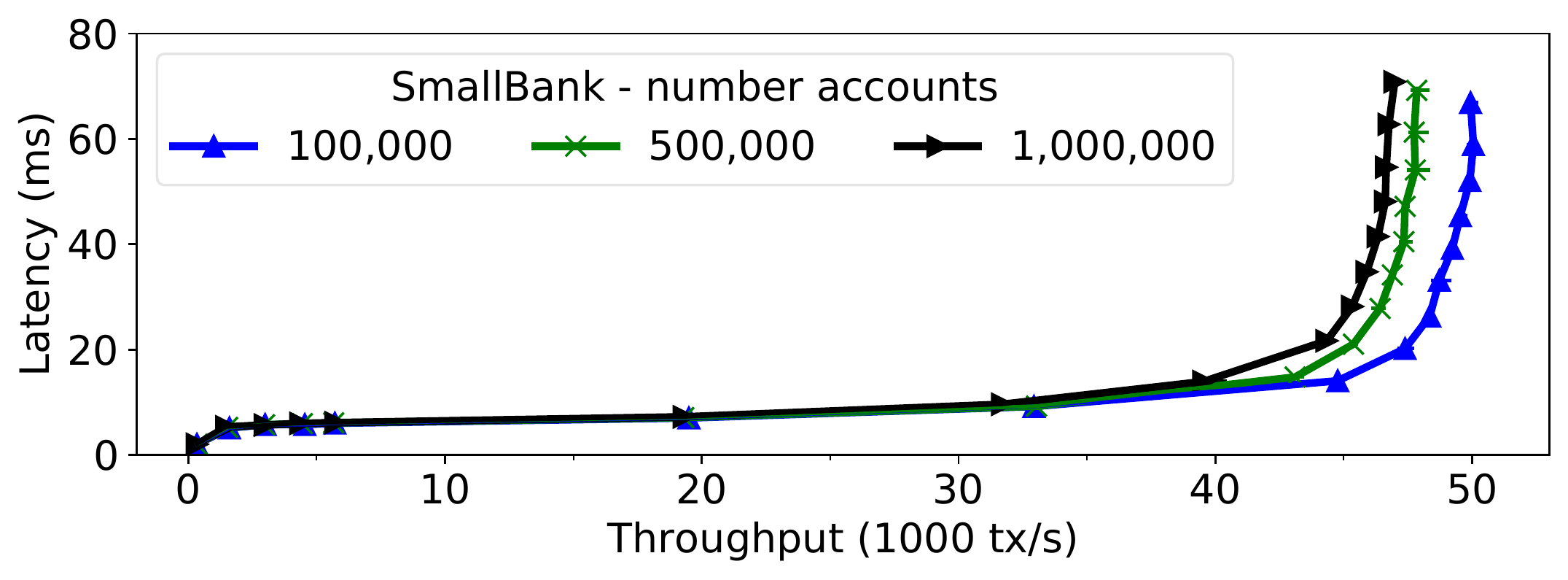}
  \caption{Transaction throughput/latency with different account numbers \textnormal{($f$$=$$1$, dedicated cluster)}}
  \label{fig:throughput-latency-db-size}
\end{figure}

We explore the performance impact of varying the number of entries in the key-value store by varying the number of SmallBank accounts.
\Cref{fig:throughput-latency-db-size} shows a throughput vs. latency plot. As expected, throughput decreases when the number of entries in the key-value store increases. CCF's implementation~\cite{russinovich2019ccf} of the key-value store uses a CHAMP map~\cite{steindorfer2016fast}, whose access time grows logarithmically with the number of items. 

\subsection{Checkpointing}
\label{sec:eval:Checkpointing}

We also explore the effect of checkpointing on performance. We vary the size of the key-value store and the checkpoint interval for the SmallBank workload. \Cref{fig:throughput-checkpoint} shows the results as throughput vs. latency plots. 
As expected, the checkpoint overhead increases with the size of the key-value store and the checkpoint frequency, but the overhead is low for checkpoint intervals between 10 and 100K (approximately 1 to 10 minutes).
The checkpoint interval impacts the overhead to check uPoMs at the enforcer.
We expect checkpointing every 10 minutes to be acceptable in practice; it requires the enforcer to replay at most 10 minutes of transactions.

\subsection{Overhead breakdown}
\label{sec:eval:breakdown}

To provide a permissioned ledger with individual accountability, \sys implements functionality that goes beyond traditional BFT consensus protocols, e.g., generating receipts. We now explore the impact of implementing this functionality on \sys{}'s throughput in the dedicated cluster.

We compare several variants of \sys, each limiting functionality further: (a)~\textsf{\sys}; (b)~\textsf{\sys-NoReceipt}, \ie without creating receipts; (c)~without creating checkpoints; (d)~with a small key-value store, \ie the key-value store fits in the CPU cache; (e)~without signed client requests; (f)~using only MACs for message authentication between replicas; (g)~without a ledger; and (h)~with empty requests, \ie without the overhead of executing transactions against the key-value store.

\begin{table}[tb]
\centering
    \caption{Breakdown of \sys features \textnormal{($f$$=$$1$, dedicated cluster)}}
    \label{tbl:feature-overhead}\small\renewcommand{\arraystretch}{0.8}
  \begin{tabular}{l@{\hspace{-10pt}}lrr}
      \toprule
      Variant && Throughput (tx/s) \\
      \midrule
      (a) & Full \textsf{\sys} & \numbersFOneThroughput \\
      (b) & \textsf{\sys-NoReceipt} & \numbersFOneNoReceiptThroughput \\
      (c) & \hspace{0.25em} + without checkpoints & 51,288 \\
      (d) & \hspace{0.25em} + small key-value store & 53,759 \\
      (e) & \hspace{0.25em} + without signed client requests & 111,926 \\
      (f) & \hspace{0.25em} + with MACs only & 128,921 \\
      (g) & \hspace{0.25em} + without ledger & 131,959 \\
      (h) & \hspace{0.25em} + with empty requests & 299,321 \\
      \midrule
       & \textsf{HotStuff} (with empty requests) & 307,997 \\
       & \textsf{Pomp\=e} (with empty requests) & 465,646 \\
      \bottomrule
    \end{tabular}
\end{table}

\Cref{tbl:feature-overhead} shows that (a)--(d) have comparable throughput, but not verifying client signatures~(e) doubles throughput. 
Only using MACs instead of signatures~(f) or removing the ledger altogether~(g) does not increase throughput substantially, but 
removing the overhead of executing transactions against the key-value store~(h) again doubles throughput.

For context, we compare with two Byzantine consensus protocols with similar functionality to (h) above, \textsf{HotStuff}~\cite{hotstuff} and \textsf{Pomp\=e}~\cite{zhang2020byzantine, pompe-commit}. \textsf{HotStuff}'s throughput is 307,997\unit{tx/s}, but with higher latency~(\S\ref{sec:eval:req_lat}). By separating request ordering and consensus, \textsf{Pomp\=e} achieves a throughput of 465,646\unit{tx/s}, also with worse latency (\sys's 12\unit{ms} to Pomp\=e's 73\unit{ms}). \sys could utilize Pomp\=e's techniques for increased throughput by sacrificing its two round-trip latency.

These breakdown results show that \sys{}'s overhead comes primarily from the cryptographic operations required for verifying client requests, followed by the transactional key-value store, rather than the consensus protocol or the mechanisms specific to providing individual accountability. 


\section{Related work}
\label{sec:rel_work}

\mypar{Permissioned ledgers} Many permissioned ledger systems~\cite{besu:online, Quorum, libra,hyperledger_fabric} rely on BFT consensus protocols to order transactions. Hyperledger Besu~\cite{besu:online} and Quorum~\cite{Quorum} use variants of PBFT~\cite{moniz2020istanbul, saltini2019ibft}, which do not retain proof of a replica's operations, and therefore cannot assign blame. Diem~\cite{libra} uses the DiemBFT~\cite{librabft} consensus protocol, which is based on HotStuff~\cite{hotstuff} and also lacks accountability features.

The \sys prototype is built on top of CCF~\cite{russinovich2019ccf}, an open source~\cite{microsof87:online} distributed ledger framework deployed in the Azure cloud~\cite{Confiden79:online}, which utilizes trusted execution environments~(TEEs)~\cite{costan2016intel, kaplan2016amd} to harden replicas. Russinovich \etal~\cite{russinovich2019ccf} describe CCF's programming model, receipts, governance, and replication protocols. CCF produces hardware attestation reports for the code running on each replica and adds them to the ledger. The ledger is signed by the CCF service and in the process binds CCF's public key to the code and hardware platform. While CCF enables auditing and can recover a ledger when all replicas crash, it relies on the security of TEEs, and its auditing does not guarantee individual accountability.

\myparr{Byzantine consensus}~\cite{pbft, aardvark, kotla2007zyzzyva} distributes trust. Recent work on BFT protocols has focused on improving guarantees~\cite{crain2018dbft, miller2016honey, asayag2018fair} or performance for particular use cases~\cite{stathakopoulou2019mir, zhang2020byzantine}. SBFT~\cite{sbft} and HotStuff~\cite{hotstuff} scale to hundreds of replicas using threshold cryptography, which prevents blame assignment. For permissioned ledgers, scaling to many replicas without growing the consortium size does not improve trustworthiness, and consortia typically cannot grow arbitrarily. 

Other work has explored misbehavior and its impact on Byzantine consensus.
BFT2F~\cite{bft2f} formalizes safety and liveness guarantees after more than $f$~replicas are compromised. It provides PBFT's guarantees with up to $f$~failures and provides \emph{fork*} consistency with up to $2f$ failures. For permissioned ledgers, \emph{fork*} consistency is not sufficient, because it is susceptible to double-spending attacks.

Depot~\cite{depot} issues proofs-of-misbehavior after observing misbehavior, but it adopts eventual consistency, which is incompatible with permissioned ledgers. Pomp\=e~\cite{zhang2020byzantine} prevents dishonest primaries from controlling the ordering of requests.  It does not address scenarios in which there are more than $f$ dishonest replicas though.

\mypar{Accountability} PeerReview~\cite{peerreview} ensures that distributed nodes remain accountable for their actions. As shown in \S\ref{sec:evaluation:throughput-lat}, PeerReview incurs a high overhead when applied to a permissioned ledger. In contrast, \sys introduces mechanisms specific to BFT state machine replication, such as a shared ledger with a Merkle tree, to improve both regular transaction execution and auditing. 

Accountable virtual machines~\cite{avm} carries out auditing through \emph{spot checking} of checkpoints, but has the same performance overheads as PeerReview for ledgers. SNP~\cite{snp} is a networking-specific implementation of accountability, offering provenance for routing decisions. Such specializations improve performance in particular domains, but are not directly applicable to permissioned ledgers.

BAR~\cite{bar} and Prosecutor~\cite{zhang2021prosecutor} incentivize replicas to act honestly by having honest replicas penalize misbehavior. This weaker model allows BAR to tolerate more than $1/3$ faulty replicas, while Prosecutor uses these incentives to improve performance. If these incentives fail~\cite{KryptonR56:online}, however, replicas share the blame.

Accountability with more than $f$$+$$1$ misbehaving replicas has been discussed before~\cite{buchman2016tendermint, buterin2017casper, herlihy2016blockchains}. BFT Protocol Forensics~\cite{kannan2020bft} and Polygraph~\cite{polygraph} propose a ledger auditing mechanism, but assume that fewer than $N$$-$$f$ replicas misbehave. They also do not support changing replica sets. ZLB~\cite{ranchal2020blockchain} and Tendermint~\cite{buchman2016tendermint} support changes to the replica set but also assume 
that fewer than $N$$-$$f$ replicas misbehave.



\section{Conclusions}
\label{sec:concl}

In permissioned ledger systems, individual accountability is a strong disincentive for misbehavior. \sys provides the evidence required to prove that $f$$+$$1$ or more replicas misbehaved when clients observe safety violations (even if all replicas fail). It offers strong consistency and security properties while providing state-of-the-art performance compared to existing ledger systems with weaker security guarantees. \sys achieves this by integrating evidence collection for assigning blame with a novel ledger-based BFT consensus algorithm.


\label{sec:acknowledgements}

\mypar{Acknowledgements} We thank our shepherd, Xiaowei Yang, and the anonymous reviewers for their valuable feedback.


\bibliographystyle{plain}
\bibliography{paper}

\appendix

\renewcommand{\tinyskip}{\vspace{3pt}}


\section{Proof of \LPBFT linearizability}
\label{proof:lpbft}

We present a correctness proof for \LPBFT. In particular, we show that \emph{early execution} (\Cref{lemma:early-execution}) and the \emph{nonce commitment} scheme (\Cref{lemma:nonce-commitment}) are equivalent to their counterpart features in PBFT. In \Cref{lemma:linearizability}, we show linearizability of \LPBFT.

\begin{lemma}[Rollback] \label{lemma:rollback}
  Any honest \LPBFT replica can roll back a suffix of the sequence of previously executed transaction batches.
\end{lemma}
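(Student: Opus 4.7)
The plan is to decompose the replica's state into its persistent components and exhibit, for each, an inverse operation; then compose these inverses in reverse order of execution to restore a valid prior state. First I would enumerate the state touched by executing a single batch at sequence number $s$ in view $v$, as prescribed by Alg.~\ref{alg:lpbft}: the key-value store $kv$, the ledger $\mathcal{L}$ with its Merkle tree $M$, the message store $\mathcal{M}$, the nonce store $\mathcal{K}$, the pending-request set $\mathcal{T}$, and the scalar counters $(v,s,\mathsf{ready})$. All of these reside in local non-volatile storage under the replica's exclusive control, so reverting them is a purely local operation.

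Next I would establish a per-component inverse. For $kv$, rollback at transaction granularity is assumed by the system model (Sec.~2), so the write-sets of the batch's transactions can be undone in reverse execution order. For $\mathcal{L}$, although it is logically append-only, its physical tail can be truncated to the length that preceded the batch's entries ($pp$, the evidence $\mathcal{P}_{s-P}\|\mathcal{K}_{s-P}$ added by this batch, and the leaves of $G$). For $M$, since the replica either retains a snapshot of the pre-batch root and frontier or can recompute it from the retained ledger prefix (the same mechanism used by checkpoints in Sec.~3.4), $M$ can be reset to its pre-batch state. The entries added to $\mathcal{M}$, $\mathcal{K}$, and $\mathcal{T}$ during this batch are identifiable by $(v,s)$ and by the requests contained in $\mathcal{B}$, so they can be removed; $s$ is decremented and $\mathsf{ready}$ restored as needed by the caller.

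To extend from a single batch to a suffix of batches, I would iterate the per-batch inverse from the most recently executed batch backwards, observing that the per-component inverses commute correctly when applied in this order: each batch's additions to $\mathcal{L}$, $M$, $kv$, and $\mathcal{M}$ are disjoint from those of batches that prepared earlier, and the evidence $\mathcal{P}_{s-P}\|\mathcal{K}_{s-P}$ added within batch $s$ is removed as part of unwinding batch $s$ itself, never before the evidence it justifies.

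The main obstacle is being precise about what ``roll back'' means in the face of already-sent messages: a replica cannot unsend a \msg{prepare} or \msg{commit}. The lemma, however, concerns only local state; it is the caller's responsibility to invoke rollback in a context where no safety invariant is broken by subsequent protocol steps, namely when a non-deterministic divergence is detected at line~\ref{alg:lpbft:receivePrePrepare:non-det} of Alg.~\ref{alg:lpbft} (where the replica has not yet emitted its \msg{prepare}) or during a view change in Alg.~\ref{alg:lpbft-view-change}, in which the batches being rolled back are either re-proposed in the new view or superseded by it. With that caveat, the component-wise inverses above suffice to establish the claim.
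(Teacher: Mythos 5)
Your overall strategy---enumerate the components of replica state and give a per-component inverse, composed in reverse order---is exactly the paper's, and your treatment of $kv$ (transaction-granularity undo log), $\mathcal{L}$ (truncate the file to the pre-batch length), and $\mathcal{T}$ (entries are recoverable by retransmission) matches the paper's proof. For the Merkle tree the paper argues structurally---nodes are only ever appended as the right-most leaf and only the right-most leaf is ever deleted, so the batch's nodes can be popped in reverse order---whereas you appeal to a retained snapshot of the root and frontier; both work, but the paper's argument needs no extra retained state.

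The one substantive divergence is your handling of $\mathcal{M}$ and $\mathcal{K}$, and here your proposal is the weaker one. You propose to \emph{delete} the entries indexed by $(v,s)$; the paper deliberately leaves both stores untouched, observing that every item is keyed by view and sequence number and that an honest replica never emits more than one item of a given type for the same $(v,s)$. That invariant is load-bearing: the precondition $\mathcal{K}[v,s] = \mathsf{nil}$ in \textsf{receivePrePrepare} is what stops a replica from signing a second \msg{prepare} for the same view and sequence number, and Lemma~\ref{lemma:incompatibility} assigns blame precisely on the premise that correct replicas never sign contradictory \msg{pre-prepare}/\msg{prepare} messages in the same view. If rollback cleared $\mathcal{K}[v,s]$, an honest replica that had already sent a \msg{prepare} could later accept a different \msg{pre-prepare} at the same $(v,s)$, equivocate, and become blameable---your own caveat that one ``cannot unsend a prepare'' identifies the symptom but your inverse removes the local record that prevents the disease. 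The fix is simply to adopt the paper's position: rollback restores $kv$, $M$, and $\mathcal{L}$, and leaves $\mathcal{M}$ and $\mathcal{K}$ alone because their contents are view- and sequence-number-scoped and harmless (indeed necessary) to retain.
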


\begin{proof}
  \LPBFT's state is distributed across several entities: a key-value store~$\mathit{kv}$; a Merkle tree~$M$; a ledger~${\mathcal{L}}$; a set of requests waiting to be ordered ${\mathcal{T}}$; a message store~${\mathcal{M}}$; and a nonce store~$\mathcal{K}$. Therefore, to roll back a batch of transactions, it must be possible to roll back all of these entities.

  \mypar{Key-value store~$kv$} The key-value store maintains a roll back transaction log. This enables transactions to be rolled back at a single transaction granularity. Thus, the last executed batch of transactions can be rolled back.

  \mypar{Merkle tree~$M$} When a new node is added to \LPBFT's Merkle tree, it becomes the right-most leaf of the tree. The value of a node in the tree is never updated, and a node can only be deleted if it is the right-most node in the tree. Thus, during roll back, it is possible to remove the nodes from the right of the tree that represent the last batch of executed transactions (in reverse order).

  \mypar{Ledger~$\mathcal{L}$} The ledger is represented by a file written to the disk by each replica. \LPBFT stores the index of all entries written to the ledger. To roll back the last executed batch, a \LPBFT replica truncates the ledger file to just before the first entry of the batch.

  \mypar{Transaction store~$\mathcal{T}$} It is not necessary to undo changes to the transaction store. Transaction requests that are removed can be retransmitted by the client or other replicas if needed.

  \mypar{Message store~$\mathcal{M}$, nonce store~$\mathcal{K}$} All items in the transaction and nonce stores are indexed by sequence number and view. Since roll back occurs only during a view change, and each item is associated with a view, it is not necessary to modify the message and nonce stores, because honest replicas never send more than one item of a given type for the same sequence number and view. 

  \tinyskip

  Therefore, it is possible to roll back a suffix of the sequence of transaction batches executed by \LPBFT replicas.

\end{proof}

\begin{lemma}[Early execution] \label{lemma:early-execution}
  \LPBFT's early execution and PBFT execution agree on all committed transactions.
\end{lemma}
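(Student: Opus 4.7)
The plan is to show that, on every committed transaction, the result an \LPBFT replica writes to its key-value store coincides with the result the same transaction would produce in a standard PBFT execution of the same sequence of batches. I would formalize ``agreement'' as follows: let $B_s$ denote the batch that commits at sequence number $s$ in \LPBFT (if any), and let $\mathit{kv}^{\mathrm{L}}_s$ and $\mathit{kv}^{\mathrm{P}}_s$ denote the key-value states obtained by applying $B_1,\ldots,B_s$ under \LPBFT's early-execution rule and PBFT's post-commit execution rule, respectively. The claim reduces to $\mathit{kv}^{\mathrm{L}}_s = \mathit{kv}^{\mathrm{P}}_s$ at every $s$ at which a batch commits, and identical per-request results $\langle t,i,o\rangle$ along the way.

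I would argue this by induction on $s$. In the base case, both executions start from the same genesis state. For the inductive step, suppose agreement holds through $s-1$. At $s$, \LPBFT's \msg{pre-prepare} carries the Merkle roots $\bar{M}$ and $\bar{G}$ that commit, respectively, to the ledger prefix and to the sequence of $\langle t,i,o\rangle$ tuples produced by early execution of $B_s$. By the precondition in \Cref{alg:lpbft} line~\ref{alg:lpbft:receivePrePrepare:non-det}, no honest backup emits a \msg{prepare} unless its own early execution of $B_s$ on top of $\mathit{kv}^{\mathrm{L}}_{s-1}$ reproduces $\bar{G}$ exactly; and, by the inductive hypothesis, $\mathit{kv}^{\mathrm{L}}_{s-1} = \mathit{kv}^{\mathrm{P}}_{s-1}$. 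Since PBFT would execute the same ordered $B_s$ against the same starting state, it obtains the same tuples and the same resulting $\mathit{kv}$ as \LPBFT, establishing agreement at $s$.

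The remaining concern is executions that occur but never commit in \LPBFT -- specifically, batches that are early-executed but later discarded during a view change or rejected because of a Merkle-root mismatch. Here I would invoke \Cref{lemma:rollback}: any such suffix can be rolled back in $\mathit{kv}$, $M$, and $\mathcal{L}$, so the replica's state prior to the next accepted \msg{pre-prepare} is exactly what it would have been had the rejected batch never been executed. Consequently, extra early executions are invisible from the standpoint of committed state and do not violate agreement with PBFT.

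The main obstacle is the view-change case: one must show that when a batch $B_s$ committed in some view $v$ is re-proposed in a later view $v'$, the re-execution reproduces the same $\langle t,i,o \rangle$ tuples. This follows because the new primary, by the rules of \Cref{alg:lpbft-view-change}, selects the \msg{pre-prepare} with the highest view among those that prepared at sequence number $s$ and re-issues it; that \msg{pre-prepare} still carries the original $\bar{G}$, and by the rollback argument above each honest replica re-executes $B_s$ against a state equal to $\mathit{kv}^{\mathrm{L}}_{s-1}$. Determinism of the stored procedures (enforced by \LPBFT's agreement on non-deterministic inputs, also line~\ref{alg:lpbft:receivePrePrepare:non-det}) then forces the re-executed tuples to match $\bar{G}$, closing the induction.
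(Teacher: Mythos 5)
Your proposal is correct and rests on the same essential ingredients as the paper's proof: the primary determines the batch order identically in both protocols, uncommitted early executions are erased via the rollback lemma (\Cref{lemma:rollback}), and determinism of execution (enforced by the $\bar{G}$/$\bar{M}$ check at line~\ref{alg:lpbft:receivePrePrepare:non-det}) guarantees that committed batches yield the same results. The paper's own proof is a much terser sketch that additionally notes the client-side view (replies are sent only after prepare and clients wait for $N{-}f$ matching replies, so results are only released post-commit); your explicit induction on sequence numbers and your treatment of re-execution after view changes supply detail the paper leaves implicit, but do not constitute a different route.
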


\begin{proof}
  In both PBFT and \LPBFT, the primary determines the order of request execution by ordering requests into batches and assigning numbers to batches in \msg{pre-prepare} messages.
  In PBFT, requests are executed after commit and clients only accept results after transactions commit.
  In \LPBFT, requests are executed earlier, before the request even prepare, but the replicas only reply to clients after they prepare the requests and clients wait for matching replies from $N$$-$$f$ replicas.
  This ensures that they only obtain the transaction results after they commit as in PBFT.

  As in PBFT, a faulty primary may cause requests for which pre-prepares are sent not to commit. \LPBFT deals with this case by rolling back early execution (see \Cref{lemma:rollback}).
\end{proof}

\begin{lemma}[Nonce commitment] \label{lemma:nonce-commitment}
  The nonce commitment scheme is equivalent to replicas signing commit messages.
\end{lemma}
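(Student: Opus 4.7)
The plan is to establish that the pair consisting of a signed \msg{pre-prepare} or \msg{prepare} that embeds $H({\mathcal K}[v,s])$, together with the later revelation of the nonce ${\mathcal K}[v,s]$ itself, carries exactly the third-party evidentiary force of a signed \msg{commit} message $\langle \mathsf{commit}, v, s, r \rangle_{\sigma_r}$. I would interpret the lemma's ``equivalence'' in this evidentiary sense: both schemes allow any observer to conclude that replica~$r$ committed the batch at $(v, s)$, and neither scheme lets a third party fabricate such a conclusion against a correct replica.

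First I would spell out what a signed \msg{commit} supplies in classical PBFT: a non-repudiable statement tying $r$'s identity to $(v, s)$ and, through the referenced \msg{pre-prepare}, to the batch contents. The purpose of the nonce scheme is to reproduce exactly this binding using one fewer signature per batch.

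Next I would argue the forward direction. The signed \msg{pre-prepare} or \msg{prepare} already commits~$r$ to $(v, s)$ via $\sigma_r$; embedding $H({\mathcal K}[v,s])$ inside the signature extends the commitment to cover any future exhibition of a preimage. Because \textsf{createNonce} yields a fresh unpredictable value and $H$ is preimage-resistant---both assumed in the threat model---no party other than~$r$ can exhibit such a preimage until~$r$ chooses to release it. \Cref{alg:lpbft} shows that a correct replica releases ${\mathcal K}[v,s]$ only inside \textsf{batchPrepared}, i.e., after preparing locally, which is precisely where PBFT would have emitted a signed \msg{commit}. A verifier given the signed \msg{pre-prepare}/\msg{prepare} and the revealed nonce recomputes $H({\mathcal K}[v,s])$, checks it against the signed hash, and checks~$\sigma_r$, obtaining the same assurance as from a signed \msg{commit}. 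The converse direction is immediate: any signed \msg{commit} maps to the analogous signed \msg{prepare}/\msg{pre-prepare} plus the nonce carried in the \msg{commit} payload.

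The main obstacle will be closing out attacks that the signature-only scheme does not admit. I would explicitly address: (a)~equivocation in which~$r$ attempts to reuse a nonce across distinct $(v, s)$ pairs---ruled out because $H({\mathcal K}[v,s])$ sits inside a signed message that names $v$ and $s$; (b)~a third party observing ${\mathcal K}[v,s]$ in transit and trying to forge evidence against another replica~$r'$---harmless, since the evidence derives its force from $\sigma_r$ rather than from knowledge of the nonce alone; and (c)~grinding on $H$ or predicting \textsf{createNonce} output, both excluded by the cryptographic assumptions of the threat model. Once these cases are dispatched, the lemma reduces to preimage resistance of~$H$ together with the standard unforgeability of~$\sigma_r$.
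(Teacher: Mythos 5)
Your proposal is correct and follows essentially the same route as the paper: both reduce unforgeability of the \msg{commit} evidence to the hash function's (second-)preimage resistance on a fresh, unrevealed random nonce, and unforgeability of the \msg{pre-prepare}/\msg{prepare} evidence to the underlying signature scheme, with the signed $H({\mathcal K}[v,s])$ binding the nonce to $(v,s)$. The paper phrases this more tersely as resistance to existential forgery of an authenticator scheme and explicitly notes the injectivity of appending the nonce hash to the signed payload, but the substance matches your argument, including the points you raise about nonce reuse and third-party observation of the revealed nonce.
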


\begin{proof}
  \LPBFT, like PBFT, signs \msg{pre-prepare} and \msg{prepare} messages. Unlike PBFT, \LPBFT does not sign \msg{commit} messages. Replicas sample a fresh random nonce for each \msg{pre-prepare} or \msg{prepare} message with sequence number~$s$ at view~$v$, and add a hash of this nonce to the signed payloads. Later in the protocol, replicas include the nonce in the \msg{commit} message, instead of an extra signature.

  We show that this provides the same standard cryptographic security as the signature scheme (namely, resistance to existential forgery against chosen-message attacks) as long as the cryptographic hash function is second pre-image resistant on random inputs. Since the addition of a nonce to the signed payloads is injective, a forgery of a \LPBFT authenticator for a \msg{pre-prepare} or \msg{prepare} message yields a forgery against the signature scheme. A forgery of an authenticator for a \msg{commit} message, \ie a value with the same hash as a fresh random nonce that has not yet been revealed, is a second pre-image collision.
\end{proof}

\begin{theorem} \label{lemma:linearizability}
  \LPBFT is linearizable.
\end{theorem}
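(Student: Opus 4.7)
\begin{proof-sketch}
The plan is to reduce linearizability of \LPBFT to that of PBFT, invoking the three preceding lemmas to absorb each of the protocol-level differences. Linearizability requires a single total order of committed requests such that (i) every replica's executed prefix is consistent with this order, (ii) the order respects the real-time dependencies expressed by the client-supplied minimum index~$m_i$, and (iii) each client's observed result is the one produced by executing its request at its position in that order. I will establish these three properties in turn, relying on agreement over sequence numbers as the underlying total order.

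First, I would argue agreement: any batch that commits at sequence number~$s$ in view~$v$ is unique and committed at every correct replica that ever commits at~$s$. This is essentially PBFT's quorum-intersection argument. By \Cref{lemma:nonce-commitment}, the nonce-plus-\msg{prepare} authenticators are cryptographically equivalent to signed \msg{commit} messages, so the usual PBFT quorum argument transfers: any two committing quorums of size $N{-}f$ intersect in at least $f{+}1$ replicas, of which at least one is correct and therefore would not sign contradictory \msg{pre-prepare}/\msg{prepare} pairs for~$(v,s)$. For view changes, I would use the fact that the \msg{new-view} handler in \Cref{alg:lpbft-view-change} picks, for each sequence number, the \msg{pre-prepare} with the largest view among $N{-}f$ \msg{view-change} messages; any batch that committed in a lower view must appear in at least one of those \msg{view-change} messages, so the ordering is preserved across reconfigurations of the primary. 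The added ingredient beyond PBFT is that the \msg{pre-prepare} also binds the Merkle roots~$\bar{M}$ and~$\bar{G}$, but this is a strengthening of the quorum's commitment, not a weakening, so agreement still holds.

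Second, for real-time order, I would observe that a client only finalizes a receipt after $N{-}f$ matching replies, which by the first step means the batch has committed; the signed request carries~$m_i$, and the \textsf{sendPrePrepare} pre-condition forbids a replica from ordering~$t$ before index~$m_i$ appears in the ledger. Third, for result correctness, \Cref{lemma:early-execution} gives that the results delivered by early execution coincide with those of the PBFT-style post-commit execution on the same committed sequence of batches; any divergent speculation is undone by \Cref{lemma:rollback} during a view change, and by construction (line~\ref{alg:lpbft:batchPrepared} of \Cref{alg:lpbft}) no client ever accepts a reply derived from a speculative execution that fails to prepare. Composed together, committed sequence numbers therefore induce a total order that satisfies all three linearizability requirements.

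The main obstacle I expect is case~(iii) of the interaction between early execution, receipts, and view changes: I must rule out the scenario where a client holds a valid receipt for a batch at sequence number~$s$ while the \msg{new-view} for a higher view rewrites the ledger at~$s$. This is where the Merkle-root binding becomes essential. I would show that a valid receipt provides $N{-}f$ matching \msg{prepare}-equivalent authenticators on a \msg{pre-prepare} carrying the root~$\bar{M}$ at~$(v,s)$; any \msg{new-view} that proposes a different batch at~$s$ requires $N{-}f$ \msg{view-change} messages whose $\mathcal{PP}$ sets, taken together with quorum intersection, must contain at least one correct replica that prepared the original batch and would therefore have reported it as last-prepared. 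Hence no correct replica accepts the new \msg{pre-prepare} at~$s$, and the receipt's sequence-number assignment is preserved, closing the argument.
\end{proof-sketch}
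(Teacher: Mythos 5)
Your proposal takes the same route as the paper's own proof: reduce \LPBFT's linearizability to PBFT's by using \Cref{lemma:early-execution} and \Cref{lemma:nonce-commitment} (together with \Cref{lemma:rollback}) to absorb the protocol-level differences, then inherit PBFT's quorum-intersection and view-change safety arguments. The paper's proof is only a two-sentence citation of those lemmas, so your sketch is a correct, more explicit elaboration of the same reduction.
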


\begin{proof}
  \LPBFT changes the PBFT algorithm by adding early execution and the nonce commitment scheme.
  Lemmas \ref{lemma:early-execution} and \ref{lemma:nonce-commitment} show that these preserve the behavior of PBFT. 
\end{proof}


\section{Proof of auditing correctness}
\label{sec:auditing-correctness}

\newcommand{\tio}{\text{tio}}
\newcommand{\rgl}{\rangle}
\newcommand{\lgl}{\langle}
\newcommand{\minindex}[1]{{\color{blue}{#1}}}
\newcommand{\gov}[1]{{\color{purple!50!black}{#1}}}
\newcommand{\revisit}[1]{{\color{red!50!black}{#1}}}
\newcommand{\E}{\mathcal{E}}
\newcommand{\B}{\mathcal{B}}
\newcommand{\C}{\mathcal{C}}
\newcommand{\CP}{\mathit{cp}}
\newcommand{\Fr}{\mathcal{F}}
\newcommand{\M}{\mathcal{E}}
\newcommand{\GL}{\mathcal{N}}
\newcommand{\G}{n}
\newcommand{\R}{\mathcal{R}}
\newcommand{\Sf}{\mathcal{U}}
\newcommand{\pack}{$\Fr, \Sf, \M, \GL, \CP$}
\newcommand{\vc}{\emph{view-change }}
\newcommand{\nv}{\emph{new-view }}
\newcommand{\res}{\textsf{result }}
\newcommand{\vote}{\textsf{vote }}
\newcommand{\nlparagraph}[1]{\paragraph{#1}\mbox{}\\}
\newcommand{\mysubsection}[1]{\subsubsection{#1}}
\newcommand{\mysection}[1]{\subsection{#1}}

\newcommand{\slow}{\text{or slow}}

First, we present the correctness proof for auditing without governance transactions and reconfiguration (\S\ref{sec:auditing-correctness-proof}). Then, we extend the proof to include governance transactions and reconfiguration (\S\ref{sec:auditing-correctness-proof-reconfig}).

\mysection{Correctness of auditing without reconfiguration}
\label{sec:auditing-correctness-proof}

We begin with a description of terminology and notation. In \S\ref{sec:obtain-complete-ledger} and Lemma~\ref{lemma:obtain-complete-ledger}, we then prove that, given a set of receipts, the auditor, with the help of the enforcer, can obtain a ledger package that is complete in relation to the receipts (or assign blame to $f$$+$$1$ misbehaving \slow replicas). A complete ledger package contains all evidence that is necessary for the auditor to assign blame to misbehaving replicas if the receipts reflect any linearizability violation. In \S\ref{sec:incompatibility} and Lemma~\ref{lemma:incompatibility}, we show that, if a receipt does not appear correctly in a ledger package that is complete in relation to it, the auditor can assign blame to at least $f$$+$$1$ misbehaving replicas. In $\S$\ref{sec:violations} and Lemma~\ref{lemma:serializability-violations}, using the previous lemmas, we first prove that the auditor can assign blame correctly if it is given a set of receipts that reflects a serializability violation. Finally, Theorem~\ref{lemma:linearizability-violations} proves that, if a set of receipts reflects any linearizability violation, the auditor can assign blame to $f$$+$$1$ misbehaving \slow replicas.

\mypar{Minimum ledger index} Each client transaction request includes a field that specifies the minimum ledger index that it can be executed at. Correct replicas do not order a transaction~$t$ at ledger index~$i$, unless $i\geq m_i$ where $m_i$ is the minimum index value of $t$. Correct clients set the minimum index of a transaction to at least $M_i$$+$$1$ where $M_i$ is the largest value of the ledger index that they know of from the receipts that they have collected. The minimum index value is used to capture transaction dependencies efficiently and to reduce the amount of information that needs to be stored and transmitted to audit linearizability violations.

\mypar{Ledger well-formedness and validity} A ledger fragment is \emph{valid} if it can be produced by a sequence of correct primaries when there are at most $f$~misbehaving replicas.

A ledger fragment is \emph{well-formed} if either (i)~it is valid, or (ii)~it would be valid if not for the incorrect execution of one or more transactions, one or more incorrect checkpoint digests, or one or more invalid signatures or nonces.

A well-formed ledger matches the structural specifications of the \LPBFT protocol, \ie

\begin{itemize}

\item it specifies a serial ordering of transactions/entries, which respects their minimum ledger indices; and \item it includes evidence, and checkpoints at the required places.

\end{itemize}

A valid ledger is always well-formed, but a well-formed ledger can be invalid. A correct replica will never have a malformed ledger fragment, because replicas check the well-formedness of ledgers that they fetch. A correct replica may have an invalid ledger fragment. A ledger fragment can be well-formed but invalid only if there have at some point existed more than $N$$-$$f$$-$$1$ misbehaving replicas.

\mypar{Notation} Given a receipt $\langle \langle t_j, i_j, o_j \rangle, x_j \rangle$, we denote $\langle t_j, i_j, o_j \rangle$ by $\tio_j$. Unless explicitly defined otherwise, $s_j$ refers to the sequence number in $x_j$ of the receipt $\langle \tio_j, x_j \rangle$.

We say that a replica has ``signed a receipt'' if its signature is recorded in the receipt in the \msg{pre-prepare}/\msg{prepare} signatures' fields ($\sigma_p$ or in $\sum_s$).

\mypar{Receipt validity} A receipt is \emph{valid} if it is verifiable by \Cref{alg:receipt}.

\mypar{Preparement evidence for a batch} The \emph{preparement} evidence for a batch is $N$$-$$f$ signed \msg{pre-prepare}/\msg{prepare} messages for the batch, \ie $\mathcal{P}$ in \S\ref{sec:receipts}.

\mypar{Checkpoint sequence numbers} Let $\langle \tio_j, x_j \rangle$ be a valid receipt, $d_{C_j}$ be the checkpoint digest in $x_j$, and $C$ be the checkpoint interval. Anyone can calculate the sequence number at which the digest of the checkpoint is expected to be equal to $d_{C_j}$  as follows: checkpoints are always taken at sequence numbers that are multiples of $C$ and the digest in the receipt refers to the digest at the sequence number of the penultimate checkpoint transaction before $s_j$ (except the first $C$ transactions, which have the digest at genesis). So given $s_j$, the sequence number with the corresponding checkpoint digest, $s_\mathit{cp}$, can be calculated as

\begin{equation*}
  s_{\mathit{cp}} = 
  \begin{cases}
    0 &\text{if $s_j < C$}\\
    C \left( \lceil\frac{s_j}{C}\rceil-2 \right) &\text{otherwise.}
  \end{cases}
\end{equation*}

Note that the value of the digest itself is recorded in the last checkpoint transaction before $s_j$ (except the first $C$~transactions), \ie the checkpoint transaction that follows the one at $s_{\mathit{cp}}$. That checkpoint transaction is at

\begin{equation*}
  \begin{cases}
    0 &\text{if $s_j < C$}\\
    s_\mathit{cp} + C  &\text{otherwise.}
  \end{cases}
\end{equation*}

We assume that the genesis transaction~$\mathit{gt}$ is at sequence number~$0$.

\mypar{Fetching checkpoints}
Slow replicas can be brought up to date by fetching checkpoints and ledger fragments. When a correct replica fetches a checkpoint at sequence number $s$, it retrieves the ledger up to $s+C+P$. It first verifies the signatures in the evidence for the checkpoint transactions at $s$ and $s+P$. Note that the replicas that signed the checkpoint transaction at $s$ vouch for the validity of the ledger fragment between $s-C$ and $s$, whereas the replicas that signed the checkpoint transaction at $s+C$ vouch for the digest of the checkpoint at $s$. 

A correct replica, then, verifies that the digest of the checkpoint that it fetched matches the value recorded at $s+C$. It also checks, for each checkpoint transaction at sequence number $s'$ in the ledger, that the ledger's Merkle root at $s'$ matches the root in the evidence for the transaction at $s'$. Finally, the replica replays the ledger fragment between $s+1$ and $s+C$.

As noted previously, a correct replica may have a well-formed ledger fragment that includes invalid signatures as replicas do not verify all signatures in the ledger fragments that they fetch. Therefore, when contacted for an audit, a correct replica never returns a ledger fragment that it fetched with a checkpoint at sequence number $s$, without including the checkpoint transaction at $s+C$ and the evidence for that transaction.

\mysubsection{Obtaining the ledger}
\label{sec:obtain-complete-ledger}

\mypar{Ledger package} A \emph{ledger package} from a replica consists of one to four components:

\begin{enumerate}
    \item a ledger fragment~$\Fr$ that contains entries that locally prepared at the replica;
    \item an optional \emph{suffix}~$\Sf$ that contains entries that were preprepared atomically after a view-change but not yet prepared at the replica;
    \item an optional \emph{message box}~$\M$ that contains some of the messages from the replica's message box~$\mathcal{M}$; and 
    \item an optional \emph{checkpoint}~$\CP$.
\end{enumerate}

\mypar{Complete ledger package} Let $\R$ be a set of valid receipts; $s_{\max}$ be the maximum sequence number in $\R$; $s_{\min}$ be the sequence number of the checkpoint whose digest is expected to equal the checkpoint digest in the receipt with the smallest sequence number in $\R$ ($s_{\min}$ can be calculated as described in the previous section); $v_{\min}$ and $v_{\max}$ be the minimum and maximum view numbers in the receipts in $\R$, respectively.

A ledger package is \emph{complete} in relation to $\R$ if all of the following are true:

\begin{itemize}
    \item $\Fr + \Sf$ is well-formed;
    \item if $s_{\min} = 0$, $\CP$ contains the checkpoint at genesis (empty); otherwise, the digest of $\CP$ is equal to the one in the second checkpoint transaction in $\Fr+\Sf$;
    \item $\Fr$ includes at least one set of \vc and \nv messages for a view less than or equal to $v_{\min} + 1$ ($v_{\min}$ requirement), and one set of \vc and \nv messages for a view greater than or equal to $v_{\max}$ ($v_{\max}$ requirement);
    \item All signatures in $\Fr + \Sf$ and $\M$ are valid.
\end{itemize} 

and one of the following is true:

\begin{itemize}
    \item $\Fr$ includes entries between $s_{\min}$ and $s_{\max}+P$;
    \item $\Fr$ includes entries between $s_{\min}$ and  $s_{\max}+c$ where $c\in[0,P)$. $\M$ contains $P-c$ valid preparement evidence for entries from $s_{\max}-c$ to $s_{\max}$; or
    \item $\Fr$ includes entries between $s_{\min}$ and $e=\max(s_{\min}, s_{\max}-c)$ where $c\in[1,P]$. $\M$ contains valid preparement evidence for entries from $\max(s_{\min},e-P)$ to $e$.  
    The suffix $\Sf$ contains entries between $e+1$ and $s_{\max}$ that are preprepared but not prepared in some view $v' \geq v_{\max}$ and $\M$ contains preparement evidence from a view $<v'$ for entries between $e+1$ and $s_{\max}$.
\end{itemize}

\begin{lemma}[Obtaining a complete ledger package]\label{lemma:obtain-complete-ledger}
    Given a set of valid receipts $\R$, an auditor can either obtain a ledger package that is complete in relation to $\R$, or assign blame to at least $f+1$ misbehaving \slow replicas.
  \end{lemma}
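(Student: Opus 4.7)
The plan is a pigeonhole argument over the $N-f$ replicas that signed the latest-view receipt in $\R$ (largest $v_{\max}$, breaking ties by the largest $s_{\max}$). Through the enforcer, the auditor requests a ledger package from each of these replicas within the agreed synchrony deadline. A reply is rejected if $\Fr + \Sf$ is not well-formed, the required \vc/\nv entries for the $v_{\min}$ or $v_{\max}$ requirement are missing, the checkpoint digest of $\CP$ disagrees with the second checkpoint transaction recorded in $\Fr$, or any signature in the package fails to verify; in the last case blame is placed on the purported signer rather than on the sender. Non-responses past the deadline similarly populate the blame set, invoking the weak synchrony assumption. If the blame set ever reaches $f+1$ the lemma is discharged; otherwise at most $f$ of the $N-f$ queries are rejected, so at least one correct respondent remains, and it suffices to show that its reply is a complete ledger package in relation to $\R$.

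Fix such a correct respondent $r$. Because $r$ signed the receipt at $(v_{\max}, s_{\max})$, $r$ has already entered view $v_{\max}$ and its ledger records \vc and \nv entries for every intermediate view, covering both the $v_{\min}$ and $v_{\max}$ requirements. Because the preparement evidence for $s_{\max}-P$ is embedded inside the \msg{pre-prepare} at $s_{\max}$ that $r$ honoured, $r$'s ledger is populated at least through $s_{\max}-P$ together with matching commit evidence. For the trailing pipeline window between $s_{\max}-P+1$ and $s_{\max}$, a case analysis against the three disjuncts in the completeness definition exactly matches the possibilities: tail batches that have prepared at $r$ live in $\Fr$, outstanding \msg{prepare} messages still in $r$'s local message box $\mathcal{M}$ are exported as $\M$, and batches that an intervening view change after $v_{\max}$ re-preprepared in some $v' \ge v_{\max}$ populate $\Sf$. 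For the checkpoint, when $s_{\min}=0$, $\CP$ is the empty initial state; otherwise $r$ either holds $\CP$ locally or, following the ``Fetching checkpoints'' procedure, retrieves it and re-verifies its digest against the checkpoint transaction at $s_{\min}+C$ in $\Fr$.

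The hardest step is dealing with signatures that a correct replica may have inherited from a byzantine source during a past checkpoint fetch: a correct replica only re-verifies checkpoint Merkle roots, not every embedded signature, so its ledger can be well-formed yet contain individual signatures that no longer verify. Naively this would reject even a correct $r$'s reply. The resolution is that any invalid signature in a well-formed ledger uniquely incriminates its purported signer, since forgery is ruled out by the cryptographic assumption, so the auditor places those signers directly into the blame set instead of penalising the sender. If $f+1$ distinct such signers accumulate the lemma is discharged; otherwise at most $f$ byzantines are implicated, and $r$ can substitute the offending preparement evidence with alternative evidence drawn from the remaining non-blamed quorum signers (refetching from other respondents when required), eventually producing a package with all signatures valid and satisfying every clause of the completeness definition.
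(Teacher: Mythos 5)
Your overall skeleton (query the $N-f$ signers of the highest-view receipt through the enforcer, use the synchrony deadline for non-responders, and pigeonhole down to a correct respondent) matches the paper's proof. But there are two genuine gaps.

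First, you assert that a correct respondent's reply is always a complete ledger package, and in particular that its ledger ``is populated at least through $s_{\max}-P$'' because it signed the receipt at $(v_{\max}, s_{\max})$. This conflates two different receipts: the queried replicas signed $R_{v\max}$ (highest view, then highest sequence number \emph{among the highest-view receipts}), whereas $s_{\max}$ is the global maximum sequence number over all of $\R$ and may come from a receipt $R_{s\max}$ at a \emph{lower} view. In the accountability regime more than $f$ replicas may be faulty, so a batch that prepared at $N-f$ replicas at $(v_{s\max}, s_{\max})$ can be silently dropped by the \emph{view-change} quorum for a later view; a correct replica that signed $R_{v\max}$ then simply does not have, and cannot obtain, entries up to $s_{\max}$, so no complete package exists from it. The lemma is still true, but only because of an escape hatch your proof lacks: the correct replica returns the \emph{view-change} and \emph{new-view} messages for a view $v' > v_{s\max}$ in which none of the $N-f$ \emph{view-change} messages reports a \msg{pre-prepare} at $s_{\max}$, and the auditor blames the $\geq f+1$ replicas in the intersection of the signers of $R_{s\max}$ and the senders of those \emph{view-change} messages. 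Without this case the claim ``it suffices to show that its reply is a complete ledger package'' is false.

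Second, your treatment of invalid signatures is unsound. An invalid signature is a byte string that fails verification under the purported signer's key; it carries no evidence that the purported signer produced it, so ``uniquely incriminates its purported signer'' is backwards --- the cryptographic assumption rules out forging \emph{valid} signatures, not writing down invalid ones, and your rule would let an adversary frame correct replicas by planting garbage signatures attributed to them. The paper instead blames the $N-f$ replicas that signed the first subsequent checkpoint transaction whose evidence is clean (they vouched for the intervening ledger fragment when they checkpointed it), and blames the responder itself if no such checkpoint transaction is included or if the invalid signature sits in $\Sf$ or $\M$ (which a correct replica generates itself rather than inherits). Your fallback of ``substituting'' the offending evidence with signatures from other quorum members also does not work, since the bad signatures may lie anywhere in the historical fragment the correct replica fetched, not just in the preparement evidence it is free to choose.
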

  
\begin{proof}
  Select from the receipts in $\R$, the receipts with the highest view number $v_{\max}$. Then, from those receipts select the receipts with the highest sequence number. Finally, among those, let $R_{v\max}$ be the receipt with the highest index number. (We assume there is no tie; otherwise, the auditor assigns blame to the replicas that signed both tied receipts.)

    The enforcer asks all replicas that signed $R_{v\max}$ for a ledger package that is complete in relation to $\R$. We assume that correct replicas or members respond to the enforcer before the agreed deadline. Once the enforcer has responses from $f+1$ replicas, it relays the responses to the auditor; otherwise at the deadline, the enforcer assigns blame to at least $f+1$ misbehaving \slow replicas.
        
    We show that a correct replica can either respond with: a ledger package that is complete in relation to $\R$ or a ledger package with which the auditor can assign blame to $f+1$ misbehaving replicas. Therefore, after checking $f+1$ responses, the auditor either finds a complete ledger package, or assigns blame to $f+1$ misbehaving replicas.

    Note that a correct replica that is contacted by the enforcer can always satisfy the first three conditions of completeness: (1)~correct replicas always maintain well-formed ledgers and they record/can recalculate checkpoints; (2)~the $v_{\min}$ requirement can always trivially be satisfied by including the set of \vc and \nv messages for view $0$ in $\Fr$. In practice, for efficiency, correct replicas would satisfy this requirement by including the set of \vc and \nv messages for some view $v'$, where $v'$ is the latest possible in $[0, v_{\min}+1]$; and (3)~since the replicas that are asked are the replicas that signed $R_{v\max}$, they must have \vc and \nv messages for view $v_{\max}$. Therefore, any replica that returns a ledger package that violates any of the first three conditions can be assigned blame.

    The fourth condition of completeness requires that all signatures and the matching nonces in the ledger package are correct. Let $\langle \Fr, \Sf, \M, \CP \rangle$ be a ledger package returned by a replica. If $\Sf$ or $\M$ contains a message or transaction with an invalid signature, the auditor can assign blame to the replica. $\M$ contains messages from the replica's message box and $\Sf$ contains batches that pre-prepared at the replica. A correct replica never considers a message or pre-prepares a batch that includes an invalid signature. Otherwise, let $s_w$ be a sequence number where there is a transaction or message with an invalid signature. The auditor can look for the first checkpoint transaction that follows $s_w$ that has no invalid signatures in its evidence. If one exists, the auditor can assing blame to all $N-f$ replicas that signed that checkpoint transaction. If no such checkpoint transaction exists, the auditor can assign blame to the responding replica, since a correct replica never returns a ledger fragment that it has fetched with a checkpoint without including the committed checkpoint transaction that records that checkpoint's digest. So given a ledger package from a replica, the auditor can always verify all signatures and nonces in the package or assign blame to the responding replica or $N-f$ misbehaving replicas. So below, for brevity, we can assume that the ledger package that a replica returns has no invalid signatures or nonces.

    Additionally, for a correct replica that is contacted by the enforcer, one of the following must hold:

    \begin{itemize}  
        \item \textbf{The correct replica has locally prepared entries up to at least $s_{\max}$:} In this case, the replica can form a complete ledger package that includes either: 
        \begin{enumerate}[label=(\roman*)]
            \item a well-formed ledger fragment $\Fr$ that contains entries from $s_{\min}$ to $s_{\max} + P$; or
            \item a well-formed $\Fr$ that contains entries from $s_{\min}$ to $s_{\max}+c$ where $c\in[0,P)$, and $\M$ that contains $P-c$ valid preparement evidence for entries from $s_{\max}-c$ to $s_{\max}$.
        \end{enumerate}

        \item \textbf{The correct replica has not locally prepared entries up to $s_{\max}$ and it has locally prepared entries up to $e=s_{\max}-c$ where $c \geq 1$:} In this case, (1)~a correct replica can include entries between $s_{\min}$ and $e$ in a well-formed ledger fragment $\Fr$, and it can include the necessary preparement evidence in $\M$ (if $s_{\min}\leq e$); and (2)~if the replica has any batches that it has preprepared but not prepared due to a view-change, it can include the related \vc and \nv messages in $\Fr$ and the batches in $\Sf$. Let $p$ be the last sequence number for which there is a batch in $\Fr+\Sf$. If $p>e$, the correct replica can include the preparement evidence for entries between $e+1$ and $p$ in $\M$ as well. A correct replica can form a ledger package as described above. If $p \geq s_{\max}$, the ledger package is complete, and the replica can return it.  
    
        Otherwise, $p<s_{\max}$. Let $R_{s\max}$ be the receipt in $\R$ with the largest sequence number $s_{\max}$ and let $v_{s\max}$ be the view number in $R_{s\max}$. Note that $v_{s\max} \leq v_{\max}$ by definition, and in the correct replicas' ledger, there must exist at least one set of \vc and \nv messages for a view $v' > v_{s\max}$ such that none of the \vc messages include a \msg{pre-prepare} message for any batch at $s_{\max}$. The correct replica can return a ledger package that contains these \vc and \nv messages. The auditor can use the returned ledger package to assign blame to the intersection of replicas that signed $R_{s\max}$ and that sent the set of \vc messages for $v'$, as these replicas have prepared a batch at $s_{\max}$ but did not report it during the view change.

      \end{itemize} 

    Thus, for each of the $f+1$ responses, either the response is complete in relation to $\R$, or the auditor can assign blame to the misbehaving responder, or at least $f+1$ misbehaving replicas.
\end{proof}

By definition of completeness, if a ledger package is complete in relation to a set of valid receipts $\R$, it is complete in relation to any subset of $\R$. 

\mypar{Finding preparement evidence} For a batch at $s_r$, the auditor can find the preparement evidence for the batch as follows:

\begin{itemize}
    \item if $\Fr$ contains an entry at $s_r+P$, it is collected from there;
    \item if $\Fr$ contains the entry at $s_r$ but not at $s_r+P$, it is collected from $\M$; and
    \item if $\Fr$ does not contain an entry at $s_r$ but $\Sf$ contains an entry at $s_r$, it is also collected from $\M$, albeit it is for the same batch from a prior view. 
\end{itemize}

\mysubsection{Incompatibility}
\label{sec:incompatibility}

Let $R=\langle \tio_r,x_r \rangle$ be a valid receipt at sequence number $s_r$. Let $\langle \Fr, \Sf, \M, \CP \rangle$ be a ledger package that is complete in relation to $R$. Let $B_l$ be the batch that is at $s_r$ in $\Fr+\Sf$. $R$ is \emph{incompatible} with $B_l$ if any of the following hold:

\begin{itemize}
    \item $t_r$ does not appear in $B_l$;
    \item it does not appear in the $i_r$th position; or
    \item $o_r$ is different.
\end{itemize}

\begin{lemma}[Receipt-ledger incompatibility]\label{lemma:incompatibility}
    Let $R=\langle \tio_r,x_r \rangle$ be a valid transaction receipt for sequence number $s_r$. Let $\langle \Fr, \Sf, \M, \CP \rangle$ be a ledger package that is complete in relation to $R$. Let $B_l$ be the batch in the package at $s_r$. If $R$ is incompatible with $B_l$, the auditor can assign blame to at least $f+1$ misbehaving replicas.
\end{lemma}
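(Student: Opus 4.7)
The plan is to case-split on $v_l$, the view of $B_l$ in the ledger package, relative to the view $v_r$ inside $R$, and to show that in every case the complete ledger package already contains a second set of $N-f$ replica signatures that asserts something inconsistent with $R$'s claim at $(v_r, s_r)$. Since $N \geq 3f+1$, the intersection of that set with the $N-f$ signers of $R$ has size at least $(N-f)+(N-f)-N = N-2f \geq f+1$, and each intersection replica has signed two mutually contradictory statements and can be blamed.

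In case (i), $v_l = v_r$, I would use the ``finding preparement evidence'' procedure to extract from $\Fr$ or $\M$ the $N-f$ signed \msg{pre-prepare}/\msg{prepare} messages that committed $B_l$ at $(v_r, s_r)$. Each such signature binds the per-batch Merkle root $\bar G$ at $(v_r, s_r)$, and $\bar G$ uniquely determines the ordered list of $\langle t,i,o\rangle$ leaves of the batch. Because $R$ is incompatible with $B_l$, $R$ commits to a different $\bar G$ at the same $(v_r, s_r)$, so the $\geq f+1$ intersection replicas have signed two different $\bar G$ values at one $(v_r,s_r)$ and are demonstrably inconsistent.

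In case (ii), $v_l > v_r$, well-formedness of $\Fr$ together with the presence of $B_l$ at $s_r$ with view $v_l$ forces $\Fr$ to contain a \msg{view-change}/\msg{new-view} transition into some view $v'' \in (v_r, v_l]$ whose reset window reaches back to $s_r$. The $N-f$ backing \msg{view-change} messages cannot each advertise a prepare for $R$'s batch at $(v_r, s_r)$, or the new primary in $v''$ would have been compelled to re-propose $R$'s batch rather than $B_l$. Intersecting their senders with the signers of $R$ supplies the required $f+1$ misbehaving replicas. Case (iii), $v_l < v_r$, is symmetric: with $v_{\max} = v_r$, completeness guarantees that $\Fr$ contains $N-f$ \msg{view-change} messages into some view $v' \geq v_r > v_l$, and by the same argument none of them can report a prepare for $R$'s batch at $(v_r, s_r)$ without forcing the ledger to show view $\geq v_r$ at $s_r$, contradicting $v_l < v_r$.

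The main obstacle I expect is pinning down precisely which \msg{view-change} set in $\Fr$ one intersects with the signers of $R$, in particular ruling out the corner case in which a given transition's reset window $[s_{lp}-P+1, s_{lp}]$ stops short of $s_r$, so that the view-change leaves the entry at $s_r$ untouched and yields no contradiction on its own. The fix is to chain transitions: any such non-covering view-change leaves the prior view at $s_r$ intact, so if the ledger shows $B_l$ with view $v_l$ at $s_r$ while also containing evidence for a view $\geq v_r$, there must exist a transition in $\Fr$ whose reset \emph{does} cover $s_r$, and it is that transition's \msg{view-change} set that supplies the $f+1$ blamed replicas. Making this chain rigorous, and cross-checking that ``reset covers $s_r$'' forces re-proposal of the highest-view prepared batch at $s_r$ by any correct new primary, is the technical crux.
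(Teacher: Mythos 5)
Your overall strategy is the paper's: the same trichotomy on $v_l$ versus $v_r$, the same quorum-intersection count $|\E_r \cap \E_c| \ge (N-f)+(N-f)-N \ge f+1$, and the same reliance on the $v_{\min}$/$v_{\max}$ completeness requirements to guarantee that the needed \msg{view-change} sets are actually present in $\Fr$. Your case (i) matches the paper's essentially verbatim.

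In cases (ii) and (iii), however, the step you lean on is both too weak and not quite right, and it is exactly the step you flag as the unresolved ``technical crux.'' You argue that the $N-f$ \msg{view-change} messages backing the transition ``cannot \emph{each} advertise a prepare for $R$'s batch at $(v_r,s_r)$.'' First, that is insufficient for the intersection count: it only produces at least one omitting sender, whereas blame requires that \emph{every} replica in $\E_r \cap \E_c$ be individually two-faced, i.e., that \emph{zero} of the $N-f$ senders reported $R$'s batch. Second, a single advertisement does not compel the new primary to re-propose $R$'s batch anyway, since the primary re-preprepares the \emph{highest-view} prepared batch reported at $s_r$, so ``not each'' does not even follow from the ledger showing $B_l$. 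The paper closes this gap by a different (and stronger) claim: because a correct replica keeps including a prepared-but-uncommitted batch at $s_r$ in its \msg{view-change} messages until that batch commits or is superseded at $s_r$, there must exist some view $v_c$ in $[v_r+1,v_l]$ (resp.\ $[v_l+1,v_r]$) for which \emph{none} of the $N-f$ \msg{view-change} messages contains a \msg{pre-prepare} for $R$'s batch at $s_r$; completeness then guarantees that this particular set is in $\Fr$, and every member of $\E_r \cap \E_c$ has signed contradictory statements. This invariant also dissolves your ``reset window'' worry: the relevant question is not whether a transition's reset reaches back to $s_r$, but whether the senders reported the batch at $s_r$ at all, and a correct replica that has prepared past $s_r+P$ only stops reporting it because the batch has by then committed. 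Your chaining idea is pointing in the right direction, but without this invariant the argument does not yield the required $f+1$ blamed replicas.
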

\begin{proof}
    The auditor can calculate the set of replicas that signed $B_l$ using the preparement evidence that can be found as described above. These replicas are called $\E_l$. 
    
    Let $\E_r$ be the set of replicas that have signed the receipt. Let $v_r$ be the view number in the receipt and $v_l$ be the view number in the preparement evidence of $B_l$.

    \begin{itemize}  
        \item \textbf{$v_r = v_l$:} Correct replicas never sign  \msg{pre-prepare} or  \msg{prepare} messages for different batches in the same view. Therefore, the auditor can assign blame to the replicas in the intersection of $\E_r$ and $\E_l$, and $|\E_r \cap \E_l| \geq f + 1$.
        
        \item \textbf{$v_l > v_r$:} Correct replicas include the \msg{pre-prepare} messages for the last $P$ prepared batches in their \vc messages until the batches commit or a different batch is prepared at the sequence number. A correct primary always re-preprepares the latest batch that it finds in the set of $N-f$ \vc messages that it receives. Thus, there exists at least one view $v_{c} \in [v_{r}+1, v_l]$ where zero of the $N-f$ \vc messages for $v_c$ contain a \msg{pre-prepare} message for the batch at sequence number $s_r$ that is referenced in $R$. The ledger package is complete in relation to $R$, so $\Fr$ includes at least one set of \vc and \nv messages for a view less than or equal to $v_r+1$ (the $v_{\min}$ requirement). It must also include the set of \vc and \nv messages for $v_c$ as $v_l \geq v_c \geq v_r+1$.
        
        Let $\E_{c}$ be the set of replicas that have sent the \vc messages to the primary for view $v_c$. The auditor can assign blame to the replicas that are in the intersection of $\E_r$ and $\E_c$ and $|\E_r \cap \E_{c}| \geq f + 1$. 
        
        \item \textbf{$v_l < v_r$:} There exists at least one view $v_{c} \in [v_{l}+1, v_r]$ where zero of the $N-f$ \vc messages for $v_c$ contains a \msg{pre-prepare} message for the batch at sequence number $s_r$ that is referenced in $R$. The ledger package is complete in relation to $R$ so $\Fr$ includes at least one set of \vc and \nv messages for a view greater than or equal to $v_r$, so it must include the set of \vc and \nv messages for $v_c$ as $v_l+1 \leq v_c \leq v_r$ (the $v_{\max}$ requirement).
        Similar to previous case afterwards.
    \end{itemize}
\end{proof}

\mysubsection{Violations}
\label{sec:violations}

\mypar{Ordering receipts} Given a set of valid receipts, the auditor can order them lexicographically based on the corresponding (sequence number, index number, view number) tuples. (We can assume that there is no tie; otherwise, the auditor assigns blame to the replicas that signed both tied receipts.)
We say that a receipt $R_1$ is \emph{earlier/later} than a receipt $R_2$, if it is ordered before/after $R_2$ with this scheme, respectively. For example, the earliest receipt in a set of valid receipts is the one with the lowest view number, among those with the lowest index number, among those with the lowest sequence number.

\begin{lemma}[Serializability violations]\label{lemma:serializability-violations}
    Let $\R = \{(\tio_0, x_0), ..., (\tio_k, x_k))\}$ be a set of valid receipts that violates serializability. Then, the auditor can assign blame to at least $f+1$ misbehaving \slow replicas.
\end{lemma}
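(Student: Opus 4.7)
The plan is to reduce the claim to the two preceding lemmas plus a ledger replay. First, I would invoke Lemma~\ref{lemma:obtain-complete-ledger} on $\R$: either the auditor already obtains $f+1$ misbehaving or slow replicas to blame (and we are done), or the auditor obtains a ledger package $\langle \Fr, \Sf, \M, \CP \rangle$ that is complete in relation to $\R$. Second, for each receipt $\langle \tio_j, x_j\rangle \in \R$, I would compare it against the batch at sequence number $s_j$ in $\Fr + \Sf$; if any receipt is incompatible with its corresponding batch, Lemma~\ref{lemma:incompatibility} delivers $f+1$ misbehaving replicas to blame. The remaining case is that every receipt in $\R$ appears exactly at its claimed position in the ledger package.

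In that remaining case, I would use the well-formedness of $\Fr + \Sf$ together with the validity of $\CP$ to argue that the ledger specifies an unambiguous serial execution order (respecting both sequence numbers and the minimum-index dependencies carried by each request). I would then invoke the replay step of \textsf{replayLedger}: load $\CP$ and replay every request from the starting checkpoint sequence number through the last covered sequence number, comparing each produced result with the result recorded in the ledger. Because $\R$ is embedded in the ledger and $\R$ violates serializability, the ledger's own transaction-result sequence cannot coincide with the deterministic replay output; hence there must exist a first batch $B_{s^*}$ at some sequence number $s^*$ whose recorded result $o^*$ disagrees with the replayed result. The auditor then forms the uPoM $\langle s^*, \Fr', \CP\rangle$ of \Cref{alg:auditing-with-receipts} (line~\ref{alg:upom}), where $\Fr'$ is the fragment from $s_{C_0}$ to $s^*$.

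The blame assignment at $s^*$ is immediate: the preparement evidence for $B_{s^*}$ recoverable from $\Fr$, $\Sf$, or $\M$ (as listed in the ``Finding preparement evidence'' paragraph) is signed by at least $N-f \ge f+1$ replicas, all of whom endorsed an execution result inconsistent with a deterministic replay of the same transactions from a verified checkpoint. Any honest replica would have rejected the \msg{pre-prepare} at $s^*$ because its early-execution check (\Cref{alg:lpbft}, line~\ref{alg:lpbft:receivePrePrepare:non-det}) would have produced a different $\bar{G}$, so every signatory of $B_{s^*}$ is provably misbehaving.

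The step I expect to be the main obstacle is the reduction from ``$\R$ violates serializability'' to ``the ledger's replay diverges from the recorded results'' when all receipts are found to be compatible with the package. The subtle point is that one must use the minimum-index field $m_i$ carried in each request: clients set $m_i$ so that the receipt sequence itself pins down the inter-transaction happens-before order, and the well-formedness check on $\Fr + \Sf$ guarantees that ordering is respected in the ledger. Combined with the checkpoint digest binding (which forces the replay to start from the same state any honest replica would start from), this rules out the escape hatch that the ledger could be ``self-consistent but under a different serialization'' that somehow makes $\R$ serializable. Once that is in place, the rest of the argument is a straightforward composition of Lemmas~\ref{lemma:obtain-complete-ledger} and~\ref{lemma:incompatibility} with the deterministic replay step.
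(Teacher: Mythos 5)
Your overall structure matches the paper's proof: invoke Lemma~\ref{lemma:obtain-complete-ledger} to get a complete package or blame $f+1$ replicas, dispatch incompatible receipts via Lemma~\ref{lemma:incompatibility}, and otherwise replay from the checkpoint and blame the $N-f \geq f+1$ signers of the first diverging batch. The paper organizes the case split by whether $\Fr+\Sf$ is valid or merely well-formed, whereas you split on whether some receipt is incompatible; these are contrapositives of each other and the resulting arguments are the same, so I treat this as the same approach.

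Two caveats. First, there is a small but genuine gap in your replay step: your divergence argument (``the ledger's recorded results cannot coincide with the replay output'') tacitly assumes the checkpoint you load is the one the earliest receipt's signers endorsed. Completeness only forces the digest of $\CP$ to match the checkpoint transaction recorded \emph{in the ledger}; it does not force it to match the digest $d_{C_0}$ carried in the earliest receipt $R_e$. If those two disagree, replay from $\CP$ can be internally consistent with the ledger and find no divergence even though $\R$ violates serializability relative to the state the receipt signers committed to. The paper handles this explicitly before replaying: if $s_{C_0}=0$ and $d_{C_0}$ is not the genesis digest, blame all signers of $R_e$; otherwise, if the digest of $\CP$ recorded at $s_{C_0}+C$ differs from $d_{C_0}$, blame the intersection of the replicas that signed that checkpoint transaction and those that signed $R_e$. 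You need this sub-case for the lemma to close. Second, your identified ``main obstacle'' --- the minimum-index field $m_i$ --- is not actually used in this lemma. Serializability here is pinned down by the explicit indices $i_j$ in the receipts themselves (compatibility forces each $t_j$ to sit at index $i_j$ with output $o_j$), so there is no alternative-serialization escape hatch to rule out; $m_i$ only enters in Theorem~\ref{lemma:linearizability-violations}, where real-time order must be reduced to index order.
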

\begin{proof}
    
    First, the auditor can obtain a ledger package $\langle \Fr, \Sf, \CP, \M\rangle$ that is complete in relation to $\R$; otherwise, it can assign blame to at least $f+1$ misbehaving \slow replicas by Lemma~\ref{lemma:obtain-complete-ledger}. Note that, as the ledger package is complete in relation to $\R$, it is complete in relation to any receipt $R_j\in \R$.
 
    Since the receipts in $\R$ violate serializability, no serial execution of $t_0, ..., t_k$ can produce $io_0, ..., io_k$. $\Fr+\Sf$ is well-formed, so there are two options for its validity:
    
    \mypar{Valid ledger} $\Fr+\Sf$ is a valid ledger, so every transaction in it is ordered and executed serially. However, the receipts in $\R$ violate serializability. Therefore, there must exist at least one receipt $\langle \tio_w, x_w \rangle \in R$ that is incompatible with the batch at $s_w$ in $\Fr+\Sf$. By Lemma~\ref{lemma:incompatibility}, the auditor can assign blame to at least $f+1$ misbehaving replicas.
    
    \mypar{Invalid ledger} $\Fr+\Sf$ is a well-formed but invalid ledger. So there exists at least one transaction $t_w$ (which does not have to be in $\R$) that was executed incorrectly in some batch $s_w$, or one checkpoint that was created incorrectly.
        
    The auditor can order $\R$ as described above. Let $R_{e}$ be the earliest receipt in $\R$. Let $d_{C_0}$ be the checkpoint digest in $R_{e}$. Let $s_{C_0}$ be the sequence number with the expected checkpoint digest $d_{C_0}$, calculated by the auditor using $s_{e}$ and the checkpoint interval $C$ as previously described. If $s_{C_O} = 0$, but the digest in $R_e$ is not equal to the digest in the genesis transaction, the auditor can assign blame to all replicas that signed $R_e$. Otherwise, the ledger package is complete with respect to $R_{e}$, and $\Fr+\Sf$ is thus well-formed, so: (i)~the entry at $s_{C_0}$ in $\Fr+\Sf$ is a checkpoint transaction; and (ii)~the checkpoint transaction in $s_{C_O}+C$ exists as $s_{C_0} <s_{C_O}+C<s_{e}$ and contains the digest of $\CP$. If the digest of $\CP$ in the ledger package is not $d_{C_0}$, the auditor can assign blame to the replicas that signed both the checkpoint transaction at $s_{C_O}+C$ and $R_e$. The digest in that checkpoint transaction is for the previous checkpoint and the batches before the previous checkpoint have already committed since $C>P$.

    Otherwise, the auditor replays the ledger starting from the checkpoint transaction at $s_{C_0}$, creating checkpoints at checkpoint sequence numbers. Doing so, the auditor either obtains $\langle t_w, i_w, o_a \rangle \neq \langle t_w, i_w, o_w \rangle$ or finds that an incorrect checkpoint digest is recorded at $s_w$. In either case, the auditor can assign blame to all replicas that signed for the batch at $s_w$. 
\end{proof}

\begin{theorem}[Linearizability violations]\label{lemma:linearizability-violations}
    Let $\R$ be a set of receipts that violate linearizability. Then, the auditor can assign blame to at least $f+1$ misbehaving \slow replicas.
\end{theorem}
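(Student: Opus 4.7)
The plan is to reduce the theorem to the serializability case handled by Lemma~\ref{lemma:serializability-violations}, using the minimum-index mechanism as the bridge between real-time order and serial order. Since a linearizable execution is a serial execution that additionally respects real-time order, and since \sys{} encodes each real-time dependency observed by a client into the $m_i$ field of every subsequent request it issues, any linearizability violation visible through a collection of receipts must surface either as a violation of the minimum-index field inside an individual receipt, or as a serializability violation of the collection as a whole.

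First I would invoke Lemma~\ref{lemma:obtain-complete-ledger} to obtain a ledger package complete in relation to $\R$; if that step fails, the lemma already delivers blame for at least $f+1$ misbehaving \slow replicas and we are done. Next I would execute the \textsf{auditReceipts} step of \Cref{alg:auditing-with-receipts} and, for every $\langle \tio_j, x_j \rangle \in \R$, verify that $i_j \geq m_j$ where $m_j$ is the minimum index declared inside the client-signed request $t_j$. A correct primary never batches $t_j$ at a ledger index below $m_j$, and a correct backup would refuse to prepare such a batch; so if any receipt fails this check, all of its signers---at least $N - f \geq f+1$ replicas---must be misbehaving, and the auditor assigns blame accordingly.

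If every receipt in $\R$ passes the minimum-index check, I then argue that the residual linearizability violation is automatically a serializability violation, handing the rest of the proof to Lemma~\ref{lemma:serializability-violations}. The reasoning is that the serial order over the transactions in $\R$ determined by their ledger indices satisfies, by the preceding check, every real-time constraint that correct clients have translated into minimum indices; hence any serial execution consistent with $\R$ would also be linearizable with respect to those constraints. By contraposition, if no linearizable execution is consistent with $\R$ then no serial execution is either, so Lemma~\ref{lemma:serializability-violations} applies and yields blame for at least $f+1$ misbehaving \slow replicas.

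The main obstacle I anticipate is justifying the contrapositive step of the third paragraph, i.e., that honoring the minimum-index constraint of each individual receipt is already enough to collapse linearizability into serializability for the entire receipt set. The delicate case is cross-client propagation: if Bob issues a query only after learning from Alice that her deposit has a receipt with index $i_1$, Bob's minimum index must be set to at least $i_1 + 1$, which is exactly what the correct-client definition in this appendix's preamble prescribes. Making this modelling assumption---that correct clients propagate minimum-index dependencies through any receipts they have collected, including those obtained from other clients---fully explicit in the theorem statement would be the cleanest way to ensure the reduction is watertight.
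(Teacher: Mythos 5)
Your proposal is correct and follows essentially the same route as the paper: split into the case where $\R$ also violates serializability (handled by Lemma~\ref{lemma:serializability-violations}) and the case of a pure real-time-order violation, which the minimum-index field converts into a per-receipt check whose failure incriminates all $N-f \geq f+1$ signers of the offending receipt. You merely perform the case analysis in the opposite order (min-index check first, serializability second), and your closing remark about making the cross-client propagation of minimum indices an explicit modelling assumption identifies a gap the paper's own one-paragraph proof also glosses over.
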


\begin{proof}
   If the receipts also violate serializability, the auditor can assign blame to at least $f+1$ misbehaving \slow replicas by Lemma~\ref{lemma:serializability-violations}. 

    Otherwise, since the receipts violate linearizability but not serializability, the ordering of the transactions in $\R$ must violate the real-time ordering of the transactions. So there exists at least two transactions, $t_a$ and $t_b$, in $\R$ such that the receipt for $\tio_a$ was received by the client before $t_b$ was sent, but $i_a \geq i_b$. $t_b$ was sent after $\langle \tio_a, x_a \rangle$ was received, so a correct client sets the minimum index $l$ of $\tio_b$ to at least $i_a+1$. Since $i_b \leq i_a$, the auditor can assign blame to all replicas who have sent the receipt for $\tio_b$.
\end{proof}

\mysection{Correctness of auditing with reconfiguration}
\label{sec:auditing-correctness-proof-reconfig}

In this section, we first summarize how reconfiguration happens, introduce new terminology, and update prior terminology. Then, in Lemma~\ref{lemma:governance-fork}, we prove that, if the auditor detects a fork in governance, it can assign blame to $f+1$ misbehaving replicas. In \S\ref{sec:obtain-complete-ledger-reconfig}, we update the prior discussion on obtaining a complete ledger package. In \S\ref{sec:config-mismatch} and Lemma~\ref{lemma:config-mismatch}, we prove that, if a receipt and the corresponding batch in a ledger package are prepared in different configurations, the auditor can assign blame to $f+1$ misbehaving replicas. In \S\ref{sec:incompatibility-reconfig}, using Lemma~\ref{lemma:config-mismatch}, we update the prior lemma about incompatibility. Finally, \S\ref{sec:violations-reconfig} updates the prior proofs on violations, and in Theorem~\ref{lemma:linearizability-violations-reconfig}, we prove the correctness of auditing in the complete \sys ledger system.

\mypar{Summary of reconfiguration} A correct primary ends the batch it is working on once it executes a governance transaction. Therefore, each batch includes at most one governance transaction and $i_g$ in a receipt refers to the last governance transaction executed before the transaction in the receipt. The final \vote transaction that is necessary to pass a referendum triggers the configuration change. $2P$~\emph{end-of-config} batches follow the final \vote before the configuration change. The governance sub-ledger consists of batches and evidence for all governance transactions. It also includes, for each configuration, the $P^{\mathrm{th}}$ and $2P^{\mathrm{th}}$ \emph{end-of-config} batches, which commit the final \vote transaction that triggers reconfiguration and the $P^{\mathrm{th}}$ \emph{end-of-config} batch respectively. The $P^{\mathrm{th}}$ \emph{end-of-config} batch links to the final \vote transaction, because its \msg{pre-prepare} message includes the Merkle root of the batch that includes the final \vote transaction.

\mypar{Updates to well-formedness and validity} A ledger fragment is \emph{valid} if it can be produced by a sequence of correct primaries in a sequence of configurations where in each configuration there are at most $f$ failures.

In addition to the previous structural specifications, governance changes are serialized and include the required \emph{end-of-config} and \emph{start-of-config} messages.

Note that correct replicas check the validity of the governance sub-ledger fragments that they fetch, so their governance sub-ledgers are valid, in addition to well-formed.

\mypar{Configuration number} The configuration number of a configuration $\C$ is the distance that it is from the configuration at the genesis. The genesis has configuration number~$0$. A configuration that follows the genesis configuration has number $1$ and so on.

\mypar{Supporting governance chain of a receipt} Every receipt~$R$ includes the index of the latest governance transaction. A correct client makes sure that it has a matching chain of valid governance transaction receipts for each receipt that it has. This includes the receipts for all governance transactions from the genesis up to the latest governance transaction, and the receipt for the $P^{\mathrm{th}}$ \emph{end-of-config}  batch for each configuration change. The supporting governance chain of a receipt~$R$ is the sequence of governance-related receipts that starts from the genesis transaction receipt and ends with the $P^{\mathrm{th}}$ \emph{end-of-config} batch receipt before the configuration that signed $R$ takes effect.

A supporting governance chain of a receipt matches a governance sub-ledger if each receipt in the chain is compatible with the governance sub-ledger. (For \emph{end-of-config} batches, compatibility considers committed Merkle roots as well.) Similarly, a supporting governance chain can be a prefix of a governance sub-ledger.

\mypar{Updates to receipt validity} A receipt is \emph{valid} if it is verifiable by \Cref{alg:receipt}, and it is attached a valid supporting governance chain.

\mypar{Updates to calculating checkpoint sequence numbers} If a sequence number that is multiple of the checkpoint interval~$C$ falls into an \emph{end-of-config}/\emph{start-of-config} sequence, checkpointing is skipped. A checkpoint is taken at the beginning of each new configuration, and the digest of the first checkpoint in a configuration is included in the first checkpoint transaction, as opposed to the one that follows (this is similar to genesis).

Let $\langle \tio_j, x_j \rangle$ be a valid receipt and $s_{\mathit{fv}}$ be sequence number of the final \vote transaction for the last configuration change in the supporting governance chain of the receipt. The first checkpoint of the configuration that prepared the receipt is expected at $s_{\mathit{fcp}} = s_{\mathit{fv}} + 2P + 1$. (Except the genesis configuration, for which $s_{\mathit{fcp}} = 0$.)

So given $s_j$, the sequence number $s_{\mathit{cp}}$ of the checkpoint whose digest is in $x_j$ can be calculated with

\begin{equation*}
  s_{\mathit{cp}} = 
  \begin{cases}
    s_{\mathit{fcp}} &\text{if $s_j < s_{\mathit{fcp}}+C$}\\
    C \left( \lceil\frac{s_j-s_{\mathit{fcp}}}{C}\rceil-2 \right) &\text{otherwise.}
  \end{cases}
\end{equation*}

\mypar{Updates to fetching checkpoints} Following a configuration change, a correct new replica fetches the checkpoint at the penultimate checkpoint sequence number $s'$ in the previous configuration (or the first checkpoint sequence number if there is only one). It also retrieves the full ledger. It replays the ledger from $s'$ before creating a checkpoint at the beginning of the configuration.

\mypar{Equivalence of $P^{\mathrm{th}}$ \emph{end-of-config} batches} Two $P^{\mathrm{th}}$ \emph{end-of-config} batches are equivalent if they:

\begin{enumerate}[label=(\roman*)]
    \item are at the same index and sequence number; and
    \item are preceded by the same valid governance sub-ledger (their \msg{pre-prepares} include the same committed Merkle root).
\end{enumerate}

Two receipts for $P^{\mathrm{th}}$ \emph{end-of-config} batches are equivalent if the batches specified in them are equivalent.

\mypar{Governance fork} There is a fork in governance if there is a fork in the governance sub-ledger. That is, there are at least two $P^{\mathrm{th}}$ \emph{end-of-config} batches for the same configuration number that belong in valid governance sub-ledgers, but that are not equivalent.

We say that there is a fork between two valid supporting governance chains if there are receipts for two $P^{\mathrm{th}}$ \emph{end-of-config} batches for the same configuration number that are not equivalent.

We say that there is a fork between a valid supporting governance chain and a valid governance sub-ledger, if for the same configuration number, the $P^{\mathrm{th}}$ \emph{end-of-config} batch specified by the receipt in the chain is not equivalent to the $P^{\mathrm{th}}$ \emph{end-of-config} batch in the sub-ledger.

\begin{lemma}[Governance fork]\label{lemma:governance-fork}
    If there is a fork in governance, the auditor can assign blame to at least $f+1$ misbehaving replicas.
\end{lemma}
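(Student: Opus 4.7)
The plan is to locate the earliest configuration number at which the two valid governance sub-ledgers that witness the fork disagree, and then invoke quorum intersection on the preparement evidence for the two non-equivalent $P^{\mathrm{th}}$ end-of-config batches at that point. First, since both governance sub-ledgers are rooted at the same genesis transaction, there is a least configuration number $k^{*}$ such that the $P^{\mathrm{th}}$ end-of-config batch receipts associated with $k^{*}$ are non-equivalent, while for every smaller configuration number the corresponding batches are equivalent. Equivalence forces the committed Merkle roots to coincide, so a deterministic replay of the governance transactions for all configurations strictly before $k^{*}$ yields identical governance state in both chains; in particular, the set of $N$~replicas (and their signing keys) responsible for the disputed $P^{\mathrm{th}}$ end-of-config batch is the same across the two chains.

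Next, since each of the two non-equivalent $P^{\mathrm{th}}$ end-of-config batches lives inside a valid governance sub-ledger, each carries preparement evidence consisting of $N-f$ signed \msg{pre-prepare}/\msg{prepare} statements drawn from that common replica set. By the standard quorum-intersection argument used throughout PBFT-style proofs, the two signer sets must overlap in at least $N-2f \geq f+1$ replicas. I then argue that every replica in this intersection misbehaved: within a single configuration, a correct replica prepares the $P^{\mathrm{th}}$ end-of-config batch only after deterministically executing the final \vote transaction that triggers reconfiguration, which pins down that batch's index, sequence number, and committed Merkle root. Two non-equivalent batches therefore force each intersecting replica to have signed two contradictory statements for the same view and sequence number, which is observable. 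The auditor packages the two receipts together with the shared prefix of the governance sub-ledger into a uPoM naming these $\geq f+1$ replicas.

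The main obstacle is justifying that the replica set at configuration $k^{*}$ is really identical across the two forks; this is where the committed-Merkle-root clause in the definition of equivalence earns its keep. Concretely, I need to show that matching committed Merkle roots for all prior $P^{\mathrm{th}}$ end-of-config batches imply identical ledger prefixes through the final \vote transaction preceding $k^{*}$, and hence identical configuration state on entry to $k^{*}$. I would discharge this by the same collision-resistance argument for Merkle commitments used implicitly in Lemmas~\ref{lemma:obtain-complete-ledger} and \ref{lemma:incompatibility}, combined with the determinism of stored-procedure execution over the key-value store established in \Cref{proof:lpbft}. Once this is in place, the quorum-intersection step is immediate and the lemma follows.
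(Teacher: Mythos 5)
Your proof is correct and follows essentially the same route as the paper's: both rest on the fact that a correct replica prepares at most one $P^{\mathrm{th}}$ end-of-config batch per configuration number (since it first commits the final vote transaction, which pins down that batch), and both blame the $N-2f \geq f+1$ replicas in the intersection of the two $N-f$ preparer quorums. You are actually more careful than the paper on a point it leaves implicit --- descending to the \emph{earliest} forking configuration number $k^{*}$ so that the two quorums are provably drawn from a common replica set, without which the intersection argument would not go through if the chains had already diverged earlier --- and your only slip is describing the contradiction as two signatures ``for the same view and sequence number,'' which need not hold since non-equivalent batches may sit at different sequence numbers; the contradiction is simply having prepared two non-equivalent $P^{\mathrm{th}}$ end-of-config batches for the same configuration number.
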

\begin{proof}
    If there is a fork in governance, there are at least two $P^{\mathrm{th}}$ \emph{end-of-config} batches for the same configuration number that are not equivalent, namely $P_1$ and $P_2$.

    A correct replica only prepares a $P^{\mathrm{th}}$ \emph{end-of-config} batch at sequence number $s$ once the final \vote transaction that passes the referendum is committed at sequence number $s-P$. Thus, all governance transactions preceding it are committed too. This final \vote transaction triggers the configuration change.

    So the auditor can assign blame to the replicas that prepared both $P_1$ and $P_2$, because a correct replica that prepares one will never prepare another non-equivalent $P^{\mathrm{th}}$ \emph{end-of-config} batch in the same configuration number.
\end{proof}

\mypar{Longest supporting governance chain} Let $\R$ be a set of valid receipts. If there is a fork between the supporting governance chains of the receipts in $\R$, the auditor can assign blame to at least $f+1$ misbehaving replicas by Lemma~\ref{lemma:governance-fork}. So the auditor can always obtain a \emph{longest supporting governance chain} for the receipts in $\R$. This chain 
is the union of all supporting chains for receipts in $\R$.

Onwards, we assume that, given any set of valid receipts, the supporting governance chains are fork-free with each other and that there is a longest supporting governance chain; otherwise, the auditor can assign blame to $f+1$ misbehaving replicas by Lemma~\ref{lemma:governance-fork}.

\mypar{Transaction receipts} Onwards, we assume that a receipt is for a transaction and not for \emph{end-of-config}/\emph{start-of-config} batches. If the receipts for \emph{end-of-config}/\emph{start-of-config} indicate a fork in governance, misbehaving replicas can be blamed using Lemma~\ref{lemma:governance-fork}; otherwise, the \emph{end-of-config}/\emph{start-of-config} batches do not have any usage and do not affect the key-value store, so do not affect linearizability.

\mysubsection{Updates to obtaining the ledger}
\label{sec:obtain-complete-ledger-reconfig}

\mypar{Updated ledger package} A ledger package includes an additional required field:
\begin{itemize}
    \item the committed governance sub-ledger $\GL$ of the replica. 
\end{itemize}

\mypar{Updated definition of completeness} Let $\R$ be a set of valid receipts. Define $s_{\max}, v_{\min},v_{\max}$ as previously. Calculate $s_{\min}$ using the receipt with the smallest configuration number, among those with the smallest sequence number in $\R$. Let $\G_{g\max}$ be the longest supporting governance chain in $\R$.

A ledger package is \emph{complete} in relation to $\R$ if, in addition to the prior conditions about well-formedness, length, and $v_{\min}$/$v_{\max}$ requirements:
\begin{itemize}
    \item $\G_{g\max}$ is a prefix of $\GL$ (i.e. the package is obtained from a replica in a configuration which is equal to or succeeds all configurations in $\R$);
    \item $\GL$ is valid; and
    \item $\GL$ matches $\Fr$.
\end{itemize} 

The condition for the checkpoint $\CP$ is updated as follows:

\begin{itemize}
\item if $s_{\min}$ is calculated as the first checkpoint transaction in a configuration (or zero), the digest of $\CP$ is equal to the one in the checkpoint transaction at $s_{\min}$; otherwise, the digest of $\CP$ is equal to the one in the second checkpoint transaction in $\Fr+\Sf$.
\end{itemize} 

\begin{lemma}[Obtaining a complete ledger package with reconfiguration]\label{lemma:obtain-complete-ledger-reconfig}
    Given a set of valid receipts $\R$, an auditor can either obtain a ledger package that is complete in relation to $\R$, or assign blame to at least $f+1$ misbehaving \slow replicas.
  \end{lemma}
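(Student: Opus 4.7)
\begin{proof-sketch}
The plan is to extend Lemma~\ref{lemma:obtain-complete-ledger} with governance-specific checks, dispatching the new failure modes introduced by reconfiguration via Lemma~\ref{lemma:governance-fork}.

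First I would have the auditor construct the longest supporting governance chain~$\G_{g\max}$ from the receipts in~$\R$; by Lemma~\ref{lemma:governance-fork}, if any two supporting chains fork, blame for at least $f+1$ misbehaving replicas is immediate. Since $R_{v\max}$ (defined as in Lemma~\ref{lemma:obtain-complete-ledger}) has the highest ledger index among receipts in~$\R$, its supporting chain ends at the start of the latest configuration present in~$\R$, so every correct signer of~$R_{v\max}$ is active in that configuration (or later) and therefore maintains a~$\GL$ that extends~$\G_{g\max}$. As in the original proof, the enforcer collects ledger packages from the signers of~$R_{v\max}$ and either returns $f+1$ responses before the deadline, or the auditor can already blame $f+1$ misbehaving \slow replicas.

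For each returned package $\langle \Fr, \Sf, \M, \GL, \CP \rangle$, I would layer three governance checks on top of the original ones: (a)~that~$\GL$ is internally well-formed and valid as a governance sub-ledger---if not, the responder is blamed, since a correct replica never commits an invalid governance entry; (b)~that~$\G_{g\max}$ is a prefix of~$\GL$---any mismatch at a $P^{\mathrm{th}}$ \emph{end-of-config} batch is a governance fork handled by Lemma~\ref{lemma:governance-fork}, and any mismatch at a regular governance transaction lets the auditor blame the intersection of the signers of that transaction in~$\G_{g\max}$ with the signers of the immediately following $P^{\mathrm{th}}$ \emph{end-of-config} batch in~$\GL$ (since a correct replica prepares an \emph{end-of-config} batch only after committing every preceding governance transaction, this intersection contains at least $f+1$ misbehaving replicas); and (c)~that the committed Merkle root embedded in each $P^{\mathrm{th}}$ \emph{end-of-config} batch of~$\GL$ agrees with the corresponding prefix of~$\Fr$---otherwise at least $f+1$ replicas have signed contradictory statements about the ledger state at the configuration boundary and can be blamed.

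Once (a)--(c) pass, the remaining completeness conditions coincide with those of Lemma~\ref{lemma:obtain-complete-ledger}, modulo the updated checkpoint sequence-number formula that skips over \emph{end-of-config}/\emph{start-of-config} regions and the split of the~$\CP$ condition into the first-checkpoint-of-a-configuration case and the generic case. I would replay the original case analysis verbatim---splitting on whether the responder has locally prepared up to~$s_{\max}$, and otherwise returning \vc/\nv messages so that the auditor can blame the intersection of those senders with the signers of~$R_{v\max}$---substituting the new checkpoint calculation wherever $s_{\min}$ or the digest of~$\CP$ is referenced. The main obstacle is cleanly separating governance-fork evidence from view-change evidence when both could appear at the same sequence numbers; running the governance checks (a)--(c) strictly before the ledger-fragment checks ensures that the auditor works with a~$\GL$ that agrees with~$\Fr$, letting the original view-change arguments carry over unchanged.
\end{proof-sketch}
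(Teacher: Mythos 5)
There is a genuine gap, and it sits at the very first step: whom the enforcer asks. You direct the enforcer to the signers of $R_{v\max}$ and justify this by claiming that every correct signer of $R_{v\max}$ ``is active in that configuration (or later)'' and therefore maintains a $\GL$ extending $\G_{g\max}$. Being active in the configuration that \emph{produced} $R_{v\max}$ does not mean being active \emph{now}: that configuration may have been superseded by reconfigurations invisible to the auditor, since the receipts in $\R$ only reveal governance history up to the configurations that signed them. Replicas removed by a later referendum retire and delete their signing keys; if the enforcer demands a ledger package from them and blames them for silence, it punishes correct replicas, violating the accountability guarantee. The paper avoids this by explicitly assuming a reliable mechanism to look up the \emph{most recent} committed governance sub-ledger, comparing it against $\G_{g\max}$ (a fork yields blame via Lemma~\ref{lemma:governance-fork}), and, when the looked-up sub-ledger is strictly longer, directing the enforcer to the replicas that signed the first checkpoint transaction of the most recent configuration rather than to any receipt signers. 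Your proposal contains no analogue of this lookup, and without it the choice of respondents cannot be made safely.

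This mis-step propagates into the case analysis. You propose to ``replay the original case analysis verbatim,'' but once the respondents are drawn from a configuration that succeeds the one that produced $R_{v\max}$ (or $R_{s\max}$), a \emph{correct} respondent may be unable to satisfy the $v_{\max}$ requirement (it never held \vc and \nv messages for view $v_{\max}$ of the old configuration) or to supply prepared entries up to $s_{\max}$ (its configuration began before $s_{\max}$ was reached in the old one). Neither failure is covered by the original argument; the paper introduces new blame assignments here, e.g., intersecting the signers of $R_{v\max}$ with the preparers of the $P^{\mathrm{th}}$ \emph{end-of-config} batch $P_{v\max+1}$ that ends that receipt's configuration, using the facts that preparing this batch commits the final \vote transaction and that correct replicas never roll back their ledger by more than $P$ sequence numbers. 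These end-of-config-based blame arguments are the substantive new content of the lemma and are absent from your sketch. Your governance checks (a)--(c) are sound as far as they go and roughly match the updated completeness conditions, but they address only the easier half of the problem.
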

  
\begin{proof}
    As mentioned before, we assume that there is no fork between the supporting governance chains of the receipts in $\R$. Let $R_{g\max}$ be the receipt with the highest index number, among those with the highest sequence number, among those with the highest view number, among those with the longest supporting governance chain in $\R$. Let $\G_{g\max}$ be the supporting governance chain of $R_{g\max}$.

    We assume that there is a reliable mechanism to look up the most recent system configuration. Using this mechanism, the auditor looks up the most recent committed governance sub-ledger and the set of replicas that signed the first checkpoint transaction of the most recent configuration. If there is a fork between $\G_{g\max}$ and the governance sub-ledger that is looked-up, the auditor can assign blame to at least $f+1$ misbehaving replicas by Lemma~\ref{lemma:governance-fork}; otherwise, the auditor checks whether the sub-ledger that is looked up is longer than $\G_{g\max}$. If so, the enforcer asks all the replicas that signed the first checkpoint transaction of the most recent configuration for a ledger package; otherwise, the replicas that have signed $R_{g\max}$ are asked.
        
    As in Lemma~\ref{lemma:obtain-complete-ledger}, the enforcer asks replicas for a ledger package that is complete in relation to $\R$. At the deadline, the enforcer relays the responses to the auditor. There are at least $f+1$ responses, or the enforcer can assign blame to $f+1$ misbehaving \slow replicas. 

    As before, we show that a correct replica can either respond with: a ledger package that is complete in relation to $\R$, or a ledger package with which the auditor can assign blame to $f+1$ misbehaving replicas.

    First, note that a correct replica that is contacted by the enforcer can always satisfy the updated  completeness conditions (related to $\GL$), because the replica is part of the most recent configuration and the conditions all pertain to keeping a valid governance sub-ledger. Of the conditions described previously, the well-formedness and $v_{\min}$ conditions can be satisfied, and invalid signatures in the package can be handled, just as in Lemma~\ref{lemma:obtain-complete-ledger}. Since the replicas that are asked are not necessarily the replicas that signed the receipt with the highest view in $\R$, it is possible that they cannot satisfy the $v_{\max}$ requirement even if they are correct. 

    So, for a correct replica that is contacted by the enforcer one of the following must hold:

    \begin{itemize}
        \item \textbf{The replica cannot satisfy the $v_{\max}$ requirement:}
        Let $R_{v\max}$ be the latest receipt when the receipts are ordered lexycographically by 
        (view number, configuration number, sequence number, index number). Let $\G_{v\max}$ be the supporting governance chain of $R_{v\max}$. If there is a fork between $\G_{v\max}$ and the committed sub-ledger $\GL$ of the replica, the replica can return its governance sub-ledger and the auditor can assign blame to at least $f+1$ misbehaving replicas by Lemma~\ref{lemma:governance-fork}. Otherwise, $\G_{v\max}$ must be a prefix of $\GL$ since the enforcer asks replicas from the most recent configuration. There are two possibilities for the relationship between $\G_{v\max}$ and $\GL$:

        \begin{enumerate}
            \item $\GL = \G_{v\max}$. So $R_{g\max} = R_{v\max}$.Therefore, the correct replica signed $R_{v\max}$. 
            Any correct replica that signed $R_{v\max}$ has the \vc and \nv messages for $v_{\max}$, so this case is a contradiction.
            \item $\GL$ is longer than $\G_{v\max}$. Let $P_{v\max+1}$ be the $P^{\mathrm{th}}$ \emph{end-of-config} batch that ends $R_{v\max}$'s configuration $C$. Since the replica is correct and cannot satisfy the $v_{\max}$ requirement, $P_{v\max+1}$ must be prepared in a view $<v_{\max}$. Any correct replica that prepared $P_{v\max+1}$ must have committed a final \vote transaction that triggers the configuration change in their ledger in a view less than $v_{\max}$. Since correct replicas never reset their ledger by more than $P$ sequence numbers, they do not pre-prepare any batch with view $v_{\max}$ in $C$.  So, the auditor can assign blame to the intersection of replicas that signed $R_{v\max}$ and prepared $P_{v\max + 1}$.
        \end{enumerate}

        \item \textbf{The replica can satisfy the $v_{\max}$ requirement:}
        If, additionally, the replica has prepared (or pre-prepared with view changes) batches up to at least $s_{\max}$, it can return a ledger package that is complete in relation to $\R$ just as in Lemma~\ref{lemma:obtain-complete-ledger}.

        Otherwise, let $R_{s\max}$ be the receipt with the largest sequence number $s_{\max}$. Let $\G_{s\max}$ be the supporting governance chain of $R_{s\max}$. If there is a fork between $\G_{s\max}$ and the replica's $\GL$, the replica can return $\GL$ and the auditor can assign blame to at least $f+1$ misbehaving replicas by Lemma~\ref{lemma:governance-fork}. Otherwise, $\G_{s\max}$ must be a prefix of $\GL$ since the replicas asked by the enforcer are from the most recent configuration. Again, there are two possibilities:
        \begin{enumerate}
            \item \textbf{$\GL$ is longer than $\G_{s\max}$:} Let $P_{s\max+1}$ be the $P^{\mathrm{th}}$ \emph{end-of-config} batch that ends $R_{s\max}$'s configuration. Since the replica is correct and cannot satisfy the $s_{\max}$ requirement, $P_{s\max+1}$ must be prepared at a sequence number less than $s_{\max}$. Any correct replica that prepared $P_{s\max+1}$ must have committed a final \vote transaction that triggers the configuration change at latest at sequence number $s_{\max} - (P+1)$. Since a correct replica never resets its ledger by more than $P$ sequence numbers, the auditor can assign blame to the replicas that signed both $R_{s\max}$ and prepared $P_{s\max + 1}$.
            
            \item \textbf{$\GL = \G_{s\max}$:} The group of replicas asked by the enforcer are from the same configuration that signed $R_{s\max}$, which is the most recent configuration. 
            Since the replica is correct and from the most recent configuration $v_{s\max} \leq v_{\max}$ by definition. In $\Fr$, as before, there must exist at least one set of \vc and \nv messages for a view $v' > v_{s\max}$ such that none of the \vc messages includes a \msg{pre-prepare} for any batch at $s_{\max}$. Note that the configuration of the replicas that have sent these \vc messages must be the same as the configuration that signed the receipt, as that is the most recent configuration in the system. So just as in Lemma~\ref{lemma:obtain-complete-ledger}, the auditor can assign blame to the replicas that signed both $R_{s\max}$ and that sent the set of \vc messages for $v'$.
        \end{enumerate}
    \end{itemize} 
    
    So, for each of the $f+1$ responses, either the response is complete in relation to $\R$, or the auditor can assign blame to the responder, or at least $f+1$ misbehaving replicas.
\end{proof}

\mysubsection{Mismatching configurations}
\label{sec:config-mismatch}

\begin{lemma}[Receipt-ledger configuration mismatch]\label{lemma:config-mismatch}
    Let $R = \langle \tio_r, x_r \rangle$ be a valid receipt that was produced in a configuration $\C_r$. Let $B_l$ be the batch that is at $s_r$ in a ledger package that is complete in relation to $R$. Let $\C_l$ be the configuration of the replicas that signed $B_l$. If $\C_r \neq \C_l$, the auditor can assign blame to at least $f+1$ misbehaving replicas.
\end{lemma}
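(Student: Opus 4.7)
The plan is to consult the governance sub-ledger $\GL$ included in the package to pin down the sequence-number ranges of the two configurations, and in each case exhibit at least $f+1$ misbehaving replicas via either a standard $N-2f$ intersection argument or a direct count.

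First I would observe that, by completeness, $\G_R$ is a prefix of $\GL$; otherwise there is a fork between $\G_R$ and $\GL$ and blame is assigned by Lemma~\ref{lemma:governance-fork}. Since $\GL$ is valid, each configuration $\C_c$ it lists has a well-defined range $[s^{\text{start}}_c, s^{\text{end}}_c]$, delimited on both ends by recorded $P^{\mathrm{th}}$ end-of-config batches, each prepared by $N-f$ replicas of the respective configuration. I would rely on the fact that a correct replica of $\C_c$ never prepares a batch outside this range: the upper bound holds because preparing the $P^{\mathrm{th}}$ end-of-config batch that retires $\C_c$ commits the final \vote transaction and triggers reconfiguration; the lower bound holds because replicas installed at the start-of-config checkpoint do not participate in \LPBFT until that checkpoint is in place.

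I would then case-split on the position of $\C_l$ relative to $\C_r$ in $\GL$. If $\C_l$ precedes $\C_r$, then $s_r \geq s^{\text{start}}_r > s^{\text{end}}_l$, and I would intersect the $N-f$ preparers of the $P^{\mathrm{th}}$ end-of-config batch $B^{P}_l$ recorded in $\GL$ with the $N-f$ signers of $B_l$; at least $f+1$ replicas lie in both, and each of them committed to $\C_l$'s retirement at $B^{P}_l$ and yet prepared a $\C_l$-batch afterward. If instead $\C_l$ succeeds $\C_r$ in $\GL$, I would split further: when $s_r > s^{\text{end}}_r$, the symmetric intersection of the $N-f$ signers of $R$ with the $N-f$ preparers of the $P^{\mathrm{th}}$ end-of-config batch $B^{P}_r$ for $\C_r$ (guaranteed to lie in $\GL$ because $\C_l$ is later) yields $\geq f+1$ replicas of $\C_r$ that both voted to retire $\C_r$ and signed $R$ after $\C_r$ ended; when instead $s_r \leq s^{\text{end}}_r$, all $N-f \geq f+1$ signers of $B_l$ are themselves misbehaving, because $\C_l$'s range has not yet begun at $s_r$.

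The main obstacle I anticipate is not the intersection arithmetic but ensuring that the auditor genuinely has the evidence it needs: the relevant $P^{\mathrm{th}}$ end-of-config batch and its preparement evidence must reside in the package, and the two retirement statements I want to expose as contradictory must really be attributable to the same replicas. Completeness of the package in relation to $\R$, validity of $\GL$, and the already-proved Lemma~\ref{lemma:obtain-complete-ledger-reconfig} are precisely what make these objects available, so I expect the argument to go through cleanly once those conditions are invoked.
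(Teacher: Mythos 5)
Your second main case ($\C_l$ succeeds $\C_r$, sub-case $s_r > s^{\text{end}}_r$) matches the paper's argument for $\C_r$ preceding $\C_l$: intersect the $N-f$ signers of $R$ with the $N-f$ preparers of the $P^{\mathrm{th}}$ end-of-config batch that retires $\C_r$, which the well-formed fragment forces to sit at sequence number at most $s_r-(P+1)$. You should still spell out why a replica in that intersection is provably faulty -- the paper's point is that preparing that batch commits the final \vote transaction at sequence number at most $s_r-(2P+1)$, and a correct replica never rolls its ledger back by more than $P$ even across view changes, so it cannot also prepare a $\C_r$-batch at $s_r$ -- but the structure is right. Your sub-case $s_r \le s^{\text{end}}_r$ is vacuous for the same reason ($B_l$ at $s_r$ being a $\C_l$-batch in a well-formed fragment already forces $s^{\text{end}}_r < s_r$), so including it is harmless but signals the real problem below.

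The genuine gap is your first case, $\C_l$ precedes $\C_r$. You derive $s_r > s^{\text{end}}_l$ from the validity of $R$ and the position of $\C_l$ in $\GL$, and then propose to blame the replicas that both prepared $B^{P}_l$ (at $s^{\text{end}}_l$) and signed $B_l$ (at $s_r$), on the grounds that they prepared a $\C_l$-batch after retiring $\C_l$. But these two pieces of evidence cannot coexist in a package that is \emph{complete} in relation to $R$: completeness requires $\Fr+\Sf$ to be well-formed and $\GL$ to match $\Fr$, and a well-formed fragment matching $\GL$ places the $P^{\mathrm{th}}$ end-of-config batch retiring $\C_l$ \emph{after} any $\C_l$-prepared batch -- at earliest at $s_r+P$, not at some $s^{\text{end}}_l < s_r$. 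So the contradiction you want the uPoM to exhibit is never witnessed by the package; your premises in this branch are jointly inconsistent with the lemma's hypotheses. The paper resolves this branch differently: it shows the case is \emph{impossible} -- since $\GL$ (fork-free with $R$'s supporting governance chain) must place the end-of-config batch leading to $\C_r$ at $s_r+P$ or later, no valid receipt produced in $\C_r$ can exist at sequence number $s_r$, contradicting the assumed validity of $R$. You need to replace your blame construction in this branch with that impossibility argument (or explicitly observe that the branch is vacuous under the completeness and validity hypotheses); as written, the auditor following your recipe would find no contradictory signatures to put in the uPoM.
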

\begin{proof}
    Since $R$ is a valid receipt, it has a valid supporting governance chain. Since the ledger package is complete, it includes a valid governance sub-ledger $\GL$ that leads to $\C_l$, which is fork-free with the supporting governance chain of $R$.

    One of the following must hold:
    \begin{itemize}
        \item \textbf{$\C_r < \C_l$: $\C_r$ precedes $\C_l$:} Let $P_{r+1}$ be the $P^{\mathrm{th}}$ \emph{end-of-config} batch that ends the configuration $\C_r$. This batch and its evidence is included in $\GL$. Since the package is complete, $\GL$ is consistent with the ledger fragment in the package. Since that ledger fragment is well-formed and $B_l$ is at $s_r$, $P_{r+1}$ is at the latest at sequence number $s_r-(P+1)$. Any replica that prepared $P_{r+1}$ must have committed a final \vote transaction that triggers the configuration change at the latest at sequence number $s_r-(2P+1)$. A correct replica that has prepared a batch at $s_r$ in $\C_r$ never resets its ledger to earlier than $s_r-P$ even with view changes. So the auditor can assign blame to the replicas that both prepared $P_{r+1}$ and signed $R$.
        
        \item \textbf{$\C_r > \C_l$: $\C_r$ succeeds $\C_l$:} We show that this case is impossible given that $R$ is valid, and there is no fork between its supporting governance chain and $\GL$. Since the ledger package is complete in relation to $R$, $\GL$ includes the $P^{\mathrm{th}}$ \emph{end-of-config} batch leading to $\C_r$ and it matches the well-formed ledger fragment in the package. Since $B_l$ is at $s_r$, that batch can at earliest be at sequence number $s_r+P$. So there cannot be a valid receipt produced in $\C_r$ at $s_r$.
    \end{itemize}
\end{proof}

\mysubsection{Updates to incompatibility}
\label{sec:incompatibility-reconfig}

\begin{lemma}[Receipt-ledger incompatibility with reconfiguration]\label{lemma:incompatibility-reconfig}
    Let $R=\langle \tio_r,x_r \rangle$ be a valid transaction receipt at sequence number $s_r$. Let $\langle \Fr, \Sf, \M, \CP, \GL \rangle$ be a ledger package that is complete in relation to $R$. Let $B_l$ be the batch in the package at $s_r$. If $R$ is incompatible with $B_l$, the auditor can assign blame to at least $f+1$ misbehaving replicas.
\end{lemma}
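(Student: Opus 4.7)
\begin{proof-sketch}
The plan is to reduce the statement to the no-reconfiguration setting (Lemma \ref{lemma:incompatibility}) by first peeling off any configuration mismatch between the receipt $R$ and the batch $B_l$ in the ledger package.

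First, I would let $\C_r$ denote the configuration that produced $R$ and $\C_l$ the configuration of the replicas that signed $B_l$ at sequence number $s_r$ in $\Fr+\Sf$. Because the ledger package is complete in relation to $R$, the supporting governance chain of $R$ is a prefix of (or fork-free with) the governance sub-ledger $\GL$, so $\C_r$ and $\C_l$ are linearly ordered. If $\C_r \neq \C_l$, I invoke Lemma \ref{lemma:config-mismatch} directly to obtain $f+1$ blamed replicas and we are done.

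Otherwise $\C_r = \C_l$, so the set of replicas and the value of $N$ (and hence $f$) that authenticated $R$ and $B_l$ coincide. At this point the argument is essentially the one in Lemma \ref{lemma:incompatibility}: I case-split on the view number $v_r$ in the receipt versus the view number $v_l$ in the preparement evidence of $B_l$ (which the auditor can extract from $\Fr$, $\Sf$, or $\M$ as described in the ``finding preparement evidence'' paragraph). In the case $v_r = v_l$, two different batches cannot be signed in the same view by a correct replica, so the intersection of the signers of $R$ and the signers of $B_l$ has size at least $f+1$ and all are misbehaving. In the cases $v_l > v_r$ and $v_l < v_r$, I use the $v_{\min}$ and $v_{\max}$ requirements of completeness together with the fact that correct primaries re-propose the latest prepared \msg{pre-prepare} from the $N-f$ \vc messages they collect; this produces a view $v_c$ for which $\Fr$ contains \vc messages from $N-f$ replicas that fail to report a \msg{pre-prepare} at $s_r$, and whose intersection with the signers of $R$ has size at least $f+1$.

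The main obstacle is making sure the view-change argument used inside a single configuration $\C_r = \C_l$ is not disturbed by reconfiguration boundaries: specifically, I need the $v_c$ that I pick to lie within the same configuration, and the \vc/\nv messages referenced in completeness to be from the same replica set. This follows because in a fixed configuration the \LPBFT view-change machinery is exactly the one analyzed in Lemma \ref{lemma:incompatibility}, and the governance-chain consistency guaranteed by completeness prevents \vc messages from a different configuration from being mistakenly used. Once that is confirmed, the rest of the argument is a direct transcription of Lemma \ref{lemma:incompatibility}.
\end{proof-sketch}
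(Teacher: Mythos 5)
Your reduction to Lemma~\ref{lemma:config-mismatch} for the case $\C_r \neq \C_l$ is exactly the paper's first step, and the case $v_r = v_l$ does go through verbatim. The gap is in your handling of the cross-configuration issue for the view-change cases. You correctly identify the obstacle --- that the view $v_c$ in which the batch at $s_r$ fails to appear in the $N-f$ \vc messages might not lie within the configuration $\C = \C_r = \C_l$ --- but then assert that ``governance-chain consistency guaranteed by completeness prevents \vc messages from a different configuration from being mistakenly used.'' That assertion is false, and it is precisely the point where the paper has to do new work. Even with a fork-free governance chain and $\C_r=\C_l$, the ledger may legitimately place the view changes for $v_c \in [v_r+1, v_l]$ (resp.\ $[v_l+1, v_r]$) in a preceding (resp.\ succeeding) configuration: nothing in the $v_{\min}$/$v_{\max}$ completeness requirements forces the replicas in $\E_c$ to be drawn from $\C$'s replica set, so the intersection $\E_r \cap \E_c$ can be too small (or empty) and the blame count of $f+1$ is lost.

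The paper closes this by adding a sub-case to each view comparison. For $v_l > v_r$ with $\E_c$ from a preceding configuration, it observes that the first checkpoint transaction of $\C$ sits at sequence number at most $s_r-(P+1)$ (by well-formedness of $\Fr+\Sf$) and is prepared in a view $v' > v_r$; since a correct replica that signed $R$ at $s_r$ in view $v_r$ never resets its ledger by more than $P$ sequence numbers during a view change, the auditor blames the intersection of the signers of that checkpoint transaction and the signers of $R$ --- both $N-f$ subsets of $\C$'s replicas, hence intersecting in at least $f+1$. Symmetrically, for $v_l < v_r$ with $\E_c$ from a succeeding configuration, blame goes to the intersection of the signers of $R$ with the replicas that prepared the $P^{\mathrm{th}}$ \emph{end-of-config} batch ending $\C$, which must lie at sequence number at least $s_r+P$ and in a view $v' < v_r$. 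You would need to supply these two arguments (or something equivalent resting on the bounded-rollback invariant) for the proof to be complete.
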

\begin{proof}
    Define $\E_l, \E_r, v_l, v_r$ as in Lemma~\ref{lemma:incompatibility}. Note that we can assume that both the receipt and $B_l$ are prepared by the same configuration $\C$; if not, the auditor can assign blame to $f+1$ misbehaving replicas by Lemma~\ref{lemma:config-mismatch}.

    \begin{itemize}  
        \item \textbf{$v_r = v_l$:} Same as Lemma~\ref{lemma:incompatibility}.
        
        \item \textbf{$v_l > v_r$:} Calculate $\E_{c}$ as described in Lemma~\ref{lemma:incompatibility}. If the replicas in $\E_c$ are also from the configuration $\C$, the auditor can assign blame just as in Lemma~\ref{lemma:incompatibility}; otherwise, if the replicas in $\E_c$ are from a preceding configuration, the first checkpoint transaction of $\C$ is at the latest at sequence number $s_r-(P+1)$ since $B_l$ is prepared by $\C$ and $\Fr+\Sf$ is well-formed. Furthermore, that checkpoint transaction is prepared in a view $v' > v_r$. A correct replica never signs the receipt at $s_r$ in a view $v_r$ and then resets its ledger by more than $P$ sequence numbers while view changing to $v'$. So, the auditor can assign blame to the replicas that signed both that checkpoint transaction and the receipt. 
        
        \item \textbf{$v_l < v_r$:}
        Calculate $\E_{c}$ as described in Lemma~\ref{lemma:incompatibility}.
        If the replicas in $\E_c$ are also from the configuration $\C$, the auditor can assign blame just as in Lemma~\ref{lemma:incompatibility}; otherwise the replicas in $\E_c$ are from a configuration that succeeds $\C$. In this case, the $P^{\mathrm{th}}$ \emph{end-of-config} batch that ends the configuration $\C$ is at the earliest at sequence number $s_r+P$, since $B_l$ is prepared by $\C$ and $\Fr+\Sf$ is well-formed. Furthermore, that batch is prepared in a view $v' < v_r$. A correct replica that prepares that $P^{\mathrm{th}}$ \emph{end-of-config} batch commits to the configuration change; it never resets its ledger to earlier than $s_r$ and signs $R$. So, the auditor can assign blame to the replicas that signed both that \emph{end-of-config} batch and the receipt.
    \end{itemize}
\end{proof}

\mysubsection{Updates to violations}
\label{sec:violations-reconfig}

\begin{lemma}[Serializability violations with reconfiguration]\label{lemma:serializability-violations-reconfig}
    Let $\R = \{(\tio_0, x_0), ..., (\tio_k, x_k))\}$ be a set of receipts that violates serializability. Then, the auditor can assign blame to at least $f+1$ misbehaving \slow replicas.
\end{lemma}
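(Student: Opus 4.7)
The plan is to follow essentially the same outline as the proof of Lemma~\ref{lemma:serializability-violations}, but plug in the reconfiguration-aware versions of the supporting lemmas and handle the extra failure mode of a governance fork up front. First, from $\R$ I would extract (or construct, via the longest-supporting-chain argument already justified before Lemma~\ref{lemma:obtain-complete-ledger-reconfig}) a unified view of the governance sub-ledger; if the chains attached to the receipts disagree, Lemma~\ref{lemma:governance-fork} already yields $f{+}1$ misbehaving replicas and we are done. Otherwise, invoke Lemma~\ref{lemma:obtain-complete-ledger-reconfig} to obtain a ledger package $\langle \Fr, \Sf, \M, \CP, \GL\rangle$ that is complete in relation to $\R$; if the auditor/enforcer fail to obtain one, the lemma itself produces the required blame on at least $f{+}1$ misbehaving or slow replicas.

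Next, I would split on whether $\Fr + \Sf$ is a valid ledger or merely well-formed but invalid, exactly as in Lemma~\ref{lemma:serializability-violations}. In the valid case, a serial execution of all ordered transactions is fixed by the ledger, and since $\R$ violates serializability at least one receipt $\langle \tio_w, x_w\rangle \in \R$ must be incompatible with the batch at $s_w$ in $\Fr+\Sf$; Lemma~\ref{lemma:incompatibility-reconfig} then assigns blame to $f{+}1$ misbehaving replicas, with the configuration-mismatch subcase absorbed by Lemma~\ref{lemma:config-mismatch}.

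In the invalid-but-well-formed case, there must be a transaction that was executed incorrectly or a checkpoint with an incorrect digest somewhere in the fragment. As in the original proof, I order $\R$ lexicographically, pick the earliest receipt $R_e$, and compute $s_{C_0}$ from $s_e$ using the updated checkpoint-sequence formula that accounts for the $s_{\mathit{fv}} + 2P + 1$ offset at the start of each configuration (taken from the governance chain of $R_e$). The auditor then verifies that $\CP$ matches the digest recorded either at $s_{C_0}$ (if $s_{C_0}$ is the first checkpoint transaction of its configuration or genesis) or at $s_{C_0} + C$; if not, blame the intersection of replicas signing both that checkpoint transaction and $R_e$. Otherwise the auditor replays from $\CP$, skipping checkpointing inside end/start-of-config runs just like a correct replica, until it finds either a mismatched transaction result or a bad checkpoint digest at some $s_w$, and assigns blame to all replicas that signed the batch at $s_w$.

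The main obstacle I anticipate is the bookkeeping around configuration boundaries in the replay step: the formula for $s_{C_0}$, the rule that checkpoints are skipped inside end/start-of-config sequences, and the fact that the first checkpoint of a new configuration stores its own digest (rather than the previous one) all have to line up with what correct replicas actually write, so that the auditor never misidentifies the checkpoint it should compare against $\CP$. Careful use of the well-formedness guarantee on $\Fr + \Sf$ together with the fact that $\GL$ matches $\Fr$ (from completeness) should make every relevant boundary unambiguous, after which the rest of the argument is a direct port of the non-reconfiguration proof.
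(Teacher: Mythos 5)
Your proposal follows essentially the same route as the paper's proof: handle governance forks via Lemma~\ref{lemma:governance-fork}, obtain a complete ledger package via Lemma~\ref{lemma:obtain-complete-ledger-reconfig}, split on ledger validity, apply Lemma~\ref{lemma:incompatibility-reconfig} in the valid case, and recompute the configuration-aware checkpoint sequence number and replay in the invalid case. The only detail the paper makes explicit that you leave implicit is invoking Lemma~\ref{lemma:config-mismatch} on the earliest receipt $R_e$ in the invalid-ledger branch, which is what guarantees that the checkpoint transactions at $s_{C_0}$ and $s_{C_0}+C$ are signed by the same configuration as $R_e$, so that the blamed intersection really contains $f+1$ replicas.
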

\begin{proof}
    First, the auditor can obtain a ledger package $\langle \Fr, \Sf, \CP, \M, \GL \rangle$ that is complete in relation to $\R$; otherwise, \sys can assign blame to at least $f+1$ misbehaving \slow replicas by Lemma~\ref{lemma:obtain-complete-ledger-reconfig}.
    
    Just as in Lemma~\ref{lemma:serializability-violations}, since the receipts in $\R$ violate serializability, no serial execution of $t_0, ..., t_k$ can produce $io_0, ..., io_k$. $\Fr$ is well-formed, so there are two options for its validity:
    
    \mypar{Valid ledger} Similar to Lemma~\ref{lemma:serializability-violations}. By Lemma~\ref{lemma:incompatibility-reconfig}, the auditor can assign blame to at least $f+1$ misbehaving replicas.
     
    \mypar{Invalid ledger} Assume that receipts are ordered lexicographically based on the corresponding (sequence number, configuration number, index number, view number) tuples. (We can assume that there is no tie; otherwise the auditor can assign blame to the replicas that signed both tied receipts.)
    
    Let $R_{e}$ be the earliest receipt in the ordered $\R$. Let $d_{C_0}$ be the digest in $R_{e}$. Let $s_{C_0}$ be the sequence number with the expected checkpoint digest $d_{C_0}$. $s_{C_0}$ can be calculated by the auditor using $s_{e}$, the checkpoint interval $C$, and the supporting governance chain. (Note that $s_{C_0}$ is equal to $s_{\min}$ that is calculated while obtaining the ledger.)
    
    We can assume that the batch at $s_e$ is prepared by the same configuration that sent the receipt; otherwise the auditor can assign blame to $f+1$ misbehaving replicas by Lemma~\ref{lemma:config-mismatch}. We also know that the supporting governance chain of $R_e$ matches $\Fr+\Sf$ and that $\Fr+\Sf$ is well-formed. So, the checkpoint transactions at $s_{C_0}$ (and $s_{C_0} + C$ if it exists) are prepared by the same configuration as $R_e$ by definition of $s_{C_0}$. So, if the digest at $s_{C_0}$ is not $d_{C_0}$, the auditor can assign blame to $f+1$ misbehaving replicas similar to Lemma~\ref{lemma:serializability-violations}. 

    Since the supporting governance chains of all receipts match the ledger fragment by definition of completeness, the auditor can determine the correct stored procedures for each transaction to replay the ledger as in Lemma~\ref{lemma:serializability-violations}.
\end{proof}

\begin{theorem}[Linearizability violations with reconfiguration]\label{lemma:linearizability-violations-reconfig}
    Let $\R$ be a set of receipts that violate linearizability. Then, the auditor can assign blame to at least $f+1$ misbehaving \slow replicas.
\end{theorem}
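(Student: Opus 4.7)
The plan is to mirror the structure of Theorem~\ref{lemma:linearizability-violations} (the pre-reconfiguration version) while substituting in the reconfiguration-aware result of Lemma~\ref{lemma:serializability-violations-reconfig}. First I would split into two cases based on whether $\R$ also violates serializability or merely the real-time order of a serializable schedule.

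In the first case, the set $\R$ violates serializability, and Lemma~\ref{lemma:serializability-violations-reconfig} directly yields blame for $f+1$ misbehaving or slow replicas. This delegation is the whole point of having already folded governance forks (Lemma~\ref{lemma:governance-fork}), configuration mismatches between a receipt and the batch at its sequence number (Lemma~\ref{lemma:config-mismatch}), and complete-ledger acquisition under dynamic membership (Lemma~\ref{lemma:obtain-complete-ledger-reconfig}) into that result, so Case~1 requires no new argument.

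In the second case, $\R$ is serializable but violates real-time ordering, so there exist two receipts $\langle \tio_a, x_a\rangle, \langle \tio_b, x_b\rangle \in \R$ such that the client received $\langle \tio_a, x_a\rangle$ before issuing $t_b$, yet the serialized order gives $i_b \le i_a$. A correct client therefore must have set the minimum-index field of $t_b$ to at least $i_a + 1$, and a correct replica never orders a request at an index below its minimum. Hence every replica whose signature appears in the receipt for $\tio_b$ has misbehaved, and since the receipt is valid there are at least $N-f \ge f+1$ of them. This is verbatim the argument of Theorem~\ref{lemma:linearizability-violations}; the only new responsibility is to verify that validity of $\langle \tio_b, x_b\rangle$ under dynamic membership still identifies a concrete set of signing keys that can be blamed.

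The main obstacle I anticipate is precisely this last point: confirming that the minimum-index argument carries over unchanged when the signing set is determined by a governance sub-ledger rather than a fixed configuration. I would resolve it by observing that the supporting governance chain attached to $\langle \tio_b, x_b\rangle$ (required for receipt validity in the reconfiguration setting) fixes the configuration $\C_b$ that signed the receipt, hence fixes the concrete replicas whose public keys verify its signatures. Thus the blame targets are well-defined and, by Lemma~\ref{lemma:governance-fork}, unambiguous across the receipts in $\R$ up to already-established blame. With this observation in place, the two cases combine to complete the theorem.
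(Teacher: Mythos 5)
Your proposal is correct and follows essentially the same two-case decomposition as the paper's proof: delegate serializability violations to Lemma~\ref{lemma:serializability-violations-reconfig}, and otherwise reuse the minimum-index argument from Theorem~\ref{lemma:linearizability-violations}. Your extra observation that the supporting governance chain pins down the configuration (and hence the blameworthy signers) of the receipt for $\tio_b$ is a reasonable elaboration of a point the paper leaves implicit, but it does not change the argument.
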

\begin{proof}
    If the receipts also violate serializability, the auditor can assign blame to at least $f+1$ misbehaving \slow replicas by Lemma~\ref{lemma:serializability-violations-reconfig}; otherwise, the minimum ledger index argument in the proof of Theorem~\ref{lemma:linearizability-violations} holds.
\end{proof}


\end{document}